\def\BibTeX{{\rm B\kern-.05em{\sc i\kern-.025em b}\kern-.08em
    T\kern-.1667em\lower.7ex\hbox{E}\kern-.125emX}}
\setlist[description]{leftmargin=0.05cm,labelindent=0.275cm}
\newtheorem{theorem}{Theorem}
\newtheorem{lemma}{Lemma}
\newtheorem{proposition}{Proposition}
\newcommand{\Vard}[2]{{\normalfont \text{Var}}_{#2} \left[ #1 \right]}
\newcommand{\ind}[1]{\mathds{1} \left[ #1 \right] }
\newcommand\numberthis{\addtocounter{equation}{1}\tag{\theequation}}
\DeclareMathOperator*{\argmin}{arg\,min}
\DeclareMathOperator*{\argmax}{arg\,max}
\newcommand{\Q}{\mathbb{Q}}
\newcommand{\R}{\mathbb{R}}
\newcommand{\calS}{\mathcal{S}}
\newcommand{\bigO}{\mathcal{O}}
\newcommand{\sample}{\mathcal{S}}
\newcommand{\E}{\mathbb{E}}
\newcommand{\samplalg}{\textsc{ImportanceSampler}}
\newcommand{\percis}{\textsc{PercIS}}
\newcommand{\unif}{\textsc{UNIF}}
\newcommand{\intern}{I}
\newtheorem{definition}{Definition}
\newlength{\commentWidth}
\let\oldnl\nl
\newcommand{\nonl}{\renewcommand{\nl}{\let\nl\oldnl}}
\newif\ifwithappendix
\begin{document}


\title{Fast Percolation Centrality Approximation \\ with Importance Sampling}

\author{

\IEEEauthorblockN{Antonio Cruciani}
\IEEEauthorblockA{\textit{Department of Computer Science} \\
\textit{Aalto University}\\
Espoo, Finland \\
\url{antonio.cruciani@aalto.fi}}
\and
\IEEEauthorblockN{Leonardo Pellegrina}
\IEEEauthorblockA{\textit{Department of Information Engineering} \\
\textit{University of Padova}\\
Padova, Italy \\
\url{leonardo.pellegrina@unipd.it}}
}

\maketitle

\begin{abstract}
In this work we present {\sc PercIS}, an algorithm based on Importance Sampling to approximate the percolation centrality of all the nodes of a graph. 
Percolation centrality is a generalization of betweenness centrality to attributed graphs, and is a useful measure to quantify the importance of the vertices in a contagious process or to diffuse information. 
However, it is impractical to compute it exactly on modern-sized networks. 

First, we highlight key limitations of state-of-the-art sampling-based approximation methods for the percolation centrality, 
showing that in most cases they cannot achieve accurate solutions efficiently. 
Then, we propose and analyze a novel sampling algorithm based on Importance Sampling,
proving tight sample size bounds to achieve high-quality approximations. 

Our extensive experimental evaluation shows that {\sc PercIS} computes high-quality estimates and scales to large real-world networks, while significantly outperforming, in terms of sample sizes, accuracy and running times, the state-of-the-art. 
\end{abstract}

\begin{IEEEkeywords}
Percolation Centrality, Random Sampling, Importance Sampling, Graph Mining
\end{IEEEkeywords}


\section{Introduction}
Identifying important nodes in a graph is a key task in graph mining. 
The most common technique for this task is to use a centrality measure~\cite{newman2018networks}, 
such as the popular PageRank~\cite{page1999pagerank}, 
betweenness 
~\cite{anthonisse1971rush,freeman1977set}, 
and closeness~\cite{bavelas1948mathematical} centralities. 
These measures quantify the importance of nodes under different perspectives. 
For instance, 
the betweenness centrality of a node is defined in terms of the fraction of shortest paths that traverse it, 
while closeness centrality gives more importance to nodes that are close to all other nodes of the network. 
While the type of measure to use typically depends on the application~\cite{ghosh2014interplay},
centrality measures are important for several tasks, such as 
discovering communities~\cite{girvan2002community} and vulnerabilities~\cite{iyer2013attack}, 
or to shape information diffusion over the network~\cite{kempe2003maximizing}. 

In this paper we consider the \emph{percolation centrality}~\cite{Piraveenan_2013}, a measure useful in settings where graphs model a contagious process in a network (e.g., the spread of infection in a population or  misinformation over a social network). 
The percolation centrality is a generalization of the betweenness centrality to attributed graphs:
it quantifies the importance of a node $v$ in terms of the \emph{weighted} fraction of shortest paths that traverse $v$. 
The weight of each shortest path of the graph 
depends on 
the difference of the \emph{percolation states} of the terminals of the path; 
the percolation state of a node quantify its level of contamination. 
Intuitively, percolation centrality measures the importance of a node in terms of its potential contribution to the diffusion of a contagion, i.e., when it connects nodes with high percolation state with nodes with low percolation state. 

The study of percolation processes has been introduced by~\cite{Broadbent_1957} to model the passage of a fluid in a medium. 
Subsequently, Piraveenan et al.~\cite{Piraveenan_2013} proposed the percolation centrality measure for the nodes of a network, in which the medium are the vertices of a given graph $G$ and each vertex $v$ in $G$ has a real-valued percolation state $x_v \in [0,1]$ that reflects the level of contamination of the node $v$. 

Similarly to the betweenness centrality, 
a technique to compute the percolation centrality scores of all the nodes in a graph $G$ is to solve the All Pair Shortest Paths (APSP) problem (e.g., to run a BFS or Dijkstra algorithm from each node $v$). 
Unfortunately, under the APSP conjecture~\cite{Abboud_2014} no truly subcubic time ($\bigO(n^{3-\varepsilon})$ for any $\varepsilon >0$) algorithm can be designed. 
For the betweenness centrality, faster methods have been developed, 
such as Brandes' algorithm~\cite{brandes2001faster},  
but they still feature analogous lower bounds to their complexities~\cite{borassi2016into}. 
Since applications require to analyze large graphs, e.g., with hundreds of millions of nodes and edges, exact computation of such scores is clearly prohibitive. 
Therefore, the only viable solution is to provide efficient-to-compute \emph{approximations} with high-quality accuracy guarantees. 
The main challenge is to tightly relate the quality of the approximation with the cost of computing it, i.e., the running time of the approximation algorithm. 
This is the main goal of our work. 

Lima et al.~\cite{Lima_2020,Lima_2022} generalized the techniques proposed by Riondato and Kornaropoulos~\cite{Riondato_2016} and Riondato and Upfal~\cite{Riondato_2018} for the betweenness centrality to design methods for approximating the percolation centrality. 
The high-level idea is to randomly sample shortest paths of the graph, and use the (weighted) fraction of the paths that traverse $v$ as an estimate of its percolation centrality. 
The main technical challenge is to bound the number of random samples required to achieve an accurate approximation for all nodes of the graph; 
since generating each sample is expensive, 
as it requires exploring the graph to compute and sample shortest paths, 
such bound directly impacts the efficiency of the approximation algorithm. 
While~\cite{Lima_2020,Lima_2022} derive sample size bounds with elegant tools from statistical learning theory, such as 
pseudodimension~\cite{pollard2012convergence} (a generalization of the VC-dimension~\cite{Vapnik:1971aa}) 
and Rademacher Averages~\cite{Barlett_2003,pellegrina2022mcrapper}, we highlight some technical issues that prevent these methods to be useful in practical applications (as we discuss in Section~\ref{sec:percapproxprelims}). 
For this reason, no truly effective algorithm exists to approximate the percolation centrality. 

In this work we develop a new algorithm, that we call \percis, which leverages a refined random sampling distribution based on the Importance Sampling framework. 
We theoretically and empirically prove that \percis\ is the first truly effective method to compute high-quality approximations of percolation centrality. 
We now summarize our contributions:

\begin{itemize}
    \item we introduce a new estimator of the percolation centrality scores of all nodes of the graph based on a novel Importance Sampling distribution over the shortest paths of the graph. 
    We theoretically prove that this refined sampling distribution is able to drastically reduce the variance of the estimates, compared to previous approaches. 
    In fact, we highlight technical issues with previous sampling-based methods, that instead use a \emph{uniform} sampling distribution; we prove that such approaches are impractical, 
    by establishing strong \emph{lower bounds} to the number of required random samples. 
    
    \item leveraging the Importance Sampling distribution, we develop the \percis\ algorithm. 
    Our algorithm uses an efficient procedure to sample from the importance distribution, and 
     a two-phases sampling scheme to calibrate the number of random samples needed to achieve an high quality approximation of the centrality scores. 
     The main technical tool we develop is a new bound on the sufficient number of samples to approximate the
    percolation centrality for all nodes, that is governed by key parameters, such as the maximum variance of the percolation centrality estimators. 
    We show that this bound is much tighter than the ones from previous works. 

	\item  we perform an extensive experimental evaluation showing that our algorithm significantly improves on the state-of-the-art in terms of sample sizes, running time, and accuracy of the estimates.
\end{itemize}

\textit{Related Works}. 
Several methods have been developed to compute accurate approximations of centrality scores. 
The most relevant to our work are the ones tailored to compute high-quality estimates of the 
betweenness centrality~\cite{Riondato_2016,Riondato_2018,borassi2019kadabra,Cousins_2023,Pellegrina_2023}. 
As discussed previously, these approaches are based on elegant and advanced technical tools to establish their theoretical guarantees, and scale easily to large networks. 
For instance, \cite{borassi2019kadabra} leverages adaptive sampling, while  \cite{Pellegrina_2023} used progressive sampling and Monte Carlo Rademacher Averages to compute tight non-uniform approximation bounds. 
However, despite their effectiveness, such methods cannot be applied to approximating the percolation centrality, since the two measures differ substantially. 

Other works 
considered scalable approaches~\cite{boldi2013core,chechik2015average} to approximate different types of centralities, such as closeness and harmonic~\cite{marchiori2000harmony}. 
Recent methods proposed extensions to different settings, such as 
uncertain~\cite{saha2021shortest}, 
temporal~\cite{santoro2022onbra,zhang2024efficient,cruciani2024mantra,brunelli2024making}, 
heterogeneous~\cite{wang2024efficient}, 
and 
dynamic graphs~\cite{bergamini2014approximating,bergamini2015fully,hayashi2015fully}. 

Other methods tackle different problems. 
\cite{yoshida2014almost,mahmoody2016scalable,pellegrina2023efficient,bergamini2018scaling,angriman2021group} focus on centrality maximization, that aims at identifying the most central \emph{set of nodes} of cardinality at most $k$. 
\cite{bergamini2018improving,miyauchi2024local} seek to add or remove $k$ edges to the graph so to maximally increase, or decrease, the centrality of a node. 

Similarly to other previous works, our algorithm relies on an upper bound to the \emph{vertex diameter}, that is the maximum number of nodes that are internal to a shortest path. 
For this problem, several efficient bounding methods have been proposed~\cite{magnien2009fast,crescenzi2012computing,hayashi2015fully,ceccarello2020distributed}.

\section{Preliminaries}
We now introduce the notation and most relevant definitions that we use as the
groundwork of our proposed algorithm. 

\subsection{Graphs and Percolation centrality}

Let a (directed) graph $G= (V,E)$ with $n=|V|$ nodes and $m=|E|$ edges.
For any pair of nodes $s,t\in V$ we define $\Gamma_{st}$ to be the set of all the shortest paths from $s$ to $t$, and $\sigma_{st} = |\Gamma_{st}|$. 
For a given path $\tau_{st}\in\Gamma_{st}$, 
we define $\intern(\tau_{st})$ as the set of internal vertices of the shortest path $\tau_{st}$, i.e., $\intern(\tau_{st}) = \{v\in \tau_{st} ,  s \neq v \neq t \}$. 
A node $v$ is internal to $\tau_{st}$ when $v \in \intern(\tau_{st})$, 
i.e., when $\tau_{st}$ passes through $v$ and $s \neq v \neq t$.
Moreover, we define $\sigma_{st}(v) = |\{ \tau_{st} : v \in \intern(\tau_{st})  \}|$ as the number of shortest paths from $s$ to $t$ such that $v$ is internal. 
We then denote $D \geq \max_\tau | \intern(\tau) |$ as an upper bound to the \emph{vertex diameter} of $G$, i.e., the maximum number of nodes that are internal to any shortest path of the graph. 

For every node $v \in V$, let $x_v \in [0,1]$ be the \emph{percolation state} of $v$. 
W.l.o.g. (up to a relabelling of the nodes), we assume that the percolation states are sorted in non-increasing order: $x_1 \geq x_2 \geq \dots \geq x_n$. 
Let $R(x) = \max(0,x)$ be the ramp function. 
 The percolation centrality $p(v)$ of the node $v$ is defined as follows~\cite{Piraveenan_2013}.
\begin{definition}
\label{def:percolationcentr}
Given a graph $G$ and the percolation states $\{ x_v , v \in V \}$, the percolation centrality $p(v)$ of a node $v\in V$ is defined as
\vspace{-1px}
\begin{align*}
   p(v) & =\sum_{\substack{s,t\in V\\ s \neq v \neq t}}{\frac{\sigma_{st}(v)}{\sigma_{st}} \kappa(s,t,v)} ,\\
\text{ where } ~ & \kappa(s,t,v) = \frac{R(x_s-x_t)}{\sum_{\substack{u,w\in V \\ u \neq v \neq w}}R(x_u-x_w)} . \numberthis \label{eq:kappaperc}
\end{align*}
\end{definition}
We note that the percolation centrality is \emph{normalized}, i.e., insensitive to the graph size, thus it holds $p(v) \in [0,1]$. 
This follows from 
$\kappa(s,t,v)\in [0,1]$ and  
$\sum_{\substack{s,t\in V \\ s \neq v \neq t}} \kappa(s,t,v) = 1 $.
We note that, for the percolation centrality $p(v)$ to be well defined, there must exist two nodes $s,t$ with $s \neq v \neq t$ such that $x_s \neq x_t$ (otherwise, the denominator in Eq.~\eqref{eq:kappaperc} is $0$).

\subsection{Percolation Centrality Approximation}
\label{sec:percapproxprelims}

Since computing the percolation centrality exactly is too expensive, our goal is to instead compute an accurate \emph{approximation} that is efficient to obtain. 
For a given \emph{accuracy} parameter $\varepsilon \in (0,1]$, provided in input by the user, we aim to return approximated values $\{ \tilde{p}(v) , v \in V \}$ that are within $\varepsilon$ of the exact values $\{ p(v) , v \in V \}$, for all nodes of the graph. 
We formalize this requirement with the \emph{$\varepsilon$-approximation}, which is defined as follows.
\begin{definition}
\label{def:epsapprox}
Given the accuracy parameter $\varepsilon \in (0,1]$, an $\varepsilon$-approximation $\{ \tilde{p}(v) , v \in V \}$ of the percolation centralities $\{ p(v) , v \in V \}$ satisfies 
\begin{align*}
| \tilde{p}(v) - p(v) | \leq \varepsilon , \forall v \in V .
\end{align*}
\end{definition}
When the $\varepsilon$-approximation is obtained using \emph{random sampling},  as in our case, the goal is to return an $\varepsilon$-approximation with high confidence, i.e., with probability at least $1 - \delta$, for $\delta \in (0,1)$. 
We now describe the sampling distribution and estimators employed by previous works. 

Lima et al.~\cite{Lima_2020} proposed a generalization of the approximation algorithm developed by Riondato and Kornaropoulos~\cite{Riondato_2016} for the betweenness centrality. 
The method of~\cite{Lima_2020} samples, uniformly at random, $\ell = \mathcal{O}(\ln(D/\delta)/\varepsilon^2)$ shortest paths of the graph, and approximates the percolation centrality of the node $v$ as the (weighted) fraction of shortest paths where $v$ is internal to, where the weights are given by the function $\kappa$. 
Each sample is generated by choosing two nodes $s,t$ uniformly at random, by computing the set of shortest paths $\Gamma_{st}$ from $s$ to $t$, e.g., with a BFS, and by choosing one shortest path uniformly at random from the set $\Gamma_{st}$. 
More precisely, by following this procedure a shortest path $\tau_{st} \in \Gamma_{st}$ is sampled with probability 
$(n(n-1) \sigma_{st})^{-1}$. 
After drawing a sample $\sample = \{ \tau^1 , \tau^2 , \dots , \tau^\ell \}$ of $\ell$ i.i.d. shortest paths, 
the returned estimate $\tilde{p}^*(v)$ for all $v \in V$ 
is the weighted average
$\tilde{p}^*(v) = \frac{1}{\ell} \sum_{i=1}^{\ell} \kappa(s,t,v) \ind{v \in \intern ( \tau^i_{st} ) } $. 
Subsequently, \cite{Lima_2022} proposed an estimator 
that considers
all the shortest paths $\Gamma_{st}$ from the sampled pairs $s,t$, similarly to the approach developed by Riondato and Upfal~\cite{Riondato_2018} for the betweenness.

We remark that, while the notion of $\varepsilon$-approximation defined in previous works~\cite{Lima_2020,Lima_2022} is similar to one we consider, 
the analyses carried out by~\cite{Lima_2020,Lima_2022} feature a key issue that makes the obtained theoretical guarantees vacuous. 
In fact, Lima et al.~\cite{Lima_2020,Lima_2022} focus on approximating the \emph{doubly normalized score} $p^*(v) = \frac{p(v)}{n(n-1)}$, instead of the standard percolation centrality $p(v)$ from Definition~\ref{def:percolationcentr} (e.g., see Definition 2.1 in~\cite{Lima_2020}). 
Note that the expected value $\E[\tilde{p}^*(v)]$ of the estimator 
$\tilde{p}^*(v)$ defined above
is $p^*(v)$,
and not $p(v)$. 
It is crucial to note that, for the same accuracy level $\varepsilon$,
an $\varepsilon$-approximation of $\{ p^*(v) , v \in V \}$ 
is significantly less informative than 
an $\varepsilon$-approximation of $\{ p(v) , v \in V \}$; 
in fact, 
any choice of $\varepsilon \geq \frac{1}{n(n-1)}$ 
leads to uninformative results w.r.t. $\{ p^*(v) , v \in V \}$, 
since the error bound $\varepsilon$ would be \emph{larger} than all centrality scores (observe that it holds $0 \leq p^*(v) \leq \frac{1}{n(n-1)}$, for all $v$).\footnote{We note that approximating $p^*(v)$ is trivial when $\varepsilon \geq \frac{1}{n(n-1)}$; this is due to the fact that $p^*(v) \in [0, \frac{1}{n(n-1)}]$, so simply reporting the values $\{ \tilde{p}^*(v) = 0 , v \in V \}$ is sufficient to guarantee $|\tilde{p}^*(v) - p^*(v)| \leq \varepsilon$ and to obtain an $\varepsilon$-approximation.} 
On the other hand, setting $\varepsilon < \frac{1}{n(n-1)}$ is unfeasible, since the analysis of~\cite{Lima_2020,Lima_2022}
implies that $\ell = \Omega(n^4)$ random samples are needed, which is far more expensive than running the exact algorithm. 
The cause for this issue arises from the use of \emph{uniform} distribution when sampling shortest paths from the graph; we prove, both theoretically and empirically, that this choice typically leads to poor results, due to high variance of the estimates, and a running time that is comparable to running the exact algorithm. 
In fact, this strategy fails to provide good approximations of the percolation centrality on realistic instances. 
Instead, our approach leverages \emph{Importance Sampling} and a more refined \emph{non-uniform distribution} over shortest paths; 
we prove that this is a key requirement to achieve an efficient approximation of the percolation centralities.

\subsection{Importance Sampling}
\label{sec:importancesampling}
In this section we introduce the Importance Sampling technique, 
a general method for Monte Carlo estimation problems~\cite{kahn1951estimation,rubinstein2016simulation,boucheron2013concentration,mitzenmacher2017probability}. 
An important application is variance reduction, i.e., to improve the accuracy of 
Monte Carlo approximations. 
Suppose we are given a discrete random variable $X$, and our goal is to approximate its expectation $\mu = \E_y[X]$ w.r.t. to a probability distribution $y$, defined as
$\E_y[X] = \sum_x x y(x) $,
where $y(x)$ denotes $\Pr_y(X=x)$. 
The standard approach is to draw i.i.d. random samples $\{ x_1 , \dots , x_\ell \}$ from $y$, and to approximate $\E_y[X]$ with the empirical mean $\tilde{\mu}_y = \frac{1}{\ell} \sum_{i=1}^\ell x_i$ of the observed values of $X$. 
However, the variance $\E_y[( X - \mu )^2]$ of this estimate may be large, due to the skew of the distribution $y$; 
furthermore, 
it may be difficult or very expensive to sample from $y$. 
Importance Sampling leverages an \emph{alternative} distribution $q$, also called \emph{importance} distribution, to improve the quality of the estimate by reducing the estimation variance;
the idea is to boost the sampling probability of important outcomes for the estimation of $\mu$. 
The alternative estimator $\tilde{\mu}_q$ uses i.i.d. random samples taken according to $q$, and is defined as 
$\tilde{\mu}_q = \frac{1}{\ell} \sum_{i=1}^\ell x_i \frac{y(x_i)}{q(x_i)} $.
A key requirement of the distribution $q$ is that, for all $x$ with $x y(x) > 0$, it must hold $q(x) > 0$.
The estimator $\tilde{\mu}_q$ is still unbiased, since $\E_q[\tilde{\mu}_q] = \mu$; depending on the choice of $q$, its variance $\E_q[( X \frac{y(X)}{q(X)} - \mu )^2]$ may be much smaller w.r.t. $\tilde{\mu}_y$, thus leading to more accurate estimates of the target mean $\mu$. 
A quantity often used to evaluate the quality of $q$ is the \emph{likelihood ratio} $\hat{d}$, which is 
$\hat{d} = \max_{x : q(x) > 0} \frac{y(x)}{q(x)} $. 

It is important to note that a bad choice of $q$, and a large likelihood ratio, may lead to a large estimation variance for the estimator $\tilde{\mu}_q$. 
In the following sections we will design a novel random sampling method that leverages Importance Sampling for the approximation of percolation centrality.

\section{Fast Estimation of the Percolation Centrality with \percis}
In this section we present our contributions in detail.
First, in Section~\ref{sec:isdistest} we define the Importance Sampling distribution and the estimator for accurately approximating the percolation centrality of all nodes of a graph. 
Then, in Section~\ref{sec:percisalg} we present our algorithm \percis, which is based on an efficient procedure to sample from the importance distribution, and a two-steps sampling scheme to achieve sharp data-dependent estimates. 
In Section~\ref{sec:percisanalysis} we prove the correctness of \percis, i.e., we prove that it outputs an $\varepsilon$-approximation with high probability. 
Then, in Section~\ref{sec:percisanalysiscomparison} we theoretically compare the performance and guarantees of \percis\ with the standard approach based on the uniform distribution. 

Due to space constraints, the proofs of our results, and the pseudocode of some subroutines, are deferred 
\ifwithappendix
to the appendix. 
\else
to the appendix in the online extended version~\cite{percisarxiv}. 
\fi

\subsection{Importance Sampling distribution and Estimator}
\label{sec:isdistest}
In this section we propose an importance distribution $q$ to accurately and efficiently estimate the percolation centralities of all nodes of a graph. 

We first observe that, as discussed in Section~\ref{sec:importancesampling} and from the definition of $p(v)$ given in Definition~\ref{def:percolationcentr}, 
an effective 
importance distribution $q$ for estimating the unknown mean $\mu_v = p(v)$ should boost the probability of sampling shortest paths connecting the nodes $s,t$ that have a high weight $\kappa(s,t,v)$. 
However, differently from the standard setting discussed in Section~\ref{sec:importancesampling}, our goal is to approximate a set of unknown means $\{ \mu_v , v \in V \}$, i.e., the centralities of \emph{all nodes} of the graph, instead of an individual expectation $\mu$. 
Moreover, the weights $\kappa(s,t,v)$ depend on $v$: each node $v$ defines a different probability distribution over pairs of nodes $s,t$. 
Therefore, it is not immediately clear how to define a proper, and accurate, importance distribution $q$ that allow approximating $p(v)$ for all nodes $v \in V$ of the graph. 

To address this issue, we define $\tilde{\kappa} : V \times V \rightarrow [0,1] $ as
\begin{align*}
\tilde{\kappa}(s,t) = \frac{R(x_s-x_t)}{\sum_{(u,w)\in V\times V}R(x_u-x_w)} .
\end{align*}
Intuitively, $\tilde{\kappa}(s,t)$ is similar to the percolation weight $\kappa(s,t,v)$, but without the dependence on $v$. 
Thus, our key intuition is that the weights $\tilde{\kappa}(s,t)$ define an accurate distribution over all pairs of nodes that,
under mild assumptions on the percolation states, 
allow an accurate estimation of the percolation centrality $p(v)$ \emph{simultaneously} for all nodes $v$ of the graph. 
For any shortest path $\tau_{st}$ of the graph from the node $s$ to $t$, 
we define the importance distribution $q(\tau_{st}) = \frac{\tilde{\kappa}(s,t)}{\sigma_{st}}$. 
To sample from $q$, we use the following procedure, described at high level:
sample two nodes $s,t$ with probability $\tilde{\kappa}(s,t)$;
compute the set of shortest paths $\Gamma_{st}$ from $s$ to $t$, and choose one shortest path uniformly at random from $\Gamma_{st}$. 
Let $\sample$ be a sample $\sample = \{ \tau^1 , \tau^2 , \dots , \tau^\ell \}$ of $\ell$ shortest paths drawn i.i.d. from $q$ following this procedure; 
the estimator $\tilde{p}(v)$ of $p(v)$ is  
\begin{align*}
\tilde{p}(v) = \frac{1}{\ell} \sum_{i=1}^{\ell} \frac{\kappa(s,t,v)}{\tilde{\kappa}(s,t)} \ind{v \in \intern(\tau^i_{st})} . 
\end{align*}
In Section~\ref{sec:percisalg} we describe our algorithm \percis, that implements this procedure efficiently. 
We now prove that $q$ is a valid importance distribution (i.e., that the expectation of $\tilde{p}(v)$ is well defined), 
and that 
the estimator $\tilde{p}(v)$ is unbiased. 
\begin{lemma}
\label{lemma:unbiasedest}
It holds $\E_q[\tilde{p}(v)] = p(v)$.
\end{lemma}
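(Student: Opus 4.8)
The plan is to compute $\E_q[\tilde p(v)]$ directly from the definition of the estimator by linearity of expectation, reducing to the expectation of a single summand. Since the $\ell$ samples $\tau^1,\dots,\tau^\ell$ are i.i.d.\ draws from $q$, it suffices to show that for a single sampled shortest path $\tau_{st}$ we have
\begin{align*}
\E_q\left[ \frac{\kappa(s,t,v)}{\tilde\kappa(s,t)} \ind{v \in \intern(\tau_{st})} \right] = p(v) ,
\end{align*}
and then $\E_q[\tilde p(v)] = \frac{1}{\ell}\sum_{i=1}^\ell p(v) = p(v)$.

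To evaluate this single-sample expectation, I would expand over the outcome space of $q$, which is the set of all shortest paths $\tau_{st}$ of the graph, each occurring with probability $q(\tau_{st}) = \tilde\kappa(s,t)/\sigma_{st}$. Grouping the sum over paths by their terminal pair $(s,t)$, the expectation becomes
\begin{align*}
\sum_{(s,t)} \sum_{\tau_{st} \in \Gamma_{st}} \frac{\tilde\kappa(s,t)}{\sigma_{st}} \cdot \frac{\kappa(s,t,v)}{\tilde\kappa(s,t)} \ind{v \in \intern(\tau_{st})} = \sum_{(s,t)} \frac{\kappa(s,t,v)}{\sigma_{st}} \sum_{\tau_{st}\in\Gamma_{st}} \ind{v\in\intern(\tau_{st})} .
\end{align*}
The inner sum counts exactly the number of shortest paths from $s$ to $t$ through $v$, i.e.\ $\sigma_{st}(v)$, so this equals $\sum_{(s,t)} \frac{\sigma_{st}(v)}{\sigma_{st}} \kappa(s,t,v)$, which is $p(v)$ by Definition~\ref{def:percolationcentr} (noting that the terms with $s=v$ or $t=v$ contribute zero, since then $v\notin\intern(\tau_{st})$, matching the restriction $s\neq v\neq t$ in the definition).

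The one point that genuinely needs care — and the closest thing to an obstacle — is the well-definedness of the ratio $\kappa(s,t,v)/\tilde\kappa(s,t)$ appearing in the estimator, i.e.\ ensuring we never divide by zero on the support of $q$. Whenever $q(\tau_{st}) > 0$ we have $\tilde\kappa(s,t) > 0$, so the denominator is fine; and on pairs $(s,t)$ with $\tilde\kappa(s,t) = 0$ (equivalently $R(x_s - x_t) = 0$) we also have $\kappa(s,t,v) = 0$, so these pairs contribute nothing to $p(v)$ and are consistently absent from the support of $q$ — this is precisely the importance-sampling validity condition from Section~\ref{sec:importancesampling} that for all $x$ with $x\,y(x) > 0$ one needs $q(x) > 0$. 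I would also remark that the global denominator $\sum_{(u,w)}R(x_u - x_w)$ in $\tilde\kappa$ and the $v$-dependent denominator in $\kappa$ differ only in the range of summation, so they do not cancel exactly in the ratio; however, this causes no problem for unbiasedness because the ratio is handled as a single weight and the cancellation of $\tilde\kappa(s,t)$ in the displayed computation is exact wherever it is nonzero. Hence the identity holds and the estimator is unbiased.
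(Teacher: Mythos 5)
Your proposal is correct and follows essentially the same route as the paper's proof: linearity over the i.i.d.\ samples, expansion of the single-sample expectation over $q$ grouped by terminal pairs, cancellation of $\tilde\kappa(s,t)$, and identification of the inner sum with $\sigma_{st}(v)$. Your remark on well-definedness (that $\kappa(s,t,v)>0$ implies $\tilde\kappa(s,t)>0$) is exactly the observation the paper makes at the start of its proof.
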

We now define the likelihood ratio $d_v$ for the node $v$ as
\begin{align*}
d_v = \max_{s,t \in V : \tilde{\kappa}(s,t) > 0}\frac{\kappa(s,t,v)}{\tilde{\kappa}(s,t)} ,  
\end{align*}
and the likelihood ratio $\hat{d}$ as $\hat{d} = \max_v d_v$. 
It is immediate to observe that $\tilde{p}(v)$ is an average of $\ell$ random variables with codomain $[0 , \hat{d}]$. 
Consequently, the likelihood ratio $\hat{d}$ also controls the variance of $\tilde{p}(v)$, for all nodes $v \in V$.
\begin{lemma}
\label{lemma:variancebound}
It holds 
$\Vard{\tilde{p}(v)}{q} \leq p(v)(\hat{d} - p(v)) \leq \hat{d} p(v)$.
\end{lemma}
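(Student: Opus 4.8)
The plan is to reduce the variance of $\tilde{p}(v)$, which is an average of $\ell$ i.i.d.\ terms, to the variance of a single term, and then bound that single-term variance by exploiting the idempotency of the indicator together with the definition of the likelihood ratio $\hat{d}$. Concretely, I would write $\tilde{p}(v) = \frac{1}{\ell}\sum_{i=1}^{\ell} Z_i$, where $Z_i = \frac{\kappa(s,t,v)}{\tilde{\kappa}(s,t)}\ind{v \in \intern(\tau^i_{st})}$ are i.i.d.\ copies of a random variable $Z$. By independence, $\Vard{\tilde{p}(v)}{q} = \frac{1}{\ell}\Vard{Z}{q} \le \Vard{Z}{q}$, so it suffices to bound $\Vard{Z}{q}$. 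By Lemma~\ref{lemma:unbiasedest} we have $\E_q[Z] = p(v)$, hence $\Vard{Z}{q} = \E_q[Z^2] - p(v)^2$.

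The next step is to bound $\E_q[Z^2]$. Since $\ind{\cdot}^2 = \ind{\cdot}$, we get $Z^2 = \left(\frac{\kappa(s,t,v)}{\tilde{\kappa}(s,t)}\right)^2 \ind{v \in \intern(\tau_{st})}$. Only paths $\tau_{st}$ with $\tilde{\kappa}(s,t) > 0$ have positive probability under $q$, and for those the ratio $\frac{\kappa(s,t,v)}{\tilde{\kappa}(s,t)}$ is nonnegative (both ramp-function expressions are nonnegative and the normalizing sums are positive) and at most $d_v \le \hat{d}$ by definition of $d_v$ and $\hat{d}$. Therefore $\left(\frac{\kappa(s,t,v)}{\tilde{\kappa}(s,t)}\right)^2 \le \hat{d}\,\frac{\kappa(s,t,v)}{\tilde{\kappa}(s,t)}$ holds $q$-almost surely, which yields $Z^2 \le \hat{d}\,Z$ and hence $\E_q[Z^2] \le \hat{d}\,\E_q[Z] = \hat{d}\,p(v)$.

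Combining the two steps gives $\Vard{\tilde{p}(v)}{q} \le \Vard{Z}{q} = \E_q[Z^2] - p(v)^2 \le \hat{d}\,p(v) - p(v)^2 = p(v)(\hat{d} - p(v))$, and the final inequality $p(v)(\hat{d} - p(v)) \le \hat{d}\,p(v)$ follows immediately from $p(v) \ge 0$. I do not expect a genuine obstacle here; the only point needing a little care is that the pointwise bound $Z^2 \le \hat{d}\,Z$ is only asserted on the support of $q$, i.e.\ for pairs $s,t$ with $\tilde{\kappa}(s,t) > 0$, which is precisely the domain over which $d_v$ (and thus $\hat{d}$) is defined, so this causes no issue when taking the expectation under $q$.
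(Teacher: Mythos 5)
Your proof is correct and follows essentially the same route as the paper: compute the second moment of a single sample term, bound the squared likelihood ratio by $\hat{d}$ times the ratio itself using the idempotency of the indicator, and subtract $p(v)^2$. If anything, you are slightly more careful than the paper's own write-up in explicitly passing from the variance of the $\ell$-sample average to that of a single term via the factor $1/\ell$.
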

As we will prove in Section~\ref{sec:percisanalysiscomparison}, the likelihood ratio $\hat{d}$ is small under extremely mild assumptions on the values of the percolation states. 
This guarantees that $\tilde{p}(v)$ is sharply concentrated towards its expectation $p(v)$, for all $v \in V$, as we prove in the following sections.

\subsection{\percis\ Algorithm}
\label{sec:percisalg}

In this section we introduce our new algorithm \percis, which is based on the importance distribution $q$ and the estimator introduced in the previous section. 
The pseudocode of \percis\ is described by Algorithm~\ref{algo:mainalg}.

\begin{algorithm2e}[htb!]
	\caption{\percis}\label{algo:mainalg}
	\KwIn{Graph $G=(V,E)$, percolation states $x_1 , x_2 , \dots , x_n$, $\ell_1 \geq 2$, $\varepsilon , \delta \in (0,1)$. }
	\KwOut{$\varepsilon$-approximation of $\{ p(v) , v \in V \}$ with probability $\geq 1- \delta$}
	
	$D \gets \textsc{VertexDiamUB}(G)$;
	
	$\sample \gets \text{\samplalg}(G, \{ x_v \} , \ell_1)$; \label{alg:firstsamplingstart}
	
	\lForAll{$v \in V$}{$\tilde{p}(v) \gets \frac{1}{\ell} \sum_{i=1}^{\ell} \frac{\kappa(s,t,v)}{\tilde{\kappa}(s,t)} \ind{v \in \intern( \tau^i_{st}) }$}
	
	$\hat{\rho} \gets \tilde{\rho}(\sample) + \sqrt{ \frac{2 \Lambda(\sample) \log(8/\delta)}{ \ell_1 } } +
		\frac{7 D \log(8/\delta)}{3 (\ell_1 - 1)}$;
	
	$\hat{v} \gets \hat{d}^2 \max_{v \in V} \Bigl\{ \tilde{p}(v) + \sqrt{ \frac{2  \tilde{p}(v) \log(4/\delta)}{ \ell_1 } } +
		\frac{\log(4/\delta)}{3 \ell_1}\Bigr\}$;
	
	$\hat{x} \gets \hat{d}/2 - \sqrt{\hat{d}^2/4 - \min \{ \hat{d}^2/4 ,  \hat{v} \} }$;
	
	$\ell \gets \sup_{x \in (0,\hat{x}]} \biggl\{ \frac{ \hat{d}^2 \ln \left( \frac{4 \hat{d} \hat{\rho} }{x \delta } \right) } { g(x)h \left( \frac{\varepsilon \hat{d}}{g(x) } \right) } \biggr\} $; \label{alg:firstsamplingend}
	
	$\sample \gets \text{\samplalg}(G, \{ x_v \} , \ell)$; \label{alg:secondsampling}
	
	\lForAll{$v \in V$}{$\tilde{p}(v) \gets \frac{1}{\ell} \sum_{i=1}^{\ell} \frac{\kappa(s,t,v)}{\tilde{\kappa}(s,t)} \ind{v \in \intern ( \tau^i_{st} ) }$}
	
	\textbf{return} $\{ \tilde{p}(v) , v \in V \}$
	
\end{algorithm2e}

The input of \percis\ is composed of the graph $G$, 
the percolation states $\{ x_v \}$ of all nodes, 
an integer $\ell_1 \geq 2$, 
and the approximation parameters $\varepsilon , \delta$. 
First, the algorithm computes an upper bound $D$ to the vertex diameter of $G$, using a call to the procedure \textsc{VertexDiamUB}. 
The implementation of this procedure depends on the type of considered graph. 
For instance, when $G$ is undirected and unweighted, a simple BFS from an arbitrary node of $G$ allows obtaining a constant factor approximation of the diameter~\cite{Riondato_2016}. 

The algorithm, at high level, follows two sampling phases:
in the first phase (lines~\ref{alg:firstsamplingstart}-\ref{alg:firstsamplingend}), it draws $\ell_1$ random samples from the importance distribution $q$, which are used to compute the values $\hat{\rho}$, $\hat{v}$, and $\hat{x}$ in a data-dependent manner; 
$\hat{\rho}$ is used an high probability upper bound to the sum of all percolation centralities $\sum_v p(v)$, and is related to the \emph{average} number of internal nodes of the shortest paths,  
while $\hat{v}$ is an upper bound to the maximum variance $\max_v \Vard{\tilde{p}(v)}{q}$. 
As we will prove in Section~\ref{sec:percisanalysis}, these quantities 
tightly control the number of random samples $\ell$ (that is computed in line~\ref{alg:firstsamplingend}) 
to use 
in the second phase of the algorithm (starting in line~\ref{alg:secondsampling}). 
Note that accurate estimates of $\hat{\rho}$ and $\hat{v}$ 
are obtained even when 
$\ell_1$ 
is small, as we prove in Section~\ref{sec:percisanalysis} (in practice we set $\ell_1 = \max\{10^3 , \ln(1/\delta)/\varepsilon \}$, which is always $\ell_1 \ll \ell$). 
In the second sampling phase, 
\percis\ draws $\ell$ random samples and computes the approximation $\{ \tilde{p}(v) , v \in V \}$ 
given in output. 

The most expensive operation performed by \percis\ 
regards sampling shortest paths from the importance distribution $q$;
to do so, the algorithm calls the procedure $\text{\samplalg}(G, \{ x_i \} , \ell)$. 
This procedure, described by Algorithm~\ref{algo:nonuniform_sampling_alg} 
\ifwithappendix
(in the appendix due to space constraints), 
\else
(in the appendix~\cite{percisarxiv} due to space constraints), 
\fi
takes in input $G$, 
the percolation states $\{ x_i \}$ of the nodes, 
and the number $\ell$ of random samples to generate.

\samplalg\ is based on the following idea: 
it first samples the starting node $s$ with marginal probability $\sum_{u \in V} \tilde{\kappa}(s , u)$; 
then, conditioning on the chosen start node $s$, it samples $t$ with probability 
$\tilde{\kappa}(s,t) / \sum_{u\in V}\tilde{\kappa}(s,u)$. 
Sampling a pair $s,t$ according to this process can be trivially achieved in time $\bigO(n)$ \emph{per sample}, which is too demanding in practice; 
instead, we implement both random choices with a binary search. 
After a $\bigO(n)$ preprocessing (lines~\ref{alg:preprocessingstart}-\ref{alg:preprocessingend}), 
each pair $s,t$ is sampled in $\bigO(\log n)$ time (lines~\ref{alg:samplingstart}-\ref{alg:samplingend}). 

Once a pair of nodes $s,t$ is sampled, 
the algorithm computes
the set $\Gamma_{st}$ of shortest paths between $s$ and $t$ 
and samples one shortest path from such set
using the procedure $\text{\textsc{RandomSP}}(G,s,t)$. 
This procedure can be implemented in time $\bigO(m)$ using a (truncated) BFS, which is initialized from $s$ and expanded until $t$ is found. 
However, we can significantly speed-up this task. 
Indeed, by using a \emph{balanced bidirectional BFS}~\cite{Dantzig+1963}, 
i.e., a balanced expansion of two BFS from both $s$ and $t$, this computation can be much more efficient in practice 
(e.g., completed in time $\bigO(\sqrt{m}\,)$ on several random graph models and in realistic instances~\cite{borassi2019kadabra}) while featuring remarkable theoretical properties~\cite{haeupler2025bidirectional}. 
Therefore, we use this graph traversal technique to speed-up the sampling procedure.

We summarize these observations in the following result, that provides a bound to the running time of \samplalg\ and, consequently, of \percis.

\begin{proposition}
\label{prop:runningtime}
\samplalg\ correctly draws $\ell$ samples from $q$ in time $\bigO(n+\ell( \log n + T_{\text{\textsc{BBFS}}}))$
and space $\bigO(n+m)$, 
where $T_{\text{\textsc{BBFS}}}$ is the time to run the balanced bidirectional BFS.
\end{proposition}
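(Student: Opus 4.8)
The plan is to prove the two claims of the proposition separately: \emph{correctness} (each path returned by \samplalg\ is distributed exactly as $q$, independently across samples), and the stated \emph{running time} and \emph{space} bounds.

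For correctness, I would decompose the probability that the procedure returns a fixed shortest path $\tau_{st} \in \Gamma_{st}$ into the three choices it makes: the source $s$, the target $t$ conditioned on $s$, and the path inside $\Gamma_{st}$. By construction $s$ is drawn with marginal probability $\sum_{u \in V} \tilde{\kappa}(s,u)$ and, conditioned on $s$, the target $t$ is drawn with probability $\tilde{\kappa}(s,t)/\sum_{u \in V}\tilde{\kappa}(s,u)$, so the pair $(s,t)$ is produced with probability exactly $\tilde{\kappa}(s,t)$; these quantities are well defined since $\sum_{(u,w)\in V\times V} R(x_u-x_w) > 0$ under the assumption stated after Definition~\ref{def:percolationcentr}. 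Conditioned on $(s,t)$, the subroutine \textsc{RandomSP}$(G,s,t)$ returns a uniformly random element of $\Gamma_{st}$, i.e.\ each $\tau_{st}$ with probability $1/\sigma_{st}$: this is the standard property of a (truncated) BFS that annotates each visited node with its shortest-path count $\sigma$ and then backtracks from $t$ choosing each predecessor with probability proportional to its $\sigma$-value, and the same argument carries over to the balanced bidirectional variant used here~\cite{borassi2019kadabra,haeupler2025bidirectional}. Multiplying the three factors gives $\tilde{\kappa}(s,t)\cdot\frac{1}{\sigma_{st}} = q(\tau_{st})$; independence of the $\ell$ samples is immediate since all random choices are made afresh for each sample.

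For the running time, the key observation is that, after sorting the percolation states (assumed \wlg), the weights take the closed form $R(x_s - x_t) = x_s - x_t$ when $t \geq s$ and $0$ otherwise, so the normalizing constant $\sum_{(u,w)} R(x_u-x_w)$ and all the marginal masses $\sum_{u} \tilde{\kappa}(s,u)$ can be written in terms of suffix sums of the $x$-values and computed, for all $s$, in total time $\bigO(n)$ (this is the preprocessing of lines~\ref{alg:preprocessingstart}--\ref{alg:preprocessingend}). For each of the $\ell$ samples, $s$ is obtained by binary search over the prefix sums of the marginals in $\bigO(\log n)$ time; crucially, conditioned on $s$, the target $t$ is \emph{not} drawn by an explicit $\bigO(n)$ pass over a per-$s$ distribution — instead, since the partial sums $\sum_{t'=s}^{t}(x_s - x_{t'})$ are monotone in $t$ and each $\bigO(1)$-evaluable from the precomputed suffix sums, $t$ is found by a second binary search in $\bigO(\log n)$ time with no extra preprocessing. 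Once $(s,t)$ is fixed, \textsc{RandomSP} costs $\bigO(T_{\text{\textsc{BBFS}}})$ (at most $\bigO(m)$, typically much less). Summing, the $\ell$ samples cost $\bigO(\ell(\log n + T_{\text{\textsc{BBFS}}}))$; adding the $\bigO(n)$ preprocessing gives $\bigO(n + \ell(\log n + T_{\text{\textsc{BBFS}}}))$. For space: the graph takes $\bigO(n+m)$, the suffix/prefix-sum arrays take $\bigO(n)$, the bidirectional BFS uses $\bigO(n+m)$ auxiliary space that is reused across samples, and a single sampled path has length at most $D = \bigO(n)$ — all within $\bigO(n+m)$.

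I expect the main obstacle to be justifying the $\bigO(\log n)$-per-sample, $\bigO(n)$-total cost of the \emph{conditional} target sampling: a naive implementation would either precompute an $\bigO(n^2)$-size table of per-source conditional distributions or spend $\bigO(n)$ per sample, breaking the bound; the argument rests on exploiting the sorted order of the percolation states so that the conditional cumulative distribution of $t$ given $s$ is a monotone, $\bigO(1)$-evaluable function amenable to binary search. A secondary point needing care is the correctness of uniform shortest-path sampling via the \emph{balanced bidirectional} BFS rather than a plain BFS, for which I would invoke the analysis of~\cite{borassi2019kadabra,haeupler2025bidirectional}.
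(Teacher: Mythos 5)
Your proposal is correct and follows essentially the same route as the paper's proof: decompose the sampling probability into the marginal choice of $s$, the conditional choice of $t$, and the uniform choice within $\Gamma_{st}$; exploit the sorted percolation states so that all cumulative masses reduce to suffix sums computable in $\bigO(n)$ total, making both binary searches $\bigO(\log n)$-per-sample with $\bigO(1)$-evaluable monotone CDFs. The only cosmetic difference is that the paper spells out the binary-search correctness via explicit loop invariants and simply asserts that \textsc{RandomSP} returns a uniform element of $\Gamma_{st}$, whereas you sketch the predecessor-count backtracking argument for the latter; both are fine.
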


We remark that standard approaches, based on uniform sampling, require time $\bigO(\ell T_{\text{\textsc{BBFS}}})$ to generate $\ell$ random samples; 
this suggests that the additional $\bigO( \ell \log n)$ overhead term, due to the use of Importance Sampling, is negligible. 
We verify this observation in our experimental evaluation. 

\subsection{Analysis of \percis}
\label{sec:percisanalysis}

In this section we prove the correctness of \percis.
The main result we develop is a novel bound, stated in Theorem~\ref{thm:sample_size}, 
to the number of random samples required to obtain a high-quality approximation of the percolation centrality.
We leverage the Importance Sampling scheme described in the previous sections 
and advanced concentration inequalities~\cite{Maurer_2009,boucheron2013concentration}. 
\ifwithappendix
Due to space constraints, the proofs of our results are deferred to the appendix. 
\else
Due to space constraints, the proofs of our results are deferred to the online appendix~\cite{percisarxiv}. 
\fi

To prove Theorem~\ref{thm:sample_size}, we need to define some quantities and preliminary results. 
First, we define a key parameter $\rho$ that characterizes our bound, that is the \emph{weighted average number of internal nodes} in a shortest path, defined as
$\rho = \sum_{s,t\in V} |I(\tau_{st})| \tilde{\kappa}(s,t) $. 
A key observation is that $\rho$ is tightly related to the sum of all percolation centralities. 
\begin{lemma}
\label{lemma:sumofpercs}
It holds $ \rho \leq \sum_{v \in V} p(v) \leq \hat{d} \rho $.
\end{lemma}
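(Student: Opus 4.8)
The plan is to expand $\sum_{v\in V} p(v)$ via Definition~\ref{def:percolationcentr}, swap the order of summation so that the outer sum ranges over pairs $s,t$, and then sandwich each weight $\kappa(s,t,v)$ between $\tilde\kappa(s,t)$ and $\hat d\,\tilde\kappa(s,t)$.

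First I would introduce the shorthand $S = \sum_{(u,w)\in V\times V} R(x_u - x_w)$ and, for each $v\in V$, $S_v = \sum_{u\neq v\neq w} R(x_u-x_w)$, so that $\tilde\kappa(s,t) = R(x_s-x_t)/S$ and $\kappa(s,t,v) = R(x_s-x_t)/S_v$. Since $S_v \le S$ (we are only dropping nonnegative terms), $\kappa(s,t,v) \ge \tilde\kappa(s,t)$ for every triple with $s\neq v\neq t$, both sides being $0$ when $R(x_s-x_t)=0$. For the matching upper bound, note that for any such triple with $\tilde\kappa(s,t) > 0$ the ratio $\kappa(s,t,v)/\tilde\kappa(s,t)$ equals $S/S_v$, which is at most $d_v$, hence at most $\hat d$; thus $\kappa(s,t,v)\le \hat d\,\tilde\kappa(s,t)$ in all cases.

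Next I would use a double-counting identity for the inner sum over $v$: since all shortest paths from $s$ to $t$ have the same length, they have the same number $|I(\tau_{st})|$ of internal vertices, so $\sum_{v : s\neq v\neq t}\sigma_{st}(v) = \sum_{\tau\in\Gamma_{st}}|I(\tau)| = \sigma_{st}\,|I(\tau_{st})|$, that is, $\sum_{v : s\neq v\neq t}\frac{\sigma_{st}(v)}{\sigma_{st}} = |I(\tau_{st})|$. Combining this identity with the two bounds on $\kappa$ and re-swapping the sums yields
\begin{align*}
\rho \;=\; \sum_{s,t\in V}\tilde\kappa(s,t)\,|I(\tau_{st})| \;\le\; \sum_{v\in V}\sum_{s\neq v\neq t}\frac{\sigma_{st}(v)}{\sigma_{st}}\,\kappa(s,t,v) \;=\; \sum_{v\in V} p(v),
\end{align*}
and, symmetrically, $\sum_{v\in V} p(v) \le \hat d\sum_{s,t\in V}\tilde\kappa(s,t)\,|I(\tau_{st})| = \hat d\,\rho$, which is the claim.

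The summation swap and the counting identity are routine; the only points needing a little care are the exact value of the likelihood ratio, $\kappa(s,t,v)/\tilde\kappa(s,t) = S/S_v$ (equivalently $\hat d = S/\min_v S_v$), and the handling of degenerate pairs with $R(x_s-x_t)=0$ or $\sigma_{st}=0$, which contribute nothing on either side under the convention $0/0 = 0$. I do not anticipate a genuine obstacle here.
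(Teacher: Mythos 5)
Your proof is correct and follows essentially the same route as the paper's: expand $\sum_v p(v)$, sandwich $\kappa(s,t,v)$ between $\tilde\kappa(s,t)$ and $\hat d\,\tilde\kappa(s,t)$, swap the order of summation, and use $\sum_v \sigma_{st}(v)/\sigma_{st} = |I(\tau_{st})|$. You merely make explicit two details the paper leaves implicit (the exact ratio $S/S_v$ and the double-counting identity), which is fine.
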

Our new bound is based on the fact that, similarly to the betweenness centrality~\cite{Pellegrina_2023}, 
the percolation centrality 
satisfies a form of negative correlation among the vertices of the graph: 
since the sum of all percolation centralities is bounded by $\hat{d} \rho$, 
the existence of a node $v$ with high percolation centrality $p(v)$ constraints the centralities of the other nodes $u \neq v$ to satisfy $\sum_{u \in V , u \neq v} p(u) \leq \hat{d} \rho - p(v)$. 
This means that, intuitively, the number of vertices with high percolation centrality cannot be arbitrarily large, and, consequently, the number of nodes with large approximation errors should roughly depend on $\rho$ rather than $n$ (the number of nodes of the graph). 
On the other hand, 
the state-of-the-art bound for the betweenness centrality 
(Theorem~4.7 in~\cite{Pellegrina_2023}) 
is based on the 
\emph{unweighted} average number of internal nodes,
and applies to the simpler 
\emph{uniform} random sampling distribution, 
thus cannot be adapted directly to our setting, i.e., for 
percolation centrality and an Importance Sampling estimator.
We address these issues with Theorem~\ref{thm:sample_size}.
Let $g(x)=x(\hat{d}-x)$ and $h(x)=(1+x)\ln(1+x)-x$. 
\begin{theorem}\label{thm:sample_size}
Define $\hat{v}$, $\hat{x}$, and $\hat{\rho}$
 such that
 \begin{align*}
     & \max_{v\in V} \Vard{ \tilde{p}(v) }{q}\leq \hat{v} \leq \hat{d}^2/4, \text{  } \sum_{v\in V}p(v)\leq \hat{d} \hat{\rho} ,  \\ 
     & \text{ and } \hat{x} = \hat{d}/2 - \sqrt{\hat{d}^2/4 - \hat{v} } .
 \end{align*}
    For $\delta,\varepsilon\in (0,1)$, let $\mathcal{S} \! = \! \{ \tau^1, \dots, \tau^\ell \}$ be a sample of $\ell$ shortest paths drawn i.i.d. from the importance distribution $q$, with 
    \begin{align*}
    \ell = \sup_{x \in (0,\hat{x}]}\left\{ \frac{ \hat{d}^2 \ln \left( \frac{2 \hat{d} \hat{\rho} }{x \delta } \right) } { g(x)h \left( \frac{\varepsilon \hat{d}}{g(x) } \right) } \right\} .
     \end{align*}
With probability $\geq 1-\delta$ over $\calS$, 
   $ |\tilde{p}(v) - p(v)| \leq \varepsilon , \forall v \in V $.
 \end{theorem}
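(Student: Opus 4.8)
The plan is to apply a Bennett-type concentration bound to each estimator $\tilde p(v)$ separately, and then to combine the resulting per-node tail bounds through a union bound that is sharpened using the negative-correlation property of Lemma~\ref{lemma:sumofpercs} (this is what replaces the $n$ one would naively get inside the logarithm by the much smaller $\hat\rho$).

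First I would handle the degenerate nodes: if $p(v)=0$ then, since $\tilde\kappa(s,t)=0$ whenever $\kappa(s,t,v)=0$ with $s\neq v\neq t$, no shortest path internal to $v$ has positive probability under $q$, so $\tilde p(v)=0=p(v)$ deterministically and there is nothing to prove. So fix $v$ with $p(v)>0$ and set $x_v=\min\{p(v),\hat x\}\in(0,\hat x]$. I would then verify that $g(x_v)=x_v(\hat d-x_v)\geq \Vard{\tilde p(v)}{q}$: if $p(v)\leq\hat x$ this is exactly Lemma~\ref{lemma:variancebound}, since then $g(x_v)=p(v)(\hat d-p(v))$; if $p(v)>\hat x$ then $x_v=\hat x$ and the identity $g(\hat x)=\hat v$ (which follows algebraically from $\hat x=\hat d/2-\sqrt{\hat d^2/4-\hat v}$, using the hypothesis $\hat v\leq \hat d^2/4$) combined with $\hat v\geq\max_u\Vard{\tilde p(u)}{q}$ gives the bound.

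Next, since $\tilde p(v)$ is an average of $\ell$ i.i.d.\ random variables supported on $[0,\hat d]$ with mean $p(v)$ and variance $\sigma_v^2:=\Vard{\tilde p(v)}{q}$, Bennett's inequality~\cite{boucheron2013concentration} bounds each tail of $\tilde p(v)-p(v)$ by $\exp\bigl(-\ell\sigma_v^2 h(\hat d\varepsilon/\sigma_v^2)/\hat d^2\bigr)$. The exponent has the form $s\,h(c/s)$ with $s=\sigma_v^2$ and $c=\hat d\varepsilon$, and since the derivative of $s\mapsto s\,h(c/s)$ equals $\ln(1+c/s)-c/s\leq 0$, this quantity is non-increasing in $s$; hence I may replace $\sigma_v^2$ by the larger value $g(x_v)$ and obtain
\[
\Pr\bigl(|\tilde p(v)-p(v)|>\varepsilon\bigr)\;\leq\;2\exp\!\bigl(-\ell\,\phi(x_v)\bigr),\qquad \phi(x):=\frac{g(x)}{\hat d^2}\,h\!\left(\frac{\varepsilon\hat d}{g(x)}\right).
\]
Checking this monotonicity is the one genuinely technical step; everything else is bookkeeping.

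Finally, by the definition of $\ell$ as the supremum over $x\in(0,\hat x]$ of $\ln(2\hat d\hat\rho/(x\delta))/\phi(x)$, and since $x_v\in(0,\hat x]$, we get $\ell\,\phi(x_v)\geq\ln\!\bigl(2\hat d\hat\rho/(x_v\delta)\bigr)$, so the per-node bound becomes $\Pr(|\tilde p(v)-p(v)|>\varepsilon)\leq x_v\delta/(\hat d\hat\rho)$. Summing over all $v$ with $p(v)>0$, using $x_v\leq p(v)$ and the hypothesis $\sum_v p(v)\leq\hat d\hat\rho$ (justified by Lemma~\ref{lemma:sumofpercs}), the union bound yields $\Pr(\exists v:|\tilde p(v)-p(v)|>\varepsilon)\leq\frac{\delta}{\hat d\hat\rho}\sum_v x_v\leq\frac{\delta}{\hat d\hat\rho}\sum_v p(v)\leq\delta$, which is the claim. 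The heart of the argument — and the reason the sample size scales with $\hat\rho$ rather than $n$ — is exactly this last step: carrying the node-dependent weight $x_v$ inside the logarithm (via the $\sup$ defining $\ell$) makes the union bound controlled by $\sum_v x_v\le\sum_v p(v)$ instead of by the number of summands.
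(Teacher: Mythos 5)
Your proof is correct, and its skeleton is the same as the paper's: a Bennett tail bound per node (with the variance replaced by $\min\{\hat v, g(p(v))\}$, using the monotonicity of $s\mapsto s\,h(c/s)$, which you rightly flag and verify as the one technical step), followed by a union bound whose total weight is controlled by $\sum_v p(v)\leq \hat d\hat\rho$ rather than by $n$. Where you genuinely diverge is in how that union bound is executed. The paper formalizes the worst-case configuration of the unknown centralities as a continuous bounded-knapsack linear program, argues that its optimum concentrates all mass on a single value $x^\star=\argmax_{x\in(0,\hat x]} 2B(\varepsilon,g(x),\ell)/x$, and then shows the resulting objective is at most $\delta$. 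You instead assign each node the explicit budget $x_v=\min\{p(v),\hat x\}\in(0,\hat x]$, use the supremum defining $\ell$ to get the per-node failure probability down to $x_v\delta/(\hat d\hat\rho)$, and sum, using $\sum_v x_v\leq\sum_v p(v)\leq\hat d\hat\rho$. The two routes impose exactly the same condition on $\ell$ (that $2B(\varepsilon,g(x),\ell)/x\leq\delta/(\hat d\hat\rho)$ uniformly over $x\in(0,\hat x]$), so they yield the same sample size; your version is more elementary, avoids the continuous relaxation and the identification of the knapsack optimum, and also handles the $p(v)=0$ nodes and the constant $\hat d$ in the logarithm more carefully than the paper's write-up does. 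Either argument is acceptable; yours is arguably the cleaner one to present.
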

 We remark that, for typical values of the parameters $\varepsilon$ and $\delta$, the required number of samples $\ell$ is approximately 
$\frac{ \left( 2\hat{v}+\frac{2}{3}\varepsilon \hat{d} \right)}{\varepsilon^2}\left(\ln ({\hat{d} \hat{\rho} }/{\hat{v}} )+\ln\left(2/\delta\right)\right) $. 
Interestingly, this bound is smaller than the bound $\mathcal{O}(\ln(D/\delta)/\varepsilon^2)$ provided by Lima et al.~\cite{Lima_2020}, 
even if our guarantees are \emph{much stronger}, since they allow obtaining tight approximations to the percolation centralities $\{ p(v) , v \in V \}$ (instead of the doubly normalized variant $\{ p^*(v) , v \in V \}$). 
In fact, on real world graphs, 
$\hat{\rho} \ll D$ and the maximum variance is typically much smaller than its trivial upper bound, i.e., $\hat{v} \ll \hat{d}^2/4 $. 

We now provide a sharp, high-confidence upper bound $\hat{\rho}$ to the weighted average number of internal nodes $\rho$, which is a key quantity in our sample complexity analysis, and is computed in the first sampling phase of \percis. 
By applying an Empirical Bernstein bound~\cite{Maurer_2009}, we obtain a data-dependent estimate that adapts to the graph structure. 

\begin{proposition}
\label{prop:rhohatbound}
Let $D \geq \max_\tau | \intern(\tau) |$ be an upper bound to the vertex diameter of $G$, 
and $\mathcal{S}$ be 
a sample $\mathcal{S} = \{ \tau^1, \dots, \tau^\ell \}$ of $\ell$ shortest paths drawn i.i.d. from the importance distribution~$q$. 
Let $\tilde{\rho}(\mathcal{S}) = \frac{1}{\ell}\sum_{i=1}^\ell | \intern(\tau_i) | $. 
Then, define $\Lambda(\mathcal{S})$ and $\hat{\rho}$ as
	\begin{align*}
		\Lambda(\mathcal{S}) & = \frac{1}{\ell(\ell-1)} \sum_{1 \le i < j \le \ell} \left(
		| \intern(\tau_i)| - |\intern(\tau_j)|
		\right)^2 , \\
		\hat{\rho} & = \tilde{\rho}(\mathcal{S}) +
		\sqrt{ \frac{2 \Lambda (\sample) \log(2/\delta)}{\ell} } +
		\frac{7 D \log(2/\delta)}{3(\ell - 1)}.
	\end{align*}
	Then, with probability $\geq 1 - \delta$ it holds $\sum_{v \in V} p(v) \leq  \hat{d} \hat{\rho}$. 
\end{proposition}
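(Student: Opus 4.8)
The plan is to reduce the statement, via Lemma~\ref{lemma:sumofpercs}, to a high-confidence upper bound on the weighted average $\rho$, and then obtain such a bound by a single application of an empirical concentration inequality to the i.i.d.\ sample $\mathcal{S}$. The first step is to observe that $\rho$ is exactly the expectation, under $q$, of the number of internal nodes of a sampled path: since the sampling procedure of \samplalg\ picks the pair $s,t$ with probability $\tilde{\kappa}(s,t)$ and then a uniform shortest path in $\Gamma_{st}$, we have $\E_q[|\intern(\tau)|] = \sum_{s,t} \tilde{\kappa}(s,t)\, \frac{1}{\sigma_{st}} \sum_{\tau \in \Gamma_{st}} |\intern(\tau)| = \rho$. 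Consequently the quantities $Y_i := |\intern(\tau_i)|$, for $\tau_i \in \mathcal{S}$, are i.i.d.\ with mean $\rho$; because $D \geq \max_\tau |\intern(\tau)|$ they lie in $[0,D]$; and $\tilde{\rho}(\mathcal{S})$ and $\Lambda(\mathcal{S})$ are precisely their empirical mean and (unbiased) empirical variance $\frac{1}{\ell(\ell-1)}\sum_{i<j}(Y_i-Y_j)^2$.

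The second step is to invoke the Empirical Bernstein inequality of Maurer and Pontil~\cite{Maurer_2009}: for i.i.d.\ random variables bounded in $[0,1]$, with probability at least $1-\delta$ the true mean exceeds the empirical mean by at most $\sqrt{2\hat V_\ell \ln(2/\delta)/\ell} + \frac{7\ln(2/\delta)}{3(\ell-1)}$, where $\hat V_\ell$ is the empirical variance. Applying this to the rescaled variables $Y_i/D \in [0,1]$ and multiplying through by $D$ — which scales the empirical variance by $D^2$, hence its square root by $D$, and the additive term by $D$ — yields, with probability at least $1-\delta$,
\begin{align*}
\rho \leq \tilde{\rho}(\mathcal{S}) + \sqrt{\frac{2\Lambda(\mathcal{S})\log(2/\delta)}{\ell}} + \frac{7D\log(2/\delta)}{3(\ell-1)} = \hat{\rho}.
\end{align*}
Conditioning on this event and chaining with Lemma~\ref{lemma:sumofpercs} then gives $\sum_{v\in V} p(v) \leq \hat{d}\rho \leq \hat{d}\hat{\rho}$, which is the claim.

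The routine portions are the rescaling bookkeeping and checking that the $1/(\ell(\ell-1))$ normalization of $\Lambda(\mathcal{S})$ matches the empirical-variance convention in~\cite{Maurer_2009}. The only point that genuinely deserves a careful line is the identity $\rho = \E_q[|\intern(\tau)|]$ when $\sigma_{st} > 1$: the symbol $|I(\tau_{st})|$ in the definition of $\rho$ must be read as the average of $|\intern(\cdot)|$ over the paths in $\Gamma_{st}$, which is exactly the conditional expectation induced by the uniform choice of shortest path in \samplalg; once this is settled the $Y_i$ are genuinely i.i.d.\ with mean $\rho$ and the concentration argument goes through verbatim. I do not expect a real obstacle — the proposition is essentially a direct application of a known deviation bound together with the already-established Lemma~\ref{lemma:sumofpercs}.
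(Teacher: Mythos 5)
Your proof is correct and follows essentially the same route as the paper's: identify $\tilde{\rho}(\mathcal{S})$ as the empirical mean of i.i.d.\ variables $|\intern(\tau_i)| \in [0,D]$ with expectation $\rho$, apply the Empirical Bernstein bound of~\cite{Maurer_2009} after rescaling by $1/D$, and chain with Lemma~\ref{lemma:sumofpercs}. Your explicit verification that $\E_q[|\intern(\tau)|]=\rho$ and the remark on reading $|I(\tau_{st})|$ as the conditional average over $\Gamma_{st}$ are careful additions that the paper leaves implicit.
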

We now prove a data-dependent bound $\hat{v}$ to the maximum estimation variance $\max_v \Vard{\tilde{p}(v)}{q}$. 
\begin{proposition}
\label{prop:boundmaxvar}
Let $\mathcal{S}$ be 
a sample $\mathcal{S} = \{ \tau^1, \dots, \tau^\ell \}$ of $\ell$ shortest paths drawn i.i.d. from the importance distribution $q$. 
Define $\hat{v}$ as
	\begin{align*}
	\hat{v} = \hat{d}^2 \max_{v \in V} \biggl\{ \tilde{p}(v) + \sqrt{ \frac{2 \tilde{p}(v) \log(1/\delta)}{ \ell } } +
		\frac{ \log(1/\delta)}{3 \ell} \biggr\} .
	\end{align*}
	Then, with probability $\geq 1 - \delta$ it holds $\max_v \Vard{\tilde{p}(v)}{q} \leq  \hat{v}$.
\end{proposition}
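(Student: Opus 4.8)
The plan is to exploit the fact that $\tilde p(v)$ is an empirical mean of well-behaved bounded variables, and to combine the variance bound of Lemma~\ref{lemma:variancebound} with an empirical Bernstein inequality. Fix $v \in V$ and set $Z^v_i = \frac{\kappa(s,t,v)}{\tilde\kappa(s,t)}\ind{v \in \intern(\tau^i_{st})}$, the $i$-th summand of the estimator, so that $\tilde p(v) = \frac{1}{\ell}\sum_{i=1}^\ell Z^v_i$ is the empirical mean of $\ell$ i.i.d.\ random variables taking values in $[0,\hat d]$ (by definition of $d_v$ and $\hat d$), with $\E_q[Z^v_i] = p(v)$ by Lemma~\ref{lemma:unbiasedest}, and, by Lemma~\ref{lemma:variancebound}, with single-term variance $\Vard{\tilde p(v)}{q}\le \hat d\, p(v)$ --- i.e.\ the variance is controlled by its own mean up to the factor $\hat d$. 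The first, and conceptually most important, step is a reduction that spares us a union bound over the $n$ nodes: let $v^\star \in \argmax_{u \in V} p(u)$, a \emph{deterministic} node independent of the sample $\calS$; then Lemma~\ref{lemma:variancebound} applied to every $v$ gives, deterministically, $\max_{v \in V}\Vard{\tilde p(v)}{q} \le \hat d \max_{u} p(u) = \hat d\, p(v^\star)$. Hence it suffices to establish a single high-confidence upper bound on $p(v^\star)$.

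Next, I would apply to the i.i.d.\ sequence $Z^{v^\star}_1,\dots,Z^{v^\star}_\ell$ a one-sided empirical Bernstein / Bennett-type inequality~\cite{Maurer_2009,boucheron2013concentration}, using the variance proxy $\Vard{Z^{v^\star}_1}{q}\le \hat d\, p(v^\star)$, followed by a standard self-bounding inversion that replaces the unknown $p(v^\star)$ appearing under the square root with the observed $\tilde p(v^\star)$ (this is the routine step where the numeric constants, in particular the $1/3$, are pinned down from the exact form of the inequality). I expect this to yield that, with probability at least $1-\delta$,
\begin{align*}
p(v^\star) \;\le\; \tilde p(v^\star) + \sqrt{\frac{2\,\hat d\,\tilde p(v^\star)\log(1/\delta)}{\ell}} + \frac{\hat d\,\log(1/\delta)}{3\ell}.
\end{align*}

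To conclude, I multiply this bound by $\hat d$ and use $\hat d \ge 1$, which holds because $\frac{\kappa(s,t,v)}{\tilde\kappa(s,t)} = \frac{\sum_{u,w \in V} R(x_u - x_w)}{\sum_{u \neq v \neq w} R(x_u - x_w)} \ge 1$; thus $\hat d \le \hat d^2$ and $\hat d^{3/2} \le \hat d^2$, so that
\begin{align*}
\hat d\, p(v^\star) \;\le\; \hat d^2 \left( \tilde p(v^\star) + \sqrt{\frac{2\,\tilde p(v^\star)\log(1/\delta)}{\ell}} + \frac{\log(1/\delta)}{3\ell} \right) \;\le\; \hat v ,
\end{align*}
where the last step only enlarges the right-hand side, since $v^\star$ is among the nodes over which the maximum defining $\hat v$ is taken. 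Chaining with the deterministic inequality of the first paragraph gives $\max_{v \in V}\Vard{\tilde p(v)}{q} \le \hat v$ on this event, i.e.\ with probability at least $1-\delta$, which is the claim.

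The part I expect to require the most care is making the single-node reduction fully rigorous: one must check that the bad event $\{\max_{v}\Vard{\tilde p(v)}{q} > \hat v\}$ is \emph{contained} in the event that the one-sided concentration bound fails for the single fixed node $v^\star$, once the deterministic manipulations with $\hat d$ above have been carried out --- this containment is precisely what keeps the probability term at $\log(1/\delta)$ rather than $\log(n/\delta)$. The remaining points are routine: deriving the additive constant $1/(3\ell)$ from the chosen Bernstein/Bennett inequality and its self-bounding inversion, and handling the degenerate cases $\tilde p(v^\star) = 0$ and $\hat v \ge \hat d^2/4$ (the latter being clipped by the $\min$ in Algorithm~\ref{algo:mainalg}).
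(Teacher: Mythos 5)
Your proof is correct, and it diverges from the paper's in exactly one step, which is worth spelling out. Both arguments open with the same deterministic reduction via Lemma~\ref{lemma:variancebound}, namely $\max_v \Vard{\tilde{p}(v)}{q} \le \hat{d}\max_v p(v)$, and both close by multiplying a high-probability upper bound on $\max_v p(v)$ by $\hat{d}$, using $\hat{d}\ge 1$ (which you correctly justify) to absorb the mixed powers of $\hat{d}$ into $\hat{d}^2$. The difference is the middle step. The paper bounds $\max_v p(v)$ by treating the \emph{random} quantity $\max_v \tilde{p}(v)$ as a self-bounding function of the whole sample and invoking Thm.~4.3 of~\cite{pellegrina2023efficient}, which yields $\max_v p(v) \le \hat{d}\max_v\bigl\{\tilde{p}(v)+\sqrt{2\tilde{p}(v)\log(1/\delta)/\ell}+\log(1/\delta)/(3\ell)\bigr\}$ in one shot. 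You instead fix the \emph{deterministic} argmax $v^\star$ of the true centralities, concentrate the single empirical mean $\tilde{p}(v^\star)$, and only then relax to the maximum over $v$; this is more elementary (no self-bounding property of the maximum is needed) and equally avoids a union bound over the $n$ nodes, keeping the $\log(1/\delta)$ dependence. The one point you must pin down is the source of the exact constants in your intermediate inequality: a textbook Bernstein bound followed by the naive quadratic inversion in $\sqrt{p(v^\star)}$ degrades the additive term from $\hat{d}\log(1/\delta)/(3\ell)$ to a constant between $2$ and $3$ times $\hat{d}\log(1/\delta)/\ell$, and the Maurer--Pontil empirical Bernstein inequality places the empirical variance, not $\tilde{p}(v^\star)$, under the square root. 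The clean fix is to apply the very result the paper cites to your single node: $\ell\,\tilde{p}(v^\star)/\hat{d}$ is a sum of i.i.d.\ $[0,1]$-valued variables and is therefore itself a self-bounding function, so Thm.~4.3 of~\cite{pellegrina2023efficient} gives precisely $p(v^\star) \le \tilde{p}(v^\star)+\sqrt{2\hat{d}\tilde{p}(v^\star)\log(1/\delta)/\ell}+\hat{d}\log(1/\delta)/(3\ell)$ with probability $\ge 1-\delta$. With that substitution your argument is a complete, and arguably simpler, proof of the proposition.
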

The correctness of \percis\ follows by combining all the results proved in this section.
\begin{proposition}
\label{prop:algcorrect}
The output of \percis\ is an $\varepsilon$-approximation of $\{ p(v) , v \in V\}$ with probability $\geq 1-\delta$.
\end{proposition}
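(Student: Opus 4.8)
The plan is to verify that the quantities computed in the first sampling phase of \percis\ (namely $\hat{\rho}$, $\hat{v}$, and $\hat{x}$) satisfy, with high probability, the hypotheses required by Theorem~\ref{thm:sample_size}, and then invoke that theorem to conclude that the second-phase estimates form an $\varepsilon$-approximation. The argument is therefore a union bound over a constant number of ``bad events'', each controlled by one of the propositions proved earlier in this section, with the confidence budget $\delta$ split among them.

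First I would fix the allocation of the failure probability. Inspecting Algorithm~\ref{algo:mainalg}, the line computing $\hat{\rho}$ uses $\log(8/\delta)$, so Proposition~\ref{prop:rhohatbound} (applied with confidence parameter $\delta/4$) guarantees $\sum_{v\in V} p(v) \le \hat{d}\hat{\rho}$ with probability at least $1-\delta/4$; note the algorithm's $\hat{\rho}$ uses $\log(8/\delta)$ and an $\ell_1-1$ in the denominator, so I would check that this is at least as large as the bound of Proposition~\ref{prop:rhohatbound} with $\delta\to\delta/4$, which it is since $\log(8/\delta)\ge\log(2/(\delta/4))$ and $\ell_1-1\le\ell_1$ makes the correction term larger. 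Similarly, the line computing $\hat{v}$ uses $\log(4/\delta)$, so Proposition~\ref{prop:boundmaxvar} with confidence $\delta/4$ gives $\max_v \Vard{\tilde{p}(v)}{q}\le \hat{v}$ with probability at least $1-\delta/4$. Here I also need to handle the truncation $\hat{v}\le\hat{d}^2/4$ that Theorem~\ref{thm:sample_size} requires: the algorithm enforces this by taking $\min\{\hat{d}^2/4,\hat{v}\}$ inside the definition of $\hat{x}$, and since the true maximum variance is always at most $\hat{d}^2/4$ by Lemma~\ref{lemma:variancebound} (the estimator has codomain $[0,\hat{d}]$, so its variance is at most $\hat{d}^2/4$), clipping $\hat{v}$ at $\hat{d}^2/4$ preserves the upper-bound property. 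Consequently $\hat{x}$ as computed equals $\hat{d}/2 - \sqrt{\hat{d}^2/4 - \min\{\hat{d}^2/4,\hat{v}\}}$, matching the definition in Theorem~\ref{thm:sample_size} once $\hat{v}$ is interpreted as the clipped value.

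Next I would condition on the two good events above (which jointly hold with probability at least $1-\delta/2$). On this event, all three hypotheses of Theorem~\ref{thm:sample_size} are satisfied with the computed $\hat{\rho}$, $\hat{v}$, $\hat{x}$, and moreover the sample size $\ell$ computed in line~\ref{alg:firstsamplingend} of the algorithm --- which uses $\ln(4\hat{d}\hat{\rho}/(x\delta))$ in place of the $\ln(2\hat{d}\hat{\rho}/(x\delta))$ appearing in Theorem~\ref{thm:sample_size} --- is at least the value demanded by the theorem with confidence parameter $\delta/2$, since $\ln(4\hat{d}\hat{\rho}/(x\delta)) = \ln(2\hat{d}\hat{\rho}/(x\cdot\delta/2))$. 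Since the second-phase sample $\sample$ is drawn i.i.d.\ from $q$ independently of the first phase, Theorem~\ref{thm:sample_size} (conditionally) yields $|\tilde{p}(v)-p(v)|\le\varepsilon$ for all $v$ with probability at least $1-\delta/2$ over that second sample. A final union bound over the first-phase bad events and the second-phase bad event gives overall success probability at least $1 - \delta/4 - \delta/4 - \delta/2 = 1-\delta$.

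The main obstacle, and the part that requires the most care rather than being purely routine, is bookkeeping the constants and the $\log$-argument factors so that the quantities hard-coded in Algorithm~\ref{algo:mainalg} (the $8/\delta$, $4/\delta$, the $\ell_1-1$ versus $\ell_1$, and the $4\hat{d}\hat{\rho}$ versus $2\hat{d}\hat{\rho}$) genuinely dominate what each cited proposition and Theorem~\ref{thm:sample_size} require under the chosen split of $\delta$. A secondary subtlety is making sure the truncation of $\hat{v}$ at $\hat{d}^2/4$ does not invalidate either the variance upper-bound property (handled via Lemma~\ref{lemma:variancebound}) or the well-definedness of $\hat{x}$ (the square root has a non-negative argument precisely because of the clip). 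Once these checks are in place, the conclusion is immediate from the union bound.
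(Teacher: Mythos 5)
Your proposal is correct and follows the same route as the paper's proof: a union bound combining Theorem~\ref{thm:sample_size} with $\delta/2$ and Propositions~\ref{prop:rhohatbound} and~\ref{prop:boundmaxvar} each with $\delta/4$, checking that the constants hard-coded in Algorithm~\ref{algo:mainalg} match these substitutions. The paper states this in two lines; your version simply makes the constant bookkeeping and the $\min\{\hat{d}^2/4,\hat{v}\}$ clipping explicit, which is a faithful elaboration rather than a different argument.
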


\subsection{Comparison of \percis\ with \unif}
\label{sec:percisanalysiscomparison}

In this section we 
further analyze \percis, 
and we compare it with the standard approach based on drawing samples  according to the uniform distribution, as done by previous works~\cite{Lima_2020,Lima_2022}. 
We note that all our results hold for both algorithms presented in~\cite{Lima_2020,Lima_2022} since they share the same sampling distribution. 
To ease the presentation, we denote such algorithm as \unif. 

First, we prove that, under mild assumptions on the values of the percolation states, the likelihood ratio $\hat{d}$ of our Importance Sampling scheme is bounded by a constant. 
This implies that our sampling scheme provably achieves high-quality approximations in an efficient manner. 
First, we define 
$\Delta = \min_v \max_{s \neq v \neq t}( x_s - x_t) $.
The parameter $\Delta$ implies the following simple fact: 
for every node $v \in V$, 
there exist two nodes $s,t$ with $s \neq v \neq t$, such that $x_s \geq x_t + \Delta$.  
We assume that $\Delta$ is not too small, i.e., that is a constant $\Delta \in \Omega(1)$. 
For instance, consider a graph of $\geq 3$ nodes, with percolation states equal to $1$, $1/2$, and $0$;
in this case, $\Delta = 1/2$.
Note that this assumption is extremely mild, and satisfied by all reasonable instances (for all the networks we considered, it holds $\Delta = 1$). 
Under this setting, we prove the following bound to the likelihood ratio $\hat{d}$ for the importance sampling distribution $q$ used by \percis. 

\begin{proposition}
\label{thm:likelihoodpercis}
When $\Delta \in \Omega(1)$, 
the likelihood ratio $\hat{d}$ of the importance sampling distribution $q$ is $\hat{d} \in \mathcal{O}(1)$. 
\end{proposition}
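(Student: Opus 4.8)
The plan is to bound $\hat{d} = \max_v d_v$ by bounding each $d_v = \max_{s,t : \tilde\kappa(s,t)>0} \frac{\kappa(s,t,v)}{\tilde\kappa(s,t)}$. First I would expand the ratio using the definitions of $\kappa$ and $\tilde\kappa$: both numerators are $R(x_s-x_t)$, so these cancel, and we are left with
\begin{align*}
\frac{\kappa(s,t,v)}{\tilde\kappa(s,t)} = \frac{\sum_{(u,w)\in V\times V} R(x_u-x_w)}{\sum_{u\neq v\neq w} R(x_u-x_w)} .
\end{align*}
Crucially, this ratio no longer depends on $s,t$, so $d_v$ is exactly this quantity for each $v$, and $\hat d = \max_v d_v$. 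The numerator is a fixed constant $S = \sum_{(u,w)} R(x_u-x_w)$, while the denominator $S_v = \sum_{u\neq v\neq w} R(x_u-x_w)$ is $S$ minus all terms involving $v$ as one of the endpoints, i.e. $S_v = S - \sum_{w} R(x_v - x_w) - \sum_{u} R(x_u - x_v)$.

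The key step is to lower bound $S_v$ in terms of $S$ using the assumption $\Delta \in \Omega(1)$. The assumption guarantees that for every $v$ there exist $s,t$ with $s\neq v\neq t$ and $x_s - x_t \geq \Delta$, which forces $S_v \geq \Delta$ (a single surviving term), hence $S_v = \Omega(1)$ as an absolute lower bound. For the numerator, I would observe $S = \sum_{(u,w)} R(x_u - x_w) \leq n^2 \cdot 1$ trivially, but that only gives $\hat d = \mathcal{O}(n^2)$, which is not good enough. The refinement I would use is that $S$ is dominated by $S_v$ up to the terms removed: $S = S_v + \sum_w R(x_v-x_w) + \sum_u R(x_u-x_v)$, and each of the two removed sums is at most $n$ (each term is in $[0,1]$ and there are at most $n$ of them). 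So $S \leq S_v + 2n$, giving $d_v = S/S_v \leq 1 + 2n/S_v$. This still leaves an $n/S_v$ term; to kill it I need $S_v = \Omega(n)$, not just $\Omega(1)$.

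To get $S_v = \Omega(n)$ I would argue more carefully from $\Delta \in \Omega(1)$ combined with the sortedness $x_1 \geq \dots \geq x_n$. The point is that if $x_1 - x_n \geq \Delta$, then for a constant fraction of pairs $(u,w)$ with $u$ in the ``top'' group (states $\geq x_1 - \Delta/3$, say) and $w$ in the ``bottom'' group (states $\leq x_n + \Delta/3$), we have $R(x_u - x_w) \geq \Delta/3$. If both groups have size $\Omega(n)$ we are done, since then $S_v \geq (\text{const})\cdot n^2 \cdot (\Delta/3) \cdot$ minus the $O(n)$ terms removed for $v$, which is still $\Omega(n^2) = \Omega(n)$. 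The subtle case is when one group is small (say only $o(n)$ nodes have large state); then I would instead note that the median state splits the nodes into two halves of size $n/2$ each, and either the gap across the median is $\Omega(1)$ — giving $S_v = \Omega(n^2)$ by the same counting — or essentially all states are concentrated in a window of width $o(1)$, in which case I can use a pigeonhole/averaging argument: $\Delta \in \Omega(1)$ still forces some pair to have a large gap, but combined with concentration this would need a few ``outlier'' nodes whose pairwise differences with the bulk contribute $\Omega(n)$ to $S_v$ (there are $\Omega(n)$ bulk nodes, each contributing $\Omega(1)$ against a fixed high outlier, and removing a single $v$ destroys at most $O(1)$ of these outliers, leaving $\Omega(n)$). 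Putting the cases together yields $S_v = \Omega(n)$ for all $v$, hence $d_v \leq 1 + 2n/S_v = \mathcal{O}(1)$, and taking the max over $v$ gives $\hat d = \mathcal{O}(1)$.

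The main obstacle I anticipate is the case analysis in the last step: showing $S_v = \Omega(n)$ rather than the easy $S_v = \Omega(1)$. The assumption as literally stated ($\Delta = \min_v \max_{s\neq v\neq t}(x_s-x_t) \in \Omega(1)$) guarantees one large-gap pair per node but not immediately $\Omega(n)$ large-gap pairs; the bridge is that a single large-gap pair, together with the total ordering of the $n$ states, propagates into many pairs with gap $\Omega(\Delta)$, and that deleting one vertex $v$ can only remove $O(n)$ terms from a sum that is $\Omega(n^2)$ (or $\Omega(n)$ in the degenerate outlier case). I would present this carefully, since a naive argument only yields $\hat d = \mathcal{O}(n)$, which would not establish the claimed $\mathcal{O}(1)$ bound.
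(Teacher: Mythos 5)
Your setup is exactly the paper's: after cancelling $R(x_s-x_t)$ the ratio $\kappa(s,t,v)/\tilde{\kappa}(s,t)$ no longer depends on $s,t$, so $d_v = S/S_v$ with $S = \sum_{u,w} R(x_u-x_w)$ and $S_v$ the same sum over pairs avoiding $v$, and $S - S_v = \sum_u |x_v - x_u| \leq n$, which reduces everything to showing $S_v \in \Omega(n)$. You correctly identify this as the crux, but the case analysis you propose for it does not go through as stated. The dichotomy ``either the gap across the median is $\Omega(1)$, or essentially all states are concentrated in a window of width $o(1)$'' is false: with equally spaced states $x_i = 1 - i/n$ the gap across the median is $\mathcal{O}(1/n)$ yet the states span all of $[0,1]$ (the conclusion $S_v \in \Omega(n)$ still holds there, but not via either branch of your dichotomy). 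The outlier sub-case is also only sketched, and it is not clear that your cases are exhaustive. So as written there is a genuine gap in the one step that matters.

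The missing ingredient is a one-line observation that makes the case analysis unnecessary. Fix $v$ and let $s,t$ with $s \neq v \neq t$ and $x_s - x_t \geq \Delta$ be the pair guaranteed by the definition of $\Delta$. For \emph{every} other node $u \notin \{s,t,v\}$ (there are $n-3$ of them),
\begin{align*}
R(x_s - x_u) + R(x_u - x_t) \;\geq\; x_s - x_t \;\geq\; \Delta ,
\end{align*}
since if $x_u \in [x_t, x_s]$ the two ramps telescope to exactly $x_s - x_t$, and otherwise one of them alone already exceeds $x_s - x_t$. These pairs all avoid $v$ and are distinct across distinct $u$, and the pair $(s,t)$ itself contributes another $\Delta$, so $S_v \geq (n-2)\Delta \in \Omega(n)$. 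Hence $d_v \leq 1 + n/S_v \leq 1 + \mathcal{O}(1)$ uniformly in $v$, giving $\hat{d} \in \mathcal{O}(1)$. This is exactly the argument the paper uses; with it in place of your final step, the rest of your proposal is correct.
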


The consequence of this result is that the approximations computed by PercIS of the percolation centralities of all nodes are guaranteed to converge rapidly to their expected values, thanks to the guarantees from Theorem~\ref{thm:sample_size}. 
In strong contrast with this positive result, we show that there exist instances (satisfying the same assumption $\Delta \in \Omega(1)$ described above) where the uniform sampling distribution yields a very large likelihood ratio $\Omega(n)$. 
Consequently, there exist instances where the returned approximations by \unif\ are expected to be poor, due to high estimator variance. 

\begin{proposition}
\label{thm:likelihoodunif}
There exist instances with $\Delta \in \Omega(1)$ where the likelihood ratio of the uniform sampling distribution is $\Omega(n)$. 
\end{proposition}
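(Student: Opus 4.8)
The plan is to construct an explicit family of instances, parameterized by $n$, in which the uniform sampling distribution assigns a probability to some ``important'' shortest path that is a factor $\Omega(n)$ smaller than the weight $\kappa(s,t,v)$ that path must carry for some node $v$. Recall that for the uniform scheme a shortest path $\tau_{st}$ is sampled with probability $q_{\unif}(\tau_{st}) = (n(n-1)\sigma_{st})^{-1}$, and the estimator for $p(v)$ reweights by $\kappa(s,t,v) / q_{\unif}(\tau_{st})$; hence the likelihood ratio is of order $\max_{s,t,v} n(n-1) \sigma_{st} \kappa(s,t,v)$. To make this large while keeping $\Delta \in \Omega(1)$, I would design a graph with a small ``core'' of nodes whose percolation states are spread out (so that $R(x_s - x_t)$ is bounded below by a constant for the relevant pair, giving $\Delta = 1$), together with a large number $\Theta(n)$ of additional nodes whose percolation states are all equal (or essentially equal), so that they contribute nothing to the denominator $\sum_{u \neq v \neq w} R(x_u - x_w)$ of $\kappa$.

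First I would fix the construction precisely: take, say, three special nodes $a, b, c$ with $x_a = 1$, $x_b = 1/2$, $x_c = 0$, and let the remaining $n - 3$ nodes all have percolation state $1/2$ (or a tiny perturbation thereof so that $p(v)$ is well defined and $\Delta$ remains $\Omega(1)$). I would route a single shortest path from $a$ to $c$ through a chosen internal node $v$ — for instance making $a$--$v$--$c$ the unique shortest $a$-$c$ path, so $\sigma_{ac} = 1$ and $\sigma_{ac}(v) = 1$. Then $\kappa(a,c,v) = R(x_a - x_c) / \sum_{u \neq v \neq w} R(x_u - x_w)$; the numerator is $1$, and the denominator is a sum over pairs drawn from $\{a,b,c\}$ plus the equal-state bulk, which is $O(1)$ — only finitely many ordered pairs give a positive ramp value. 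Hence $\kappa(a,c,v) = \Omega(1)$, while the uniform sampling probability of the path $a$--$v$--$c$ is $(n(n-1)\cdot 1)^{-1} = \Theta(1/n^2)$. This already gives a likelihood ratio $\frac{\kappa(a,c,v)}{q_{\unif}(a\text{-}v\text{-}c)} = \Omega(n^2)$, which is even stronger than the claimed $\Omega(n)$; if a cleaner $\Omega(n)$ statement with fewer moving parts is desired, one can instead let the special ``high-weight'' structure scale so the ratio is exactly $\Theta(n)$, but proving the stated lower bound only requires exhibiting one instance family where it is $\Omega(n)$, so the above suffices.

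The remaining steps are bookkeeping: verify that the graph is well-formed (connected, the claimed shortest paths are indeed unique shortest paths — easy to arrange with an explicit small gadget of constant size attached to a path or star on the bulk nodes), confirm $\Delta = \min_v \max_{s \neq v \neq t}(x_s - x_t) = \Omega(1)$ by checking that for every node $v$ there is still a pair $s,t$ avoiding $v$ with $x_s - x_t \geq 1/2$ (the pair $(a,c)$ works for all $v \notin \{a,c\}$, and for $v \in \{a,c\}$ one uses a pair among the other two special nodes, e.g. $(a,b)$ or $(b,c)$), and finally read off that the likelihood ratio $\hat d = \max_{s,t,v} \kappa(s,t,v)/\tilde\kappa_{\unif}(\tau_{st}) \geq n(n-1)\sigma_{ac}\,\kappa(a,c,v) = \Omega(n)$.

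The main obstacle, and the part that needs genuine care rather than routine checking, is ensuring simultaneously that (i) the denominator of $\kappa$ stays $O(1)$ — i.e. that adding $\Theta(n)$ bulk nodes does not inflate $\sum_{u \neq v \neq w} R(x_u - x_w)$ — which forces the bulk states to be (nearly) all equal, while (ii) $p(v)$ is still well defined for the witness node $v$ (the ``$\exists s,t: x_s \neq x_t$'' condition of Definition~\ref{def:percolationcentr}), which the three special nodes guarantee, and (iii) $\Delta$ is bounded below for \emph{every} node $v$, not just the witness node. These three constraints pull in slightly different directions, so the construction must place the three special states carefully; once that balancing is done, the lower bound on $\hat d$ is immediate from the formula for the uniform sampling probability.
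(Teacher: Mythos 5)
Your construction is essentially the paper's (the paper uses $x_a=1$, $x_b=0$, $x_c=1/2$ with the $n-3$ bulk nodes at state $0$, and lower-bounds $d_b$ via the pair $(a,c)$), but your analysis of the denominator of $\kappa$ is wrong, and this is exactly the step you flag as ``the main obstacle.'' With bulk state $1/2$ and special states $1,\,1/2,\,0$, it is \emph{not} true that ``only finitely many ordered pairs give a positive ramp value'': every bulk node $z$ yields $R(x_a-x_z)=1/2$ and $R(x_z-x_c)=1/2$, so $\sum_{u\neq v\neq w}R(x_u-x_w)=\Theta(n)$, not $O(1)$. Consequently $\kappa(a,c,v)=\Theta(1/n)$ and the likelihood ratio is $n(n-1)\cdot\Theta(1/n)=\Theta(n)$, not the $\Omega(n^2)$ you claim. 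Worse, the goal you set yourself in item (i) --- keeping the denominator $O(1)$ while $\Delta\in\Omega(1)$ --- is unachievable for \emph{any} instance: if $\Delta\geq\Delta_0$ then there is a pair $s,t$ (avoiding $v$) with $x_s-x_t\geq\Delta_0$, and every other node $u$ satisfies $\max\{|x_s-x_u|,|x_u-x_t|\}\geq\Delta_0/2$, so the denominator is at least $(n-3)\Delta_0/2=\Omega(n)$. This also shows the uniform likelihood ratio is $O(n)$ under the hypothesis $\Delta\in\Omega(1)$, so $\Omega(n)$ is the best one can prove and your ``even stronger'' claim is provably false.

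That said, the error does not kill the proposition: your instance, analyzed correctly, still gives a likelihood ratio $\Theta(n)$, which is all the statement requires, and your verification of $\Delta=1/2$ via the pairs $(a,c)$, $(b,c)$, $(a,b)$ is fine. To repair the write-up, simply replace the claim ``denominator is $O(1)$'' with the correct computation (denominator $=\Theta(n)$ because each bulk node contributes a constant against $a$ and against $c$), and weaken the conclusion from $\Omega(n^2)$ to $\Omega(n)$; this then matches the paper's proof, which computes the denominator as $\tfrac{3}{2}(n-3)+\tfrac12$ and concludes $n(n-1)/\Theta(n)\in\Omega(n)$. Note also that the unique-shortest-path gadget you describe is unnecessary overhead for this proposition: the likelihood ratio $d_v=\max_{s,t}n(n-1)\kappa(s,t,v)$ depends only on the percolation states, not on which shortest paths exist (the graph structure only matters for the companion lower bound on the number of samples, Proposition~\ref{thm:uniflowerbound}).
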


We now further highlight the gap between \percis\ and \unif. 
We prove a \emph{lower bound} to the number of samples needed by \unif: we show that there exist instances where the number of required samples for the uniform distribution, thus the running time of the algorithm, is comparable to the cost of computing the centrality values exactly. In contrast, for the same instances, \percis\ is still efficient since the \emph{upper bound} to the number of samples is linear in the graph size. 
This implies that, in such cases, \percis\ is provably more efficient than the standard approach by a factor at least $n$.

\begin{proposition}
\label{thm:uniflowerbound}
There exist instances with $\Delta \in \Omega(1)$ where at least $\Omega(n^2)$ random samples are needed by \unif, 
while $\mathcal{O}(n)$ random samples are sufficient for \percis. 
\end{proposition}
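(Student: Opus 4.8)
The plan is to exhibit an explicit instance family $(G_n,\{x_v\})$ on which the terminal pair that ``carries'' the percolation centrality of a single designated vertex $v^\star$ is drawn with probability only $\Theta(1/n^2)$ by the uniform scheme but with probability $\Theta(1/n)$ by the importance distribution $q$, while at the same time \emph{no} vertex of $G_n$ has $\Theta(1)$ centrality (so the variance-driven bound of Theorem~\ref{thm:sample_size} stays small for \percis). Concretely, take a clique $K_{n-3}$ on a vertex set $H$ partitioned into two nearly equal halves $S_a \sqcup S_b$, add three vertices $a,v^\star,b$, and put the edges $\{a,v^\star\}$, $\{v^\star,b\}$, all $\{a,s\}$ with $s\in S_a$, and all $\{b,s\}$ with $s\in S_b$; set $x_a = 1$, $x_b = 0$, and $x_u = 1/2$ for every $u \in H\cup\{v^\star\}$.

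First I would record the structural facts I need. (i) Since $S_a\cap S_b=\emptyset$, the unique common neighbour of $a$ and $b$ is $v^\star$, so $a$--$v^\star$--$b$ is the \emph{unique} shortest $a$-$b$ path; moreover for every other ordered pair $(s,t)$ there is a route through $H$ strictly shorter than any route crossing the $a$--$v^\star$--$b$ bottleneck, hence $v^\star$ is internal to \emph{no} shortest path other than the $a$-$b$ one (the reverse $b$-$a$ path has $R(x_b-x_a)=0$). Thus $p(v^\star)=\kappa(a,b,v^\star)=1/\!\sum_{u\neq v^\star\neq w}R(x_u-x_w)=\tfrac{1}{n-2}$. (ii) A short case analysis over pair types then shows that every vertex has centrality $\Theta(1/n)$: a vertex $v\in S_a$ is internal only to $2$-hop paths $a$--$v$--$t$ with $t\in S_b$, of which there are $|S_a|$ for each $t$, so $p(v)=\sum_{t\in S_b}\tfrac{1}{|S_a|}\kappa(a,t,v)=\Theta(1/n)$; symmetrically for $S_b$, and $p(a)=p(b)=0$. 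Consequently $\hat d = \bigl(\sum_{u,w}R\bigr)/\min_v\bigl(\sum_{u\neq v\neq w}R\bigr) = \tfrac{2(n-1)}{n-2}=\Theta(1)$, $\max_v\Vard{\tilde p(v)}{q}\le \hat d\max_v p(v)=\Theta(1/n)$ by Lemma~\ref{lemma:variancebound}, the vertex diameter is $1$ and $\rho\le 1$. (iii) $\Delta=\min_v\max_{s\neq v\neq t}(x_s-x_t)=1/2=\Omega(1)$.

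Fixing the accuracy $\varepsilon=\tfrac{1}{2(n-2)}=\Theta(1/n)$, the upper bound for \percis\ follows by instantiating Theorem~\ref{thm:sample_size}, Lemma~\ref{lemma:sumofpercs} and Proposition~\ref{prop:algcorrect} with $\hat v=\Theta(1/n)$, $\hat\rho=\mathcal{O}(1)$, $\hat d=\Theta(1)$: the bound evaluates to $\ell=\mathcal{O}(n)$ (up to a logarithmic term). For the lower bound on \unif, note that an $\varepsilon$-approximation must report a value for $v^\star$ within $\tfrac{1}{2(n-2)}$ of $p(v^\star)=\tfrac{1}{n-2}$, hence a nonzero value; but by fact (i), $v^\star$ is an internal vertex \emph{with nonzero weight} of a shortest path examined by \unif\ only when the sampled terminal pair is exactly $(a,b)$ — this holds for the single-path estimator of~\cite{Lima_2020} and the all-paths estimator of~\cite{Lima_2022}, since $v^\star$ lies on no shortest path of any other weighted pair. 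Under the uniform distribution $(a,b)$ is drawn with probability $\tfrac{1}{n(n-1)}$ per sample, so after $\ell$ i.i.d.\ draws it is missed with probability $(1-\tfrac{1}{n(n-1)})^\ell\ge e^{-2\ell/n^2}$; on that event \unif\ necessarily outputs $0$ for $v^\star$, an error of $\tfrac{1}{n-2}>\varepsilon$. Hence to succeed with probability $\ge 1-\delta$ the uniform scheme must use $\ell\ge\tfrac{n^2}{2}\ln(1/\delta)=\Omega(n^2)$ samples, which gives the claimed $\Omega(n)$-factor separation.

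The hard part is the instance design: the two requirements are in tension. To make uniform sampling waste a $1-\Theta(1/n^2)$ fraction of its draws, $v^\star$ must sit on the \emph{unique} shortest path of a \emph{single} weighted pair; but if any vertex — $v^\star$ itself, or a ``gateway'' adjacent to $a$ or $b$ — had centrality $\Theta(1)$, then resolving it to the additive accuracy $\varepsilon=\Theta(1/n)$ that makes the \unif\ bound bite would force \percis\ to use $\Omega(1/\varepsilon^2)=\Omega(n^2)$ samples as well, losing the variance refinement $\hat v=\Theta(1/n)$. Routing $a$ and $b$ into a dense clique through \emph{disjoint} neighbourhoods reconciles the two: it keeps $a$--$v^\star$--$b$ the unique $2$-hop $a$-$b$ path while leaving the rest of the graph so well connected that no vertex captures more than a $\Theta(1/n)$ share of the bounded total $\sum_v p(v)\le\hat d\rho=\mathcal{O}(1)$. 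Verifying item (ii) for all pair types is the bulk of the routine work.
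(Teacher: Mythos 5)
Your construction and overall strategy mirror the paper's: isolate a vertex $v^\star$ whose entire positive-weight centrality is carried by a \emph{single} terminal pair, so that \unif\ must hit that pair (probability $\Theta(1/n^2)$ per draw, hence $\Omega(n^2)$ samples before the estimate can even be nonzero), while the importance distribution $q$ hits it with probability $\tilde{\kappa}(a,b)=\Theta(1/n)$; then invoke Theorem~\ref{thm:sample_size} with $\hat{d}=\Theta(1)$ and $\hat{v}=\Theta(1/n)$ for the \percis\ side. The paper's gadget is a directed path $a\to b\to c$ attached to a strongly connected blob of zero-state vertices (so that all weighted pairs other than $(a,b),(a,c)$ are simply disconnected); yours is an undirected clique with split neighbourhoods. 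Your $\Omega(n^2)$ lower bound for \unif\ is correct, and the explicit $(1-p)^\ell\ge e^{-2p\ell}>\delta$ computation is, if anything, more careful than the paper's appeal to the mean of a geometric random variable.

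The gap is on the \percis\ side. In your instance every vertex of $H$ has $p(v)\approx\frac{1}{2(n-2)}=\varepsilon$, so $\sum_v p(v)=\Theta(1)$, hence $\hat{\rho}=\Theta(1)$ and the logarithm $\ln(\hat{d}\hat{\rho}/\hat{v})$ in Theorem~\ref{thm:sample_size} is $\Theta(\log n)$: the bound evaluates to $O(n\log n)$, not the claimed $O(n)$. This is not merely slack in the theorem. With $\ell=cn$ samples, each $v\in H$ receives a $\mathrm{Bin}(cn,\Theta(1/n))$ number of contributing hits with constant mean, and exceeding roughly twice that mean already violates $|\tilde{p}(v)-p(v)|\le\varepsilon$ because $p(v)\approx\varepsilon$; for constant $c$ this happens with probability bounded away from zero, and over $n$ essentially independent vertices some will fail w.h.p. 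So $O(n)$ samples provably do \emph{not} suffice on your instance, and the ``up to a logarithmic term'' caveat cannot be removed there. The paper sidesteps this by arranging that \emph{exactly one} vertex has positive centrality (every other vertex has $p(v)=0$ and is therefore estimated exactly, by zero, with probability one), which gives $\hat{\rho}=\Theta(1/n)$, a constant logarithm, and a genuine $O(n)$ bound. Your instance would need an analogous repair — concentrating the positive centrality on $O(1)$ vertices, e.g.\ by disconnecting (or orienting away) the pairs routed through the clique — before the second half of the proposition is established.
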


\section{Experimental Evaluation}\label{sec:experimental}
In this section, we summarize the results of our experimental study on approximating
the percolation centrality in real-world networks. 
Our main goal is to compare \percis\ with \unif, 
and to apply \percis\ to analyze real-world labelled networks.
\subsection{Networks and Experimental Setting}\label{sec:experimental_setting}
We evaluate all algorithms on graphs from several domains. 
We mainly test \percis\ and \unif\ on large real-world graphs from 
SNAP ({\url{http://snap.stanford.edu/data}}). 
We report their statistics in Table~\ref{tab:datasets} (in the appendix). 
Since these graphs are unlabelled, and the percolation states of the nodes are not available, we consider four different settings inspired to practical scenarios. 
Doing so, we robustly test the algorithms over a wide choice of states' distributions:
\begin{description}
	\item[Random Seeds (RS):] We select a constant number of nodes uniformly at random and assign them percolation state equal to $1$, while all other nodes have percolation state $0$. This experiment simulates scenarios in which there are initial seeds that contracted an infection, or equivalently, users in posses of a piece information to distribute over the network. The goal is to identify important nodes for the percolation over the network at its very first stages. The choice of the number of seeds (in our case, $50$) is similar to parameters used by previous works  (e.g.,~\cite{kempe2003maximizing}). 
	\item[Random Seeds Spread (RSS):] We select a set $K$ of $\log n$ random seeds, assign them percolation state $1$ and execute a BFS from each seed $s \in K$. To every node $v$ we assign the percolation state $x_v = \max_{s \in K}\{1/4^{d(s,v)}\}$, where $d(s,v)$ is the shortest path distance from $s$ to $v$. 
This experiment models the risk of a diffusion from $K$, setting the states of the nodes to the chance that they are reached (assuming each node spreads to his neighbors with probability $1/4$);
in such a setting, central nodes could be relevant for the percolation over the network at an intermediate diffusion state.
\item[Isolated Component (IC):] Given the input graph $G$, we add a (strongly) connected path $P$ of constant length ($50$ nodes). 
When $G$ is directed, we include an edge from a random node of the largest connected component of $G$ 
with the first node of $P$; otherwise, if $G$ is undirected, $P$ forms an isolated connected component.  
	 We set the percolation states of half of the nodes in $P$ (i.e., $25$ nodes) to $1$ and the rest to $0$. This setting is similar to the ``worst-case'' instance used in the proof of Proposition~\ref{thm:uniflowerbound}, so it is useful to empirically verify the theoretical gap between \percis\ and \unif. 
	\item[Uniform States (UN):] We assign a uniform random value in $[0,1]$ to each percolation state, as done by~\cite{Lima_2020,Lima_2022}. 
\end{description}
Then, as a case study and application of \percis\ to large Labelled Network (LN) analysis, we considered the labeled social networks in~\cite{garimella2018political,Preti_2025} 
	and video recommendation networks from~\cite{ribeiro2020auditing,mamie2021anti,coupette2023reducing} (see Table~\ref{tab:datasets_labeled} in the appendix for the statistics). 
	For social networks, each node has a categorical attribute encoding its opinion w.r.t. a polarizing topic (e.g., pro vs. against gun control); we selected graphs on which there are only two opinions and interpreted them as percolation states $x_v \in \{0,1\}$. 
	In this application, central nodes are the ones that connect users of opposing views, i.e., acting as bridges between the two polarized communities. 
	For recommendation networks, we consider the Youtube graph, where a node is a video and edges are to-watch-next recommendations. Each node $v$ has an attribute $r_v$ (a real-valued score $r_v \in [0,1]$) representing the content's level of radicalization. 
	For this application, we assign to each $v$ the percolation state $x_v = 1-r_v$ (i.e., lower state to more harmful videos). 
	Under this setting, central nodes 
	may be important for radicalization pathways, i.e., they allow reaching harmful content from safe nodes. 

We implemented all the algorithms in \texttt{C++}.
The code is available at \url{https://github.com/Antonio-Cruciani/PERCIS}.
All experiments have been executed on a server running Rocky Linux 8.10 equipped with AMD Epic 7413 processor for overall 30 cores and 498 GB of RAM. For the exact values, we implemented the algorithm 
in~\cite{Chandramouli_2021}, 
which is an extension of Brandes' algorithm~\cite{brandes2001faster} to percolation centrality. 
We fix $\delta = 0.05$, 
and report averages and stds over 10 runs.

\subsection{Experimental Results}
\paragraph{Guarantees of \unif} 
As a first experiment, we 
empirically verify the theoretical guarantees of \unif, 
the approximation method of~\cite{Lima_2020,Lima_2022} 
based on the uniform sampling distribution; 
our goal is to test whether \unif\ actually computes accurate approximations of the percolation centralities $\{ p(v) , v \in V \}$. 
Recall that, as discussed in Section~\ref{sec:percapproxprelims}, 
\unif\ provides guarantees w.r.t. the doubly normalized percolation score $p^*(v) = \frac{p(v)}{n(n-1)}$, 
but not necessarily on the measure $p(v)$ given in Definition~\ref{def:percolationcentr}. 
For this experiment we vary the accuracy parameter $\varepsilon \in \{ 0.05,0.01,0.005,0.001,0.0005 \}$; 
then, for each $\varepsilon$, we compute the number of samples $\ell$ according to the bound $\mathcal{O}(\ln(D / \delta) / \varepsilon^2)$ of~\cite{Lima_2020} (that is also an upper bound to the number of samples for the algorithm in~\cite{Lima_2022}); 
we draw $\ell$ samples from the uniform distribution, 
we compute the estimates $\{ \tilde{p}^*(v) , v \in V \}$, 
and then evaluate the 
Maximum (absolute) Error (ME) $\xi^{\text{\textsc{U}}} = \max_v | n(n-1)\tilde{p}^*(v) - p(v) |$. 
Note that we achieve an $\varepsilon$-approximation 
when it holds $\xi^{\text{\textsc{U}}} \leq \varepsilon$. 
Figure~\ref{fig:maxerrorsmain} shows the results for the RS, RSS, and UN experimental settings. 
While for the UN setting the MEs are usually below $\varepsilon$, 
for the other settings the errors are significantly larger than $\varepsilon$ (points above the diagonal). 
This experiment confirms that, in general, a 
sample that yields a good approximation of 
$\{ p^*(v) , v \in V \}$
is not necessarily valid for an accurate approximation of the percolation scores 
$\{ p(v) , v \in V \}$. 
Therefore, we clearly conclude that \unif\ does not provide sound theoretical guarantees for the problem of approximating the percolation centralities $\{ p(v) , v \in V \}$.

\begin{figure*}[htb!]
\centering
	\includegraphics[width=\linewidth]{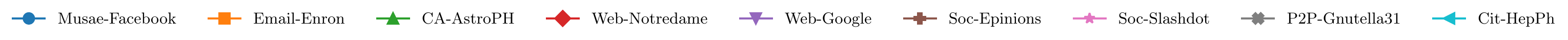}
	\begin{subfigure}{0.22\textwidth}
		\includegraphics[width=\linewidth]{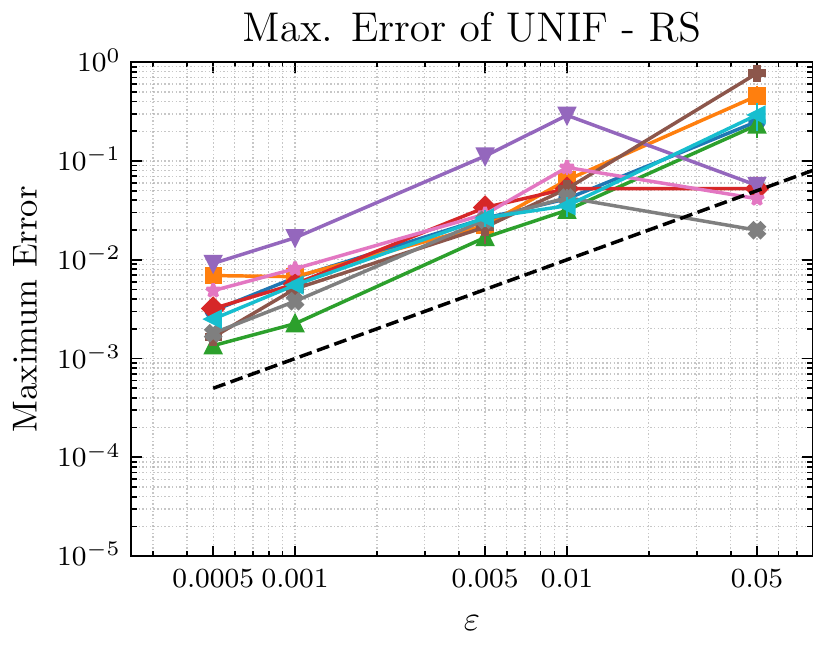}
		\caption{}\label{fig:sd_pseudo_ri}
	\end{subfigure}
	\begin{subfigure}{0.22\textwidth}
		\includegraphics[width=\linewidth]{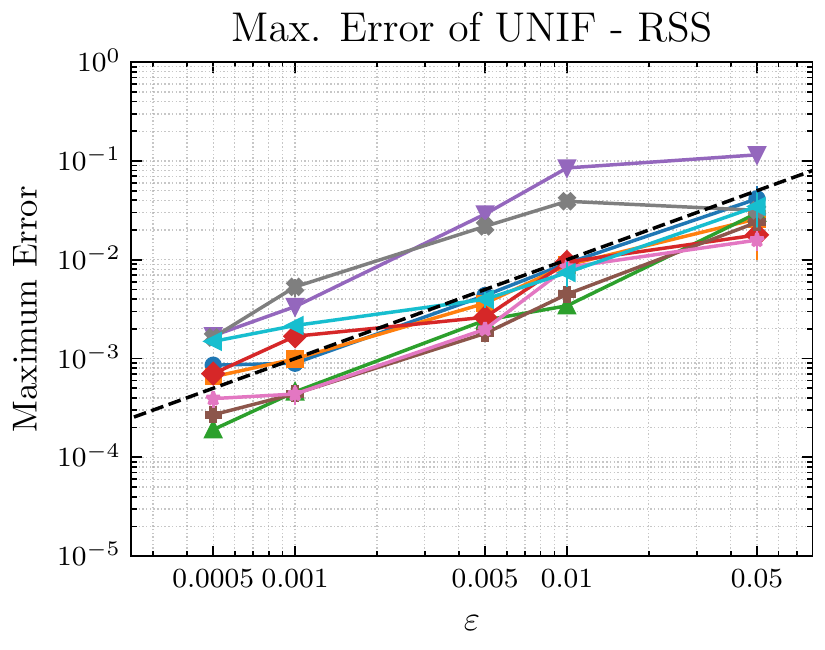}
		\caption{}\label{fig:sd_pseudo_is}
	\end{subfigure}
	\begin{subfigure}{0.22\textwidth}
		\includegraphics[width=\linewidth]{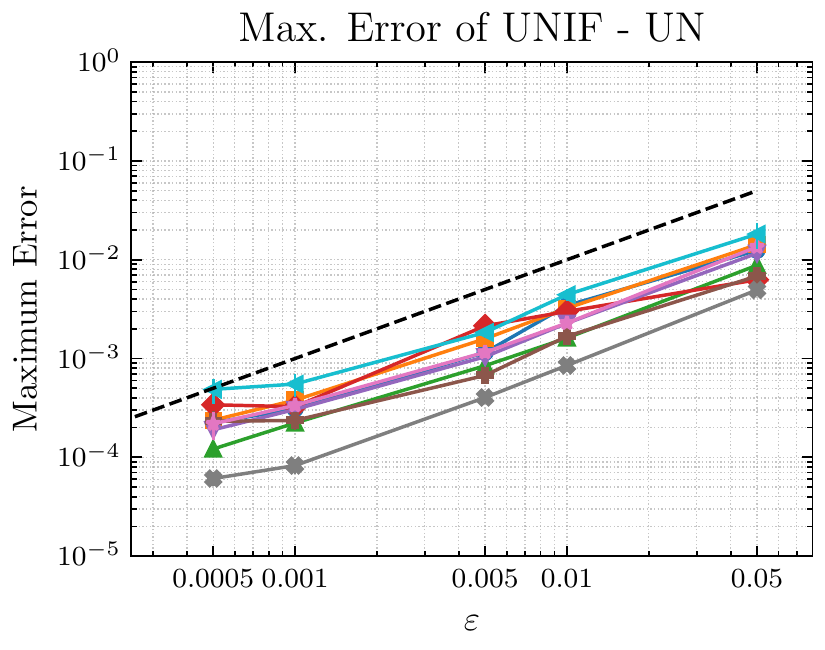}
		\caption{}\label{fig:sd_pseudo_un}
	\end{subfigure}
	\begin{subfigure}{0.22\textwidth}
		\includegraphics[width=\linewidth]{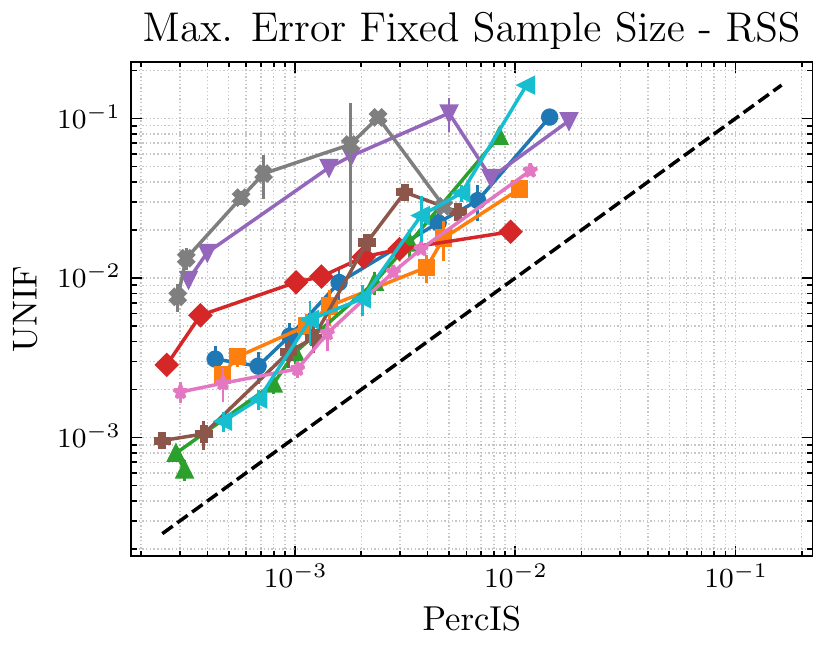}
		\caption{}\label{fig:sd_spread}
	\end{subfigure}
	\caption{(a-c): Maximum Error of \unif\ on random samples of size $\mathcal{O}(\ln(D/\delta)/\varepsilon^2)$ (the bound of~\cite{Lima_2020}) for $\varepsilon\in[0.05,0.0005]$.
	(d): Maximum Errors of \percis\ ($x$ axes) and \unif\ ($y$ axes) on random samples of fixed sizes $\ell \in [10^3, 10^6 ] $ for RSS (other settings shown in Figure~\ref{fig:sd_fixed_ss_appendix} in the appendix).
	}\label{fig:maxerrorsmain}
	\vspace{-16px}
\end{figure*}

\paragraph{Maximum Error on fixed sample sizes}\label{sec:max_error} 
We now compare the ME $\xi^{\text{\textsc{U}}}$ obtained by 
\unif\ with the 
ME $\xi^{\text{\textsc{P}}} = \max_v | \tilde{p}(v) - p(v) |$ 
of \percis\
using the same number $\ell$ of random samples. 
In this way, we directly asses the estimation accuracy of our novel 
importance distribution, introduced in Section~\ref{sec:importancesampling}. 
For this comparison, we fix the sample size $\ell$ for both distributions 
to the values 
$\ell\in \{10^3,5\cdot 10^3,10^4,5\cdot 10^4,10^5,5\cdot 10^5,10^6\}$;
for each $\ell$, we draw $\ell$ random samples and compute the MEs~$\xi^{\text{\textsc{U}}}$ and $\xi^{\text{\textsc{P}}}$. 

Figure~\ref{fig:maxerrorsmain} shows the results for these experiments for the RSS setting (other settings are similar and shown in Figure~\ref{fig:sd_fixed_ss_appendix} in the appendix due to space constraints). 
Note that each plot shows the MEs of \unif\ ($y$ axes) and \percis\ ($x$ axes), where each point of the plot corresponds to a value of $\ell$, while the diagonal black line is at $y=x$. 
From these figures we observe that the importance distribution of \percis\ consistently and significantly outperforms the uniform distribution on \emph{every} graph and \emph{every} setting, 
with an improvement on the ME of up to two orders of magnitude. 
Figure~\ref{fig:sd_component} (in the appendix) shows more detailed results for the IC setting for the largest graph (Web-Google); here the comparison is done between the exact values ($y$ axes) and the random estimates ($x$ axes). 
In this figure, it is clear that, even by using a very large number of random samples ($\ell = 10^6$), \unif\ does not provide any reasonable result (as it reports all estimates equal to $0$), while \percis\ reports all values very close to the exact scores. 

These findings clearly show that the importance distribution used by \percis\ is vastly superior in terms of accuracy of the estimates. This is a consequence of the refined theoretical properties of the Importance Sampling distribution proved in previous sections, and 
the large variance that characterizes the uniform distribution on natural realistic instances.
Such issues arise from ignoring the percolation states of the nodes when drawing random shortest paths.

\paragraph{Target Maximum Error}
In this third experiment, we compare \percis\ with \unif\ by measuring the number of samples required to approximate percolation centrality within a prescribed error bound $\varepsilon$. More precisely, we run both approximation algorithms at increasing sample sizes until the ME is $\leq \varepsilon$.
To ensure a meaningful $\varepsilon$-approximation, we set $\varepsilon$ to be at most the maximum exact percolation centrality scores. Therefore, for each graph and setting, we set $\varepsilon = (1/k)\cdot \max_{v\in V}p(v)$ for $k\in\{2,4,5,10\}$. 
To measure the required samples we use the following procedure: we start with $\ell =10^3$ samples, compute the ME $\xi$, and check whether $\xi\leq \varepsilon$. If not, we increment the sample size by $10^3$, and repeat. The experiment terminates either when $\xi\leq \varepsilon$ or when the total number of samples reaches $10^8$.

We report the results in Figure~\ref{fig:target_ss_sd}. The plots show the sample sizes needed by \percis\ ($x$-axes) and \unif\ ($y$-axes) to reach an ME of at most $\varepsilon$. 
We clearly observe that the importance distribution of \percis\ consistently achieves $\varepsilon$-approximations using significantly fewer samples across all experimental settings. In contrast, \unif\ generally requires sample sizes of several orders of magnitude larger compared to \percis, especially in the RS and IC settings.
Often, \unif\ hits the $10^8$ cap without achieving the target $\varepsilon$-approximation. 
In the IC setting, \unif\ fails entirely: for many instances it always reaches $10^8$ samples with $\xi^{\text{\textsc{U}}} \gg \varepsilon$,
while \percis\ always converges after at most $2 \cdot 10^6$ samples. This observation empirically confirms the theoretical lower bound for \unif\ proved by Proposition~\ref{thm:uniflowerbound} and the corresponding strong gap w.r.t. \percis. 
Moreover, 
we remark that the number of random samples at which the procedure stops is 
a \emph{lower bound} to the number of samples that any approach would need to guarantee an $\varepsilon$-approximation; 
the observed behavior rules out any improvement that may be achieved with more advanced techniques (e.g., Rademacher Averages~\cite{Pellegrina_2023}) when used with the uniform distribution. 
In strong contrast, the importance distribution used by \percis\ is much more accurate and always converges at practical sample sizes.

\begin{figure*}[htb!]
\centering
	\includegraphics[width=\linewidth]{./img/experiments/shared_legend}
	\begin{subfigure}{0.22\textwidth}
		\includegraphics[width=\linewidth]{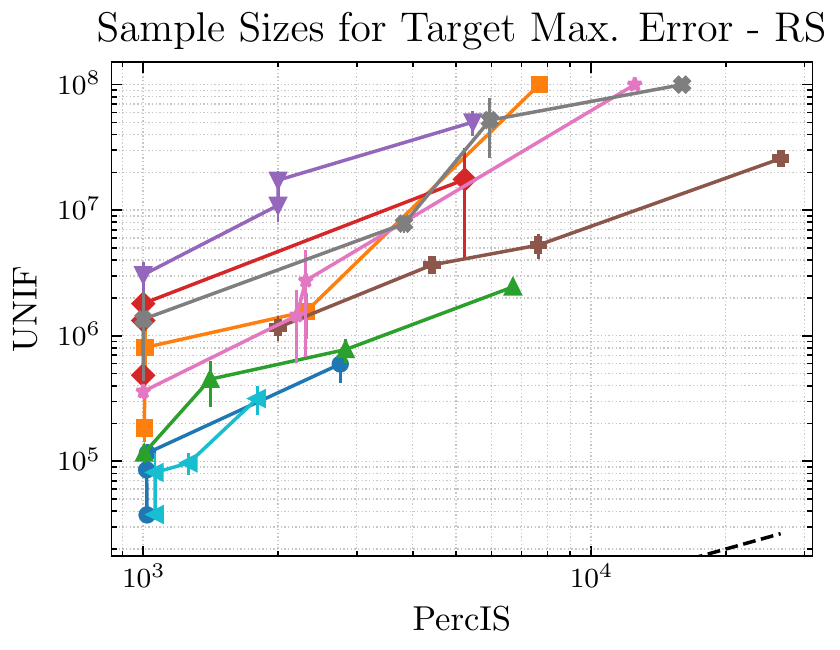}
	\end{subfigure}
	\begin{subfigure}{0.22\textwidth}
		\includegraphics[width=\linewidth]{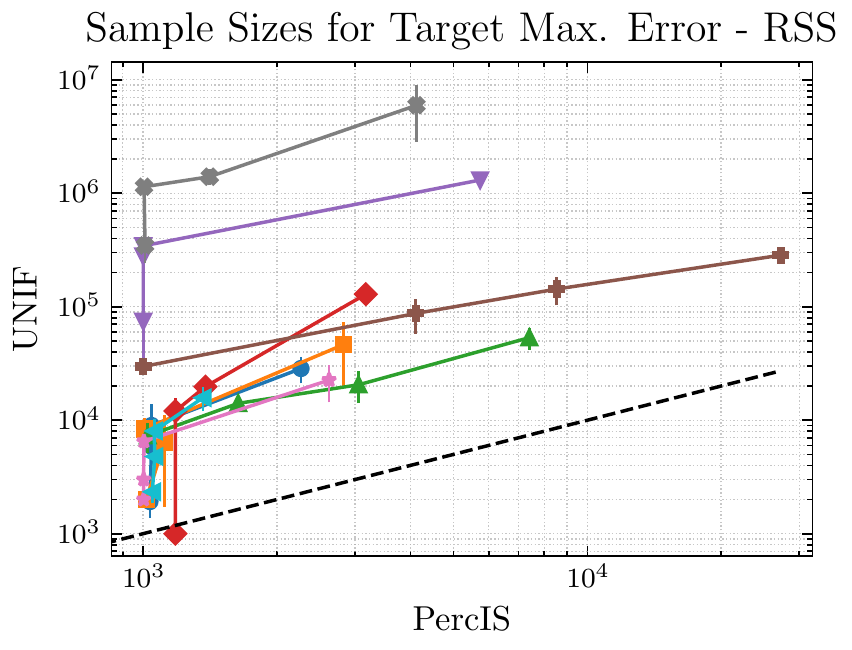}
	\end{subfigure}
	\begin{subfigure}{0.22\textwidth}
		\includegraphics[width=\linewidth]{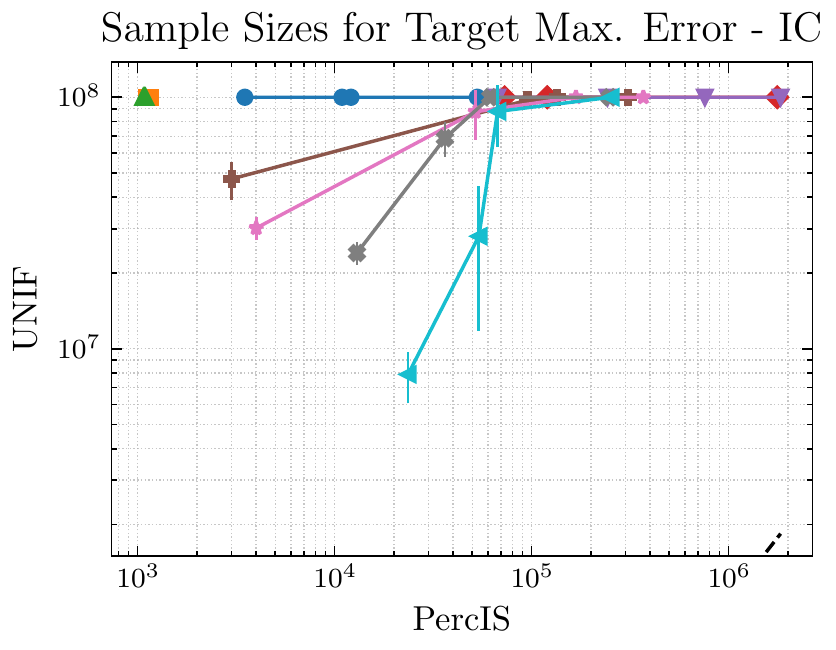}
	\end{subfigure}
	\begin{subfigure}{0.22\textwidth}
		\includegraphics[width=\linewidth]{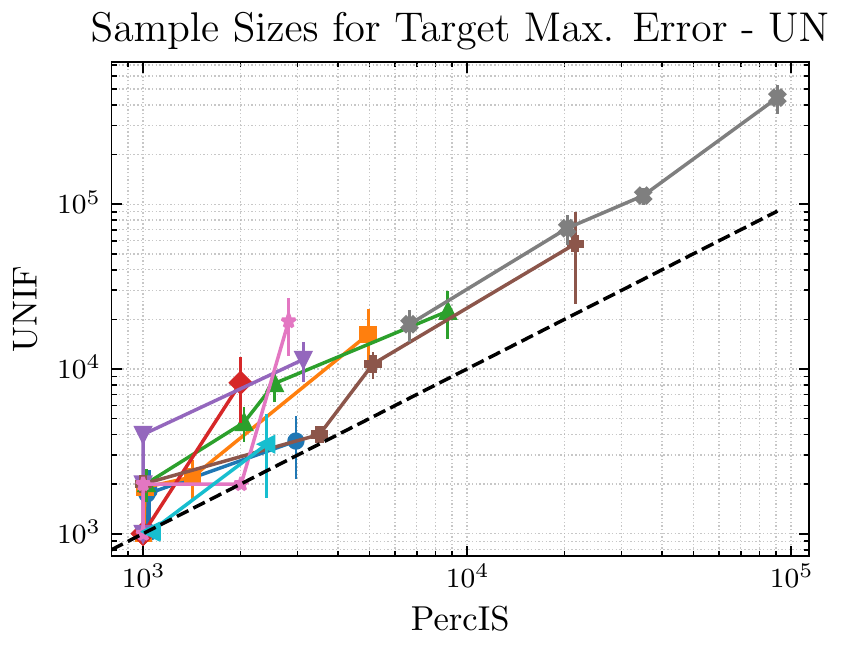}
	\end{subfigure}
	\caption{
	Sample sizes required to obtain a Maximum Error $\leq \varepsilon$ by
	\unif\ ($y$ axes) and \percis\ ($x$ axes). 
\vspace{-10px}
	}
	\label{fig:target_ss_sd}
\end{figure*}

\begin{figure*}[htb!]
\centering
	\begin{subfigure}{0.22\textwidth}
		\includegraphics[width=\linewidth]{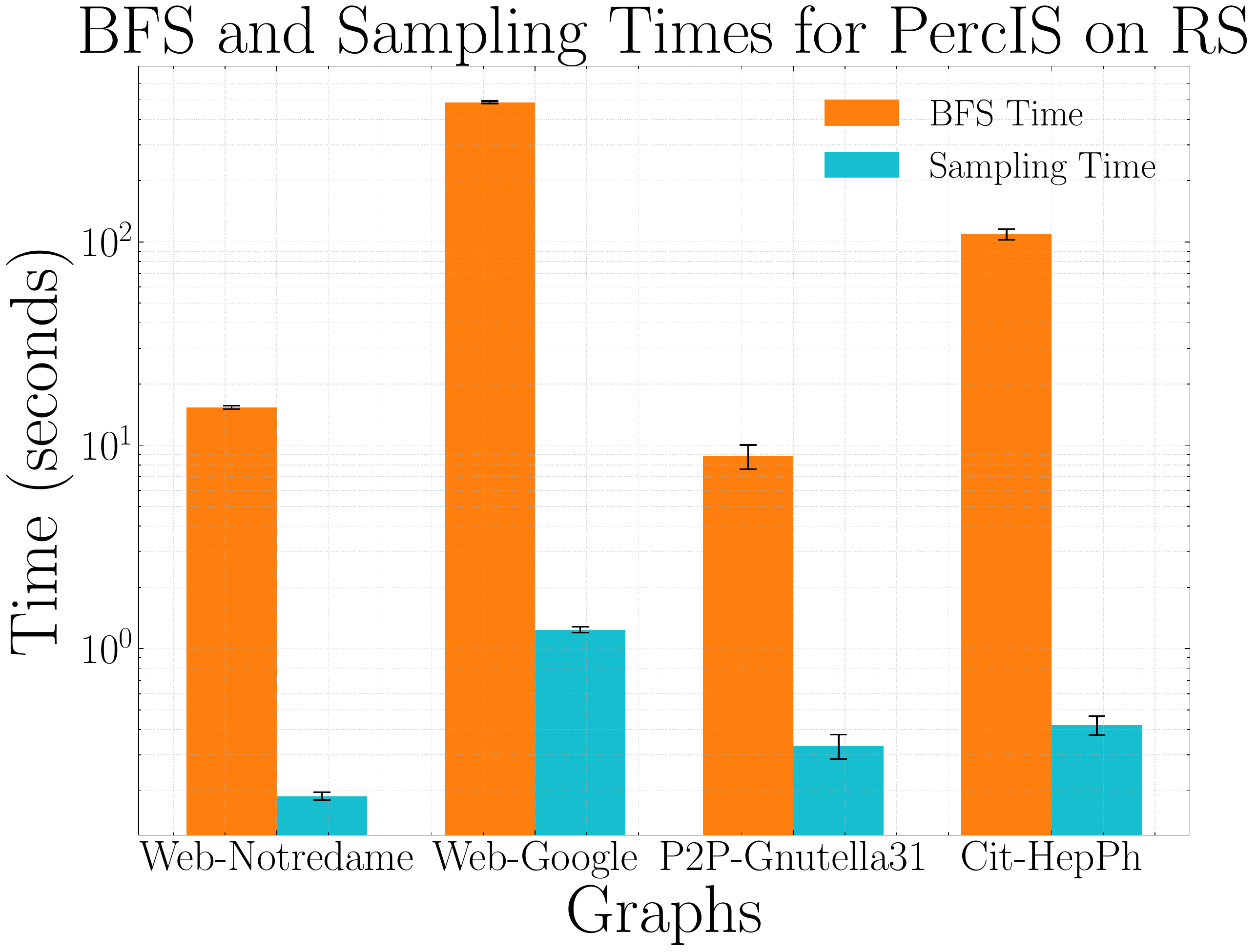}
		\caption{}\label{fig:bfs_sampling_time_rnd_init}
	\end{subfigure}
		\begin{subfigure}{0.22\textwidth}
		\includegraphics[width=\linewidth]{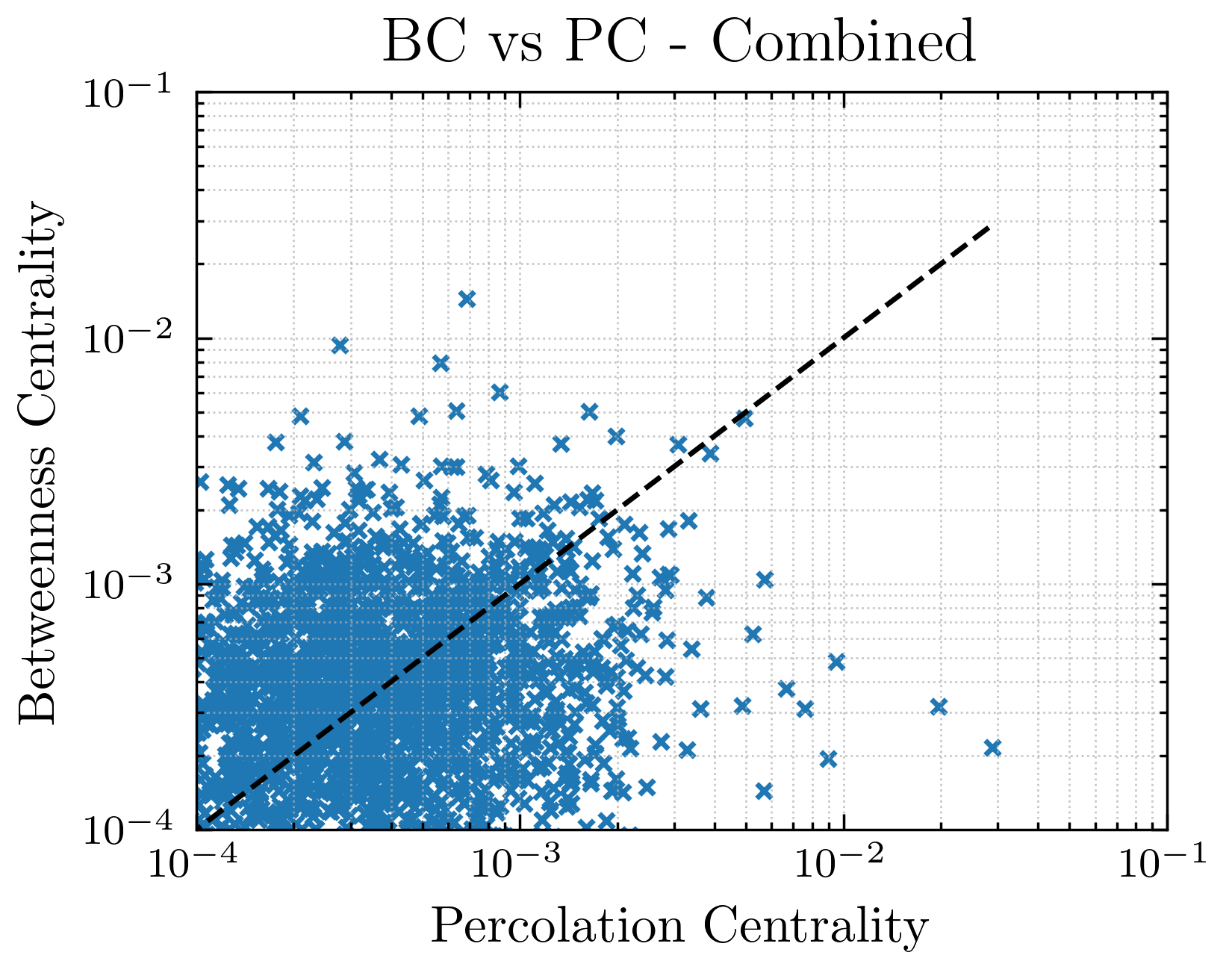}
		\caption{}\label{fig:pc_vs_bc_comb}
	\end{subfigure}
	\begin{subfigure}{0.22\textwidth}
		\includegraphics[width=\linewidth]{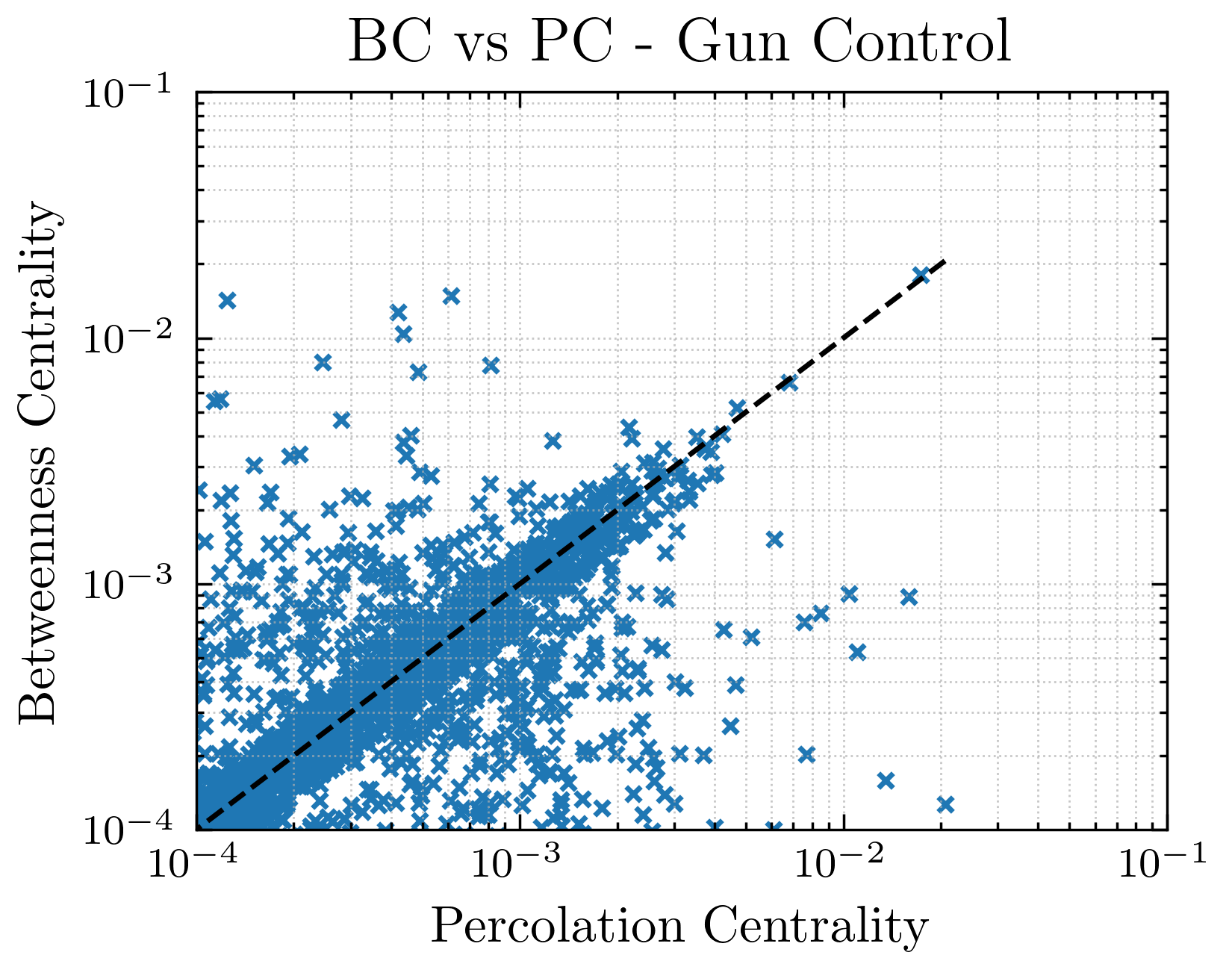}
		\caption{}\label{fig:pc_vs_bc_gun}
	\end{subfigure}
	\begin{subfigure}{0.22\textwidth}
		\includegraphics[width=\linewidth]{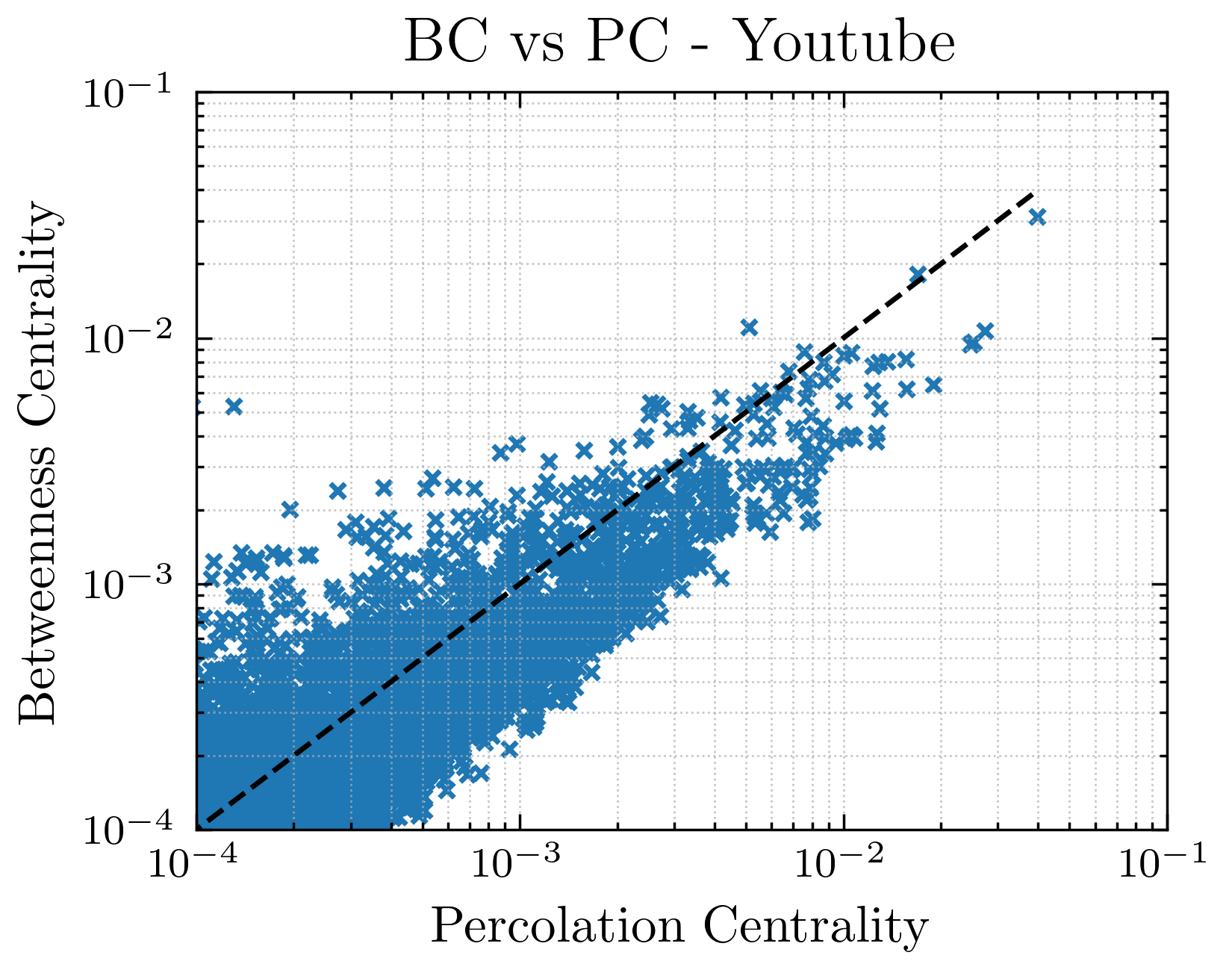}
		\caption{}\label{fig:pc_vs_bc_youtube}
	\end{subfigure}
	\caption{(a) Overall time needed by the bidirectional BFSs (orange) and Importance Sampling (blue), for $\ell = 10^6$ samples.
	(b-d): Percolation Centrality and Betweenness Centrality scores of the most central nodes from Labeled Networks.
	\vspace{-12px}
	}\label{fig:bfs_sampling_time_realworld}
\end{figure*}

\paragraph{Running Times}
In this experiment we evaluate the running time of \percis\ and compare it with \unif\ to verify the overhead of the Importance Sampling scheme. 
As before, we run the algorithms with a fixed number of samples $\ell\in [ 10^3, 10^6 ] $ and measure the running times. 
We remark that both methods use the same algorithm to traverse the graph (the bidirectional BFS), so any difference between the two methods depends on the sampling distribution. 
In Figure~\ref{fig:time_settings} (in the appendix), we compare the running times of \percis\ with \unif, 
observing that the two methods have comparable running times. 
In Figure~\ref{fig:bfs_sampling_time_realworld} we compare the total time needed by \percis\ for executing the BFSs and for sampling pairs of nodes $s,t$ from the importance distribution, 
on $10^6$ samples and RS (other plots are similar, shown in Fig.~\ref{fig:bfs_sampling_time_appendix} in the appendix). 
We note that the sampling time is orders of magnitude smaller than the BFSs.
These results confirm that 
the overhead of Importance Sampling is negligible.

Moreover, \percis\ is significantly more efficient than the exact algorithm. On the two largest graphs (Web-Notredame and Web-Google) and the most demanding setting (UN states, $10^6$ samples), \percis\ completes in $6.5$ and $18.3$ seconds respectively. In contrast, the exact algorithm requires $3203$ and $86914$ seconds, yielding speedups of approximately $\times 492$ and $\times 4829$. These results highlight not only the efficiency of \percis\ but also the impracticality of running the exact algorithm in dynamic settings (e.g., where the graph or the percolation states may evolve over time due to the spread of an infection) or to handle larger instances, typical of  applications.

\paragraph{\percis\ sample size bound}
In this experiment we evaluate the impact to the sample size $\ell$ of the data-dependent estimates $\hat{\rho}$ and $\hat{v}$ computed in the first phase of Algorithm~\ref{algo:mainalg}. 
To this end, 
we vary the accuracy parameter 
$\varepsilon\in \{0.05,0.01,0.005,0.001,0.0005\}$
and compute the bound $\ell$ using \percis\ and a variant of \percis: 
this approach skips the first phase, 
and uses much simpler values for the parameters $\hat{\rho}$ and $\hat{v}$.
More precisely, 
it sets $\hat{\rho} = D$ (where $D$ is an approximation of the vertex diameter) and $\hat{v} = \hat{d}^2/4$, i.e., its maximum possible value. 
We denote this variant as \percis-DI, since it uses a \emph{data independent} bound and does not take into account key properties of the input instance. 
We observe that, for \percis-DI, $\ell = \mathcal{O}(\ln(D/\delta)/\varepsilon^2)$, which is similar to the one proved by previous works~\cite{Riondato_2016,Lima_2020}. 
Furthermore, the likelihood ratio $\hat{d}$ of \percis\ is small and close to $1$ in all cases (e.g., it was always $\hat{d} \leq 1.05$). 

In Figure~\ref{fig:ss_vc_vs_rho} we compare the sample sizes bounds $\ell$ of 
\percis\ and \percis-DI for all experimental settings. 
In general, the data-dependent upper bound is always sensibly smaller than the DI one, up to $3$ orders of magnitude (for the IC setting, Figure~\ref{fig:ss_comp}). 
We conclude that our data-dependent scheme is very effective, and provides significantly tighter bounds w.r.t. 
previous works.

\paragraph{Application of \percis\ to labeled networks}
In this final experiment, we apply \percis\ to Labeled Networks (LN) from online social media and video recommendations. 
In Figure~\ref{fig:bfs_sampling_time_realworld} we compare the betweenness and percolation scores using an $\varepsilon$-approximation of both measures ($\varepsilon = 10^{-4}$) on LN. The results reveal a clear divergence between the two measures, which also impacts the \emph{ranking} of the most central nodes.
Unlike betweenness, percolation centrality is sensible to the percolation states of the nodes, thus it has more potential to identify nodes that play a crucial role in information spread or contagion propagation in a network.
In the applications we considered, percolation centrality may be more effective in identifying nodes that connect users with opposing views, or in flagging content that form radicalization pathways.

\section{Conclusion}
In this work we presented \percis, a new algorithm for approximating the percolation centrality of all nodes of a graph. 
\percis\ is based on a new Importance Sampling distribution, which is sensible to the percolation states of the nodes, 
and an efficient sampling scheme that
incurs in a negligible overhead compared to standard methods. 
Our analysis features new sample complexity bounds, and highlights key limitations of previous techniques, which provide much looser guarantees. 
We showed that uniform sampling approaches are not guaranteed to obtain high-quality approximations efficiently, as we proved strong lower bounds to their performance. 
We tested \percis\ on large real-world networks, under several experimental settings,
observing that it consistently outperforms the state-of-the-art in terms of accuracy and required resources.
\percis\ enables computing high-quality approximations of the percolation centrality on attributed networks, offering a new perspective on the role of the most central nodes. 

For future works, \percis\ can be extended to analyze even richer graphs, such as dynamic, uncertain, and temporal networks, all settings in which our contributions may be useful to design efficient approximation algorithms. 
Then, we believe the use of Importance Sampling in other Data Mining problems to be an interesting future direction. 

\section*{Acknowledgments}
This work was supported by 
the Research Council of Finland, Grant 363558, 
and by the Italian Ministry of University and Research~(MUR),
projects 
``National Center for HPC, Big
Data, and 
Quantum Computing" CN00000013, 
and PRIN
``EXPAND: scalable algorithms for EXPloratory
Analyses of heterogeneous and dynamic Networked Data".

\bibliographystyle{IEEEtran}
\bibliography{IEEEabrv,bibliography}
\ifwithappendix

\appendix
\appendices

\subsection{Missing Proofs}
In this section we provide the proofs to the results that could not fit in the main text due to size constraints.

\begin{proof}[Proof of Lemma~\ref{lemma:unbiasedest}]
It is immediate to observe that $\kappa(s,t,v) > 0$ implies $\tilde{\kappa}(s,t) > 0$; 
therefore, $\E_q[\tilde{p}(v)]$ is well defined. 
Then, it holds
\begin{align*}
\E_q[\tilde{p}(v)] 
&= \E_{\tau_{st} \sim q} \left[ \frac{\kappa(s,t,v)}{\tilde{\kappa}(s,t)} \ind{v \in \intern(\tau_{st})} \right] \\
&= \sum_{s,t \in V} \tilde{\kappa}(s,t) \sum_{\tau_{st} \in \Gamma_{st}} \frac{1}{\sigma_{st}} \frac{\kappa(s,t,v)}{\tilde{\kappa}(s,t)} \ind{v \in \intern(\tau_{st})} \\
&= \sum_{s,t \in V} \frac{\sigma_{st}(v)}{\sigma_{st}} \kappa(s,t,v)
= p(v).
\end{align*}
\end{proof}

\begin{proof}[Proof of Lemma~\ref{lemma:variancebound}]
First, note that from Lemma~\ref{lemma:unbiasedest} it holds $\E_q[\tilde{p}(v)] = p(v)$. 
Therefore, from the definition of variance, 
\begin{align*}
& \Vard{\tilde{p}(v)}{q} 
= \E_{\tau_{st} \sim q }\left[ \left( \frac{\kappa(s,t,v)}{\tilde{\kappa}(s,t)} \ind{v \in \intern(\tau_{st})} \right)^2 \right] - p(v)^2 \\
& =  \sum_{s,t \in V} \tilde{\kappa}(s,t) \sum_{\tau_{st} \in \Gamma_{st}} \frac{1}{\sigma_{st}} \left( \frac{\kappa(s,t,v)}{\tilde{\kappa}(s,t)} \ind{v \in \intern(\tau_{st})} \right)^2  - p(v)^2 \\
& =  \sum_{s,t \in V} \tilde{\kappa}(s,t) \frac{\sigma_{st}(v)}{\sigma_{st}} \left( \frac{\kappa(s,t,v)}{\tilde{\kappa}(s,t)}  \right)^2  - p(v)^2 \\
& \leq \hat{d} p(v)  - p(v)^2  = p(v) ( \hat{d} - p(v) ) \leq \hat{d} p(v) .
\end{align*}
\end{proof}

\begin{algorithm2e}[htb!]
	\caption{\samplalg}\label{algo:nonuniform_sampling_alg}
	\KwIn{Graph $G$, percolation states $x_1 , x_2 , \dots , x_n$ in non-increasing order, $\ell \geq 1$. }
	\KwOut{Random sample of $\ell$ shortest paths each sampled i.i.d. from the importance distribution $q$}
	
	\tcp{Preprocessing}
	
	$w_{n+1} \gets 0$; \label{alg:preprocessingstart}
	$r_{n+1} \gets 0$;
	$c \gets 0$

	\For{$i =n \textbf{ down to }1$}{
		
		$w_i \gets w_{i+1} + x_i$
		
		$c_i \gets (n-i+1) x_i - w_i$
		
		$r_{i} \gets r_{i+1} + c_i$
		
		$c \gets c + c_i$ \label{alg:preprocessingend}
		
	}

	\tcp{Sampling}
	
	$\sample \gets \emptyset$;
	
	\For{$i =1 \textbf{ to }\ell$}{ \label{alg:samplingstart}
		
		$a \gets 1$; 
		$b \gets n$; 
		$d \gets \lfloor (a+b)/2 \rfloor$
		
		$u \gets U(0,1)$
		
		\While{$a \leq b$}{
			$k \gets \frac{c - r_{d+1}}{c}$
			
			\lIf{$u \leq k$}{
				$b \gets d-1$
			}
			\lElse{
				$a \gets d+1$
			}
			$d \gets \lfloor (a+b)/2 \rfloor$
		}
		
		$s \gets b+1$
		
		$a \gets s$; 
		$b \gets n$; 
		$d \gets \lfloor (a+b)/2 \rfloor$
		
		$u \gets U(0,1)$
		
		\While{$a \leq b$}{
			$k \gets \frac{(d-s+1)x_s - w_s + w_{d+1}}{c_s}$
			
			\lIf{$u \leq k$}{
				$b \gets d-1$
			}
			\lElse{
				$a \gets d+1$
			}
			$d \gets \lfloor (a+b)/2 \rfloor$
		}
		$t \gets b+1$ \label{alg:samplingend}
		
		$\tau \gets \textsc{RandomSP}(G, s,t)$ 
		
		$\sample \gets \sample \cup \{ \tau \}$

	}
	
	\Return $S$
\end{algorithm2e}

\begin{proof}[Proof of Proposition~\ref{prop:runningtime}]
	Since the percolation states are sorted in non-increasing order, 
	as $x_s-x_t\geq 0$ for all $s< t$, 
	it is immediate to observe that, after the preprocessing steps (lines~\ref{alg:preprocessingstart}-\ref{alg:preprocessingend}), 
	it holds 
	\begin{align*}
	w_i &= \sum_{j=i}^n x_j , ~~~
	c_i = \sum_{j=i}^n R(x_i - x_j), \\
	r_i &= \sum_{j=i}^n c_j , ~~~
	c = \sum_{u,w : u \neq w} R(x_u - x_w) .
	\end{align*}
	We now prove that Algorithm~\ref{algo:nonuniform_sampling_alg}, at each iteration of the second for loop, draws pairs of nodes $s,t$ with probability $\tilde{\kappa}(s,t)$. 
	First, we prove the algorithm samples $s$ with marginal probability 
	$\sum_{z \in V} \tilde{\kappa}(s,z)$. 
	To do so, it uses a binary search (lines 9-16) using the indices $a,b$ and a uniform random number $u \in [0,1]$ to identify 
	\begin{align*}
	s = \argmin_{x} \left\{ \sum_{i=1}^x \sum_{z \in V} \tilde{\kappa}(i,z) \geq u \right\}. 
	\end{align*}
	First, note that for any $x \in [1,n]$, it holds
	\begin{align*}
	 \sum_{i=1}^x \sum_{z \in V} \tilde{\kappa}(i,z) = \frac{c - r_{x+1}}{c} .
	\end{align*}
	Note that the r.h.s. is the quantify $k$ computed in line~12. 
	Therefore, the loop invariant defined with the conditions 
	\begin{align*}
	\sum_{i=1}^x \sum_{z \in V} \tilde{\kappa}(i,z) < u , \forall x < a , \\ 
	\sum_{i=1}^x \sum_{z \in V} \tilde{\kappa}(i,z) \geq u , \forall x > b ,
	\end{align*}
	easily imply the correctness of the first binary search, i.e., the fact that $s$ is chosen with probability $\sum_{z \in V} \tilde{\kappa}(s,z)$. 
	Similarly, the second binary search (lines 17-24) samples $t$ with probability 
	$\tilde{\kappa}(s,t) / \sum_{z \in V} \tilde{\kappa}(s,z)$ 
	by finding, using an independent uniform random number $u \in [0,1]$, 
	\begin{align*}
	t = \argmin_{x} \left\{ \sum_{i=1}^x \frac{\tilde{\kappa}(s,i)}{\sum_{z \in V} \tilde{\kappa}(s,z)} \geq u \right\}. 
	\end{align*}
	Note that the function within the $\argmin$ is computed in line~20; the correctness of the second binary search follows with an invariant that is analogous of the one defined above. 
	Therefore, each pair of nodes $s,t$ is sampled with probability $\tilde{\kappa}(s,t)$.
	Then,  
	the fact that the procedure $\textsc{RandomSP}(G, s,t)$ returns a shortest path $\tau_{st}$ chosen uniformly at random from the set $\Gamma_{st}$, in time $\bigO(T_{\text{\textsc{BBFS}}})$, implies that every $\tau_{st}$ is sampled i.i.d. from the importance distribution $q$. 
	
	For the time complexity, the preprocessing phase (lines~1-6) needs $\bigO(n)$ time while the sampling phase (lines~7-24) needs $\bigO(\ell(\log n + T_{\text{\textsc{BBFS}}} ))$ time, obtaining the bound in the statement. 
	The space bound follows from observing that Algorithm~\ref{algo:nonuniform_sampling_alg} and \textsc{RandomSP} use linear space in the number $n$ and $m$ of nodes and edges. 
\end{proof}

\begin{proof}[Proof of Lemma~\ref{lemma:sumofpercs}]
First, note that $\sigma_{st}(v) = 0$ 
when $s=v$ or $t=v$;
therefore, from the definition of $p(v)$, we have 
\begin{align*}
\sum_{v \in V} p(v) 
& = \sum_{v \in V} \sum_{\substack{s,t\in V\\ s \neq v \neq t}}{\frac{\sigma_{st}(v)}{\sigma_{st}} \kappa(s,t,v)} \\
& = \sum_{v \in V} \sum_{\substack{s,t\in V\\ s \neq t}}{\frac{\sigma_{st}(v)}{\sigma_{st}} \kappa(s,t,v)} \\
& \geq \sum_{v \in V} \sum_{\substack{s,t\in V\\ s \neq t}}{\frac{\sigma_{st}(v)}{\sigma_{st}} \tilde{\kappa}(s,t)} \\
& = \sum_{\substack{s,t\in V\\ s \neq t}} \tilde{\kappa}(s,t) \sum_{v \in V} {\frac{\sigma_{st}(v)}{\sigma_{st}} } \\
& = \sum_{\substack{s,t\in V\\ s\neq t}} \tilde{\kappa}(s,t) | I( \tau_{st} ) |  = \rho ,
\end{align*}
obtaining the lower bound. 
For the upper bound, following similar steps, we have 
\begin{align*}
\sum_{v \in V} p(v) 
& = \sum_{v \in V} \sum_{\substack{s,t\in V\\ s \neq t}}{\frac{\sigma_{st}(v)}{\sigma_{st}} \kappa(s,t,v)} \\
& \leq \hat{d} \sum_{v \in V} \sum_{\substack{s,t\in V\\ s \neq t}}{\frac{\sigma_{st}(v)}{\sigma_{st}} \tilde{\kappa}(s,t)}  = \hat{d} \rho .
\end{align*}
\end{proof}

\begin{proof}[Proof of Theorem~\ref{thm:sample_size}]
First, we observe that the estimator $\tilde{p}(v)$ is an average of $\ell$ i.i.d. random variables with codomain $[0,\hat{d}]$. 
Then, from the definition of $\hat{v}$ and Lemma~\ref{lemma:variancebound}, it holds 
$\Vard{ \tilde{p}(v) }{q} \leq \min\{ \hat{v} , g(p(v)) \}$. 
For any $v \in V$, Bennet's inequality (Thm.~2.9 of~\cite{boucheron2013concentration}) implies that 
\begin{align*}
&\Pr_{\sample}( | \tilde{p}(v) - p(v) | > \varepsilon ) \\
&\leq 2 \exp \left( - \frac{\ell \min\{ \hat{v} , g(p(v)) \} }{ \hat{d}^2 } h\left( \frac{\hat{d}\varepsilon}{\min\{ \hat{v} , g(p(v)) \}} \right) \right) \\
&= 2 B(\varepsilon , \min\{ \hat{v} , g(p(v)) \} , \ell )  ,
\end{align*}
where 
$B(\varepsilon , x , \ell ) = \exp \left( - \frac{\ell x }{ \hat{d}^2 } h\left( \frac{\hat{d}\varepsilon}{x} \right) \right)$. 
Therefore, from an union bound, it holds 
\begin{align*}
& \Pr_{\sample}( \exists v : | \tilde{p}(v) - p(v) | > \varepsilon ) \\
&\leq \sum_{v \in V} \Pr_{\sample}( | \tilde{p}(v) - p(v) | > \varepsilon ) \\
&\leq \sum_{v \in V} 2B(\varepsilon , \min\{ \hat{v} , g(p(v)) \} , \ell ).
\end{align*}
Note that the values of $p(v)$ are not known; thus, 
to upper bound the sum above, we define the following linear program
\begin{align*}
\max & \sum_{x \in \Q \cap (0,1)} n_x 2B(\varepsilon , \min\{ \hat{v} , g(x) \} , \ell ) \\
 s.t.  & \sum_{x \in \Q \cap (0,1)} n_x x \leq \hat{\rho} , \\
 & 0 \leq n_x \leq \frac{\hat{\rho}}{x}  , ~~ n_x \in \R .
\end{align*}
Observe that, from the union bound above, the optimal objective is an upper bound to the probability $\Pr_{\sample}( \exists v : | \tilde{p}(v) - p(v) | > \varepsilon )$. 
We want to show that, if $\ell$ is chosen as in the statement, this probability is $\leq \delta$, proving the theorem. 
To do so, we follow similar steps of~\cite{Pellegrina_2023}, observing that the LP above is an instance of a Bounded Knapsack problem (after a continuous relaxation). 
By defining $x^\star$ as 
\begin{align*}
x^\star = \argmax_{x \in (0,\hat{x}]} \frac{2B(\varepsilon , g(x) , \ell )}{ x } ,
\end{align*}
we obtain the optimal solution 
$n_{x^\star} = \hat{\rho} / x^\star$, $n_x = 0 ,  \forall x \neq x^\star$, 
with objective 
\begin{align}
\frac{2B(\varepsilon , g(x^\star) , \ell ) \hat{\rho}}{x^\star} . \label{eq:optobj}
\end{align}
We now prove that, if $\ell$ is chosen as in the statement, then \eqref{eq:optobj} is $\leq \delta$, proving the statement. For any $x \in (0 , \hat{x}]$, it holds 
\begin{align}
\frac{2B(\varepsilon , g(x) , \ell ) \hat{\rho}}{x} \leq \delta ~~~ \text{ if } ~~~
\ell \geq \frac{ \hat{d}^2 \ln \left( \frac{2 \hat{\rho} }{x \delta } \right) } { g(x)h \left( \frac{\varepsilon \hat{d}}{g(x) } \right) } ,
\end{align}
which follows from the definition of $\ell$ of the statement. 
\end{proof}

\begin{proof}[Proof of Proposition~\ref{prop:rhohatbound}]
We note that $\tilde{\rho}(\mathcal{S})$ is an average of $\ell$ i.i.d. random variables with codomain $[0 , D]$, where $D$ is an upper bound to the vertex diameter. 
Moreover, it holds $\E_\sample[\tilde{\rho}(\mathcal{S})] = \rho$. 
From the application of an Empirical Bernstein bound (Thm.~4 of~\cite{Maurer_2009}) 
to the average $\tilde{\rho}(\mathcal{S})$, after scaling it by $1/D$,
we have that it holds $\rho \leq \hat{\rho}$ with probability $\geq 1 - \delta$. 
Finally, observe that $\sum_{v \in V} p(v) \leq \hat{d} \rho$ 
from Lemma~\ref{lemma:sumofpercs}. 
\end{proof}

\begin{proof}[Proof of Proposition~\ref{prop:boundmaxvar}]
First, we note that from Lemma~\ref{lemma:variancebound}, for any $v \in V$ it holds $\Vard{\tilde{p}(v)}{q} \leq \hat{d} p(v)$;  
therefore, $\max_v \Vard{\tilde{p}(v)}{q} \leq \max_v \hat{d} p(v)$. 
The proof uses the fact that $\max_v \tilde{p}(v)$ is a sharp empirical estimator of $\max_v p(v)$, as proved in~\cite{pellegrina2023efficient} (see Thm.~4.3). 
In fact, $\max_v \tilde{p}(v)$ is a self-bounding function~\cite{boucheron2013concentration}, and satisfies 
\begin{align*}
\max_v p(v) \leq 
\hat{d} \max_{v \in V} \biggl\{ \tilde{p}(v) + \sqrt{ \frac{2 \tilde{p}(v) \log(1/\delta)}{ \ell } } +
		\frac{ \log(1/\delta)}{3 \ell} \biggr\}
\end{align*}
with probability $\geq 1 - \delta$. 
The statement follows by multiplying both sides by $\hat{d}$. 
\end{proof}

\begin{proof}[Proof of Proposition~\ref{prop:algcorrect}]
The statement follows by combining the guarantees of Theorem~\ref{thm:sample_size} (replacing $\delta$ by $\delta/2$), 
Proposition~\ref{prop:rhohatbound} (replacing $\delta$ by $\delta/4$), 
and 
Proposition~\ref{prop:boundmaxvar} (replacing $\delta$ by $\delta/4$),
observing that all the bounds in the statements are computed by \percis\ in Algorithm~\ref{algo:mainalg}. 
Then, from a union bound, all such statements hold simultaneously with probability $\geq 1 - \delta$. 
\end{proof}

\begin{proof}[Proof of Proposition~\ref{thm:likelihoodpercis}]
For any node $v \in V$, we bound its likelihood ratio $d_v$ as follows:
\begin{align*}
d_v 
& = \max_{s,t : \tilde{\kappa}(s,t) > 0} \frac{\kappa(s,t,v)}{\tilde{\kappa}(s,t)} \\
 & = \frac{\sum_{\substack{(u,w)\in V\times V\\ u \neq w}}R(x_u-x_w)}{\sum_{\substack{(u,w)\in V\times V\\ u \neq v \neq w}}R(x_u-x_w)} \\
 &= 1 + \frac{\sum_{u\in V} |x_v-x_u|}{\sum_{\substack{(u,w)\in V\times V\\ u \neq v \neq w}}R(x_u-x_w)} \\
&\leq 1 + \frac{ n }{\sum_{\substack{(u,w)\in V\times V\\ u \neq v \neq w}}R(x_u-x_w)} \\
&\leq 1 + \frac{ n }{ (n-3)\Delta + \Delta } = 1 + \bigO(1) .
\end{align*}
In the last step we used the following argument:
let $s,t$ be the
two nodes $s \neq v \neq t$ with $x_s \geq x_t + \Delta$; then, the sum in the denominator can be lower bounded by the difference of the percolation states of the other $n-3$ nodes with $s$ and $t$, which is at least $\Delta$, plus $x_s - x_t \geq \Delta$. 
The statement follows from the fact that $\hat{d} = \max_v d_v$, and that each $d_v \in \bigO(1)$. 
\end{proof}

\begin{proof}[Proof of Proposition~\ref{thm:likelihoodunif}]
Consider a graph $G = (V,E)$ composed of $n \geq 4$ nodes, with three nodes $a,b,c$ 
with percolation states $x_a = 1$, $x_b = 0$, $x_c = 1/2$, 
and all other nodes with percolation state $0$. 
First, note that $\Delta = 1/2$ and $\hat{d} = \max_v d_v \geq d_b$. 
It holds
\begin{align*}
d_b 
& = \max_{s,t \in V} \kappa(s,t,b) n(n-1) \\
& \geq \kappa(a,c,b) n(n-1) \\
& = \frac{n(n-1)}{\sum_{\substack{u,w\in V \\ u \neq b \neq w}}R(x_u-x_w)}  \\
& = \frac{n(n-1)}{ 3/2(n-3) + 1/2 }  \in \Omega(n) .
\end{align*}
\end{proof}

\begin{proof}[Proof of Proposition~\ref{thm:uniflowerbound}]
Consider a directed graph $G = (V,E)$ composed of $n \geq 4$ nodes, with three nodes $a,b,c$ 
with percolation states $x_a = 1$, $x_b = 0$, $x_c = 1/2$, 
which form a linear directed path, i.e., 
$(a,b) , (b,c) \in E$; 
  all other $n-3$ nodes in $H = V \setminus \{a,b,c\}$ are strongly connected and have percolation state $0$,
  and there exist an edge from a node in $H$ to the node $a$. 
Note that $G$ is weakly connected, and that $\Delta = 1/2$. 
Furthermore, it holds 
$p(v) = 0, \forall v \in V \setminus \{ b \}$, and $p(b) > 0$.
Set $\varepsilon = p(b)/2$. 
We first prove the lower bound $\Omega(n^2)$ to the number of samples needed by \unif, which is based on the uniform sampling distribution. 
Define $\tilde{p}(b)$ the approximation returned by \unif\ after drawing $\ell$ random samples; $\tilde{p}(b)$ is defined as the (weighted) faction of shortest paths where $b$ is internal. 
Note that 
the only shortest path that traverse $b$ is $a ,b ,c$. 
To guarantee $| \tilde{p}(b) - p(b) | \leq \varepsilon$ it is necessary that the pair of nodes $(a,c)$ is sampled at least once by the algorithm;
otherwise, $\tilde{p}(b) = 0$ and $| \tilde{p}(b) - p(b) | > \varepsilon$. 
Define $X$ as a random variable that models the number of random samples that are drawn until the pair $(a,c)$ is sampled by \unif; 
then $X$ is geometrically distributed with mean $n(n-1)$,
thus $\ell \in \Omega(n^2)$; the first part of the statement holds.

We now prove the upper bound $\mathcal{O}(n)$ to the number of samples needed by \percis\ to achieve an $\varepsilon$-approximation.
To do so, we prove bounds to $\hat{d}$, $\hat{\rho}$, and $\hat{v}$ and apply Theorem~\ref{thm:sample_size}. 
From Proposition~\ref{thm:likelihoodpercis} it holds $\hat{d} \in \mathcal{O}(1)$.
Then, it holds $\hat{\rho} = p(b)$ and $\hat{v} = p(b)(\hat{d}-p(b))$, with
\begin{align*}
p(b) = \kappa(a,c,b) = \frac{1/2}{n-3 + 1/2} = \Theta \left( \frac{1}{n} \right) .
\end{align*}
Therefore, after plugging these bounds to the sample size $\ell$ from  Theorem~\ref{thm:sample_size}, it holds $\ell \in \mathcal{O}(n)$; the statement follows.
\end{proof}

\subsection{Missing Experiments}
In this section we show all the experiments that have been omitted from the main paper.


\begin{table}[htb!]
\renewcommand{\arraystretch}{1.2} 
\centering
\caption{Graphs used in our evaluation, where $|V|$ denotes the number of nodes, $|E|$ the number of edges, $D$ the exact diameter, $\rho$ the average number of internal nodes (type D stands for directed and U for undirected).}\label{tab:datasets}
\begin{tabular}{lccccc}
\bottomrule
\multicolumn{1}{l}{\textbf{Graph}} & $\bm{|V|}$ & $\bm{|E|}$ & $\bm{D}$ &$\bm{\rho}$ & \textbf{Type} \\ \hline
{P2P-Gnutella31}                         & 62586          & 147892         & 31          &       7.199          & D             \\
{Cit-HepPh}                         & 34546          & 421534         & 49              &         5.901      & D             \\
{Soc-Epinions}                           & 75879          & 508837         & 16              &       2.755      & D             \\
{Soc-Slashdot}                           & 82168          & 870161         & 13              &     2.135       & D             \\
{Web-Notredame}                     & 325729         & 1469679        & 93             & 9.265         & D             \\
{Web-Google}                        & 875713         & 5105039        & 51             &   9.713       & D             \\
{Musae-Facebook}                          & 22470          & 170823         & 15              &     2.974            & U             \\
{Email-Enron}                              & 36692          & 183831         & 13              &   2.025         & U             \\
{CA-AstroPH}                         & 18771          & 198050         & 14                 &   2.194         & U             \\
\bottomrule 
\end{tabular}
\end{table}

\begin{table}[htb!]

	\renewcommand{\arraystretch}{1.2} 
	\centering
	\caption{The Labeled Networks considered in our experiments. 
	$\mathcal{L}$ indicates the label type (binary or real values) and $\mathcal{L}_{\text{avg}}$ the average values of the labels. }\label{tab:datasets_labeled}
	\begin{tabular}{lcccccc}
		\bottomrule
		\multicolumn{1}{l}{\textbf{Graph}} & $\bm{|V|}$ & $\bm{|E|}$ & $\bm{\mathcal{L}_{\text{avg}}}$ &$\bm{\mathcal{L}}$&$\bm{\rho}$ &\textbf{Type}\\ \hline
		{Guns}                         & 632659          & 5741968         & 0.347               &  $\{0,1\}$    &      2.859          & U   \\
		{Combined}                         & 677753          & 6134836         & 0.246                  & $\{0,1\}$      &         3.053        & U \\
		{Youtube}                           & 152582          & 6268398         & 0.310                  & $[0,1]$  &       2.563          & D   \\
		\bottomrule 
	\end{tabular}
\end{table}

\paragraph{Maximum Error for the IC setting}
Figure~\ref{fig:sd_component} shows the Maximum Error of the approximation computed by {\sc PercIS} with a fixed sample size, under the IC setting. 
More precisely, the plots compare the approximated and exact percolation centrality scores for the $50$ nodes in the isolated component, 
where the estimates are 
returned by, respectively, \percis\ and \unif. 
We focus on the largest graph (see Table~\ref{tab:datasets}), and use $\ell = 10^6$ samples. 

From the plots it is clear that 
{\sc PercIS} returns accurate estimates of the percolation centrality scores for all the nodes in the isolated community, whereas \unif\ fails to provide any meaningful estimation (as all the estimates are equal to $0$). 
This experiment highlights the advantage of our importance-based sampling distribution over uniform sampling, particularly in large real-world graphs that contain small isolated communities where an infection or information spread may originate. 
In such cases, \unif\ misses key nodes and fails to compute reliable estimates.

\begin{figure*}[htb!]
\centering
	\includegraphics[width=\linewidth]{./img/experiments/shared_legend}
	\begin{subfigure}{0.24\textwidth}
		\includegraphics[width=\linewidth]{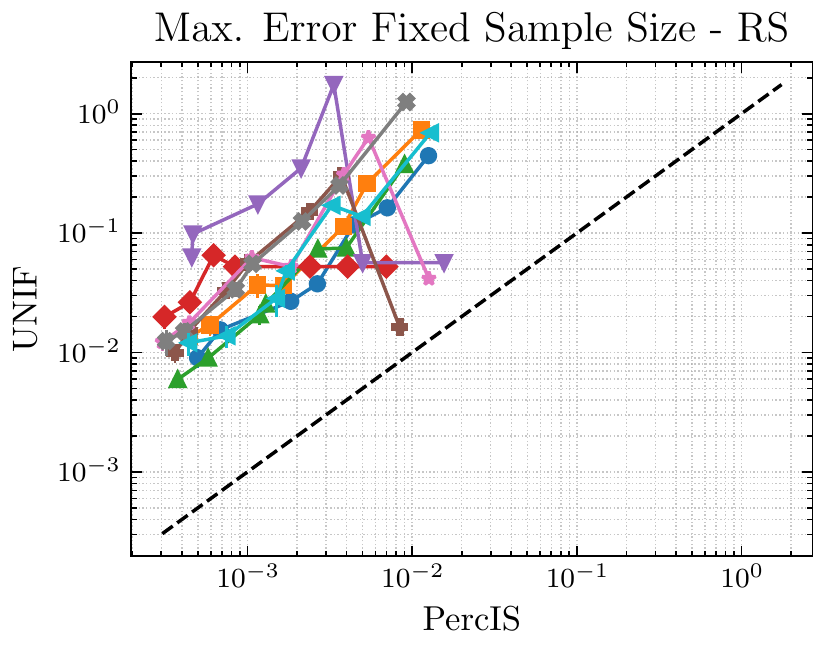}
		\caption{}\label{fig:sd_rnd_init}
	\end{subfigure}
	\begin{subfigure}{0.24\textwidth}
		\includegraphics[width=\linewidth]{./img/experiments/avg_SD_spread}
		\caption{}\label{fig:sd_spread}
	\end{subfigure}
	\begin{subfigure}{0.24\textwidth}
		\includegraphics[width=\linewidth]{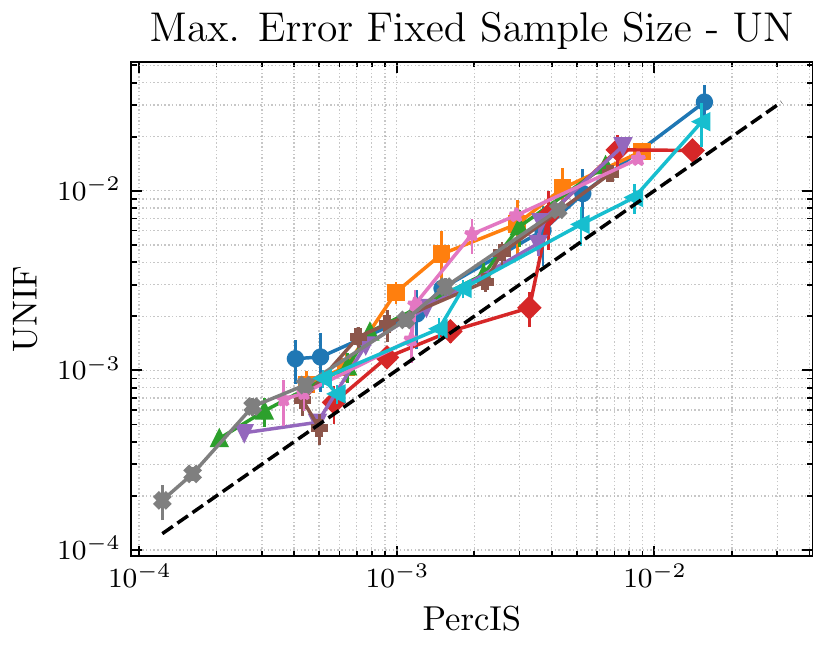}
		\caption{}\label{fig:sd_unif}
	\end{subfigure}
	\begin{subfigure}{0.12\textwidth}
	\end{subfigure}
	\caption{Maximum Errors of \percis\ ($x$ axes) and \unif\ ($y$ axes) on random samples of fixed sizes $\ell \in [10^3, 10^6 ] $.}\label{fig:sd_fixed_ss_appendix}
\end{figure*}

\begin{figure*}[htb!]
\centering
	\begin{subfigure}{0.24\textwidth}
	\center
		\includegraphics[width=\linewidth]{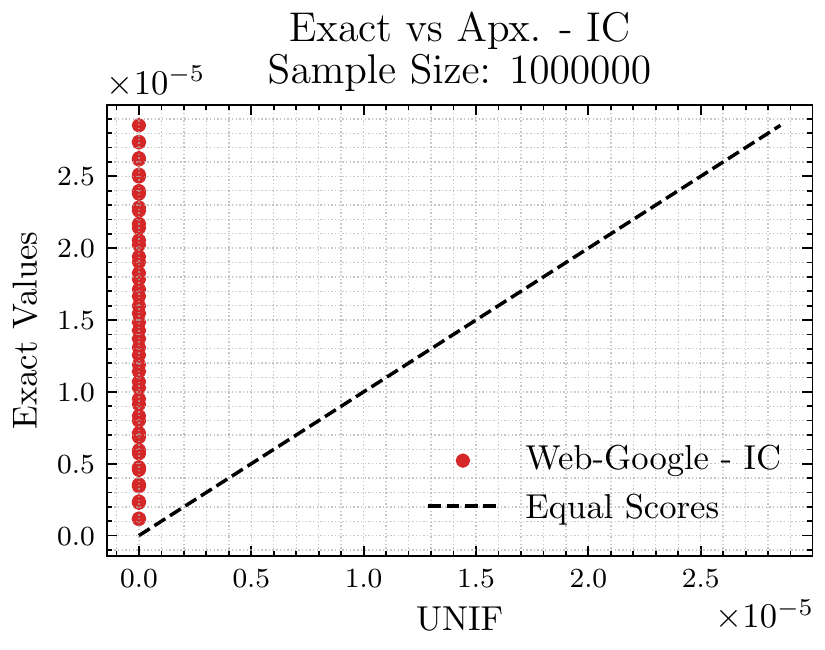}
		\caption{}\label{fig:google_uni}
	\end{subfigure}
	\begin{subfigure}{0.24\textwidth}
	\center
		\includegraphics[width=\linewidth]{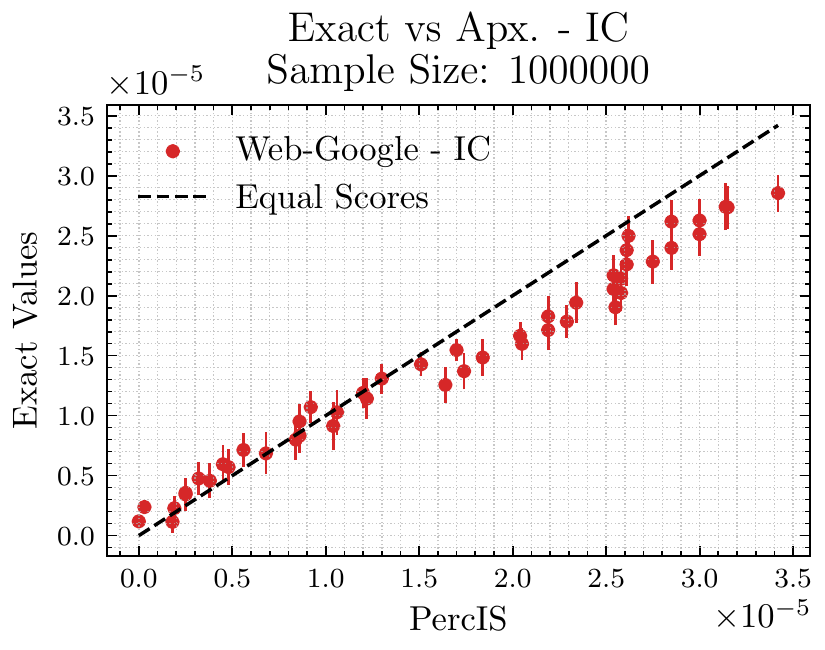}
		\caption{}\label{fig:google_non_uni}
	\end{subfigure}
	\caption{Comparison of exact ($y$ axes) and approximated ($x$ axes) centrality scores computed by \unif\ (a) and \percis\ (b), for the $50$ nodes that belong to the isolated component of the IC setting, for the Web-Google graph and $\ell = 10^6$ samples.}\label{fig:sd_component}
\end{figure*}

\paragraph{Mean Absolute Error}
We evaluate the Average Absolute Error (AE) of the estimates computed by \percis\ and \unif\ from samples of fixed size.
The results are shown in Figure~\ref{fig:mae_fixed_ss}. 
The plots report the AEs for the RS, RSS, and UN settings over all nodes and multiple sample sizes.

As expected, the AE of \percis\ decreases rapidly and consistently as the sample size increases. 
While the behavior of the AE of \unif\ was similar in the UN settings, 
in other cases we observed a much different trend, in particular for the RS setting: in such instances, increasing the sample size had a much smaller impact to the average errors.
This confirms that the resources required by \unif\ to achieve small approximation errors significantly exceed the ones needed by \percis, in particular for challenging instances.

\begin{figure*}[htb!]
\centering
		\includegraphics[width=\linewidth]{./img/experiments/shared_legend}
	\begin{subfigure}{0.24\textwidth}
		\includegraphics[width=\linewidth]{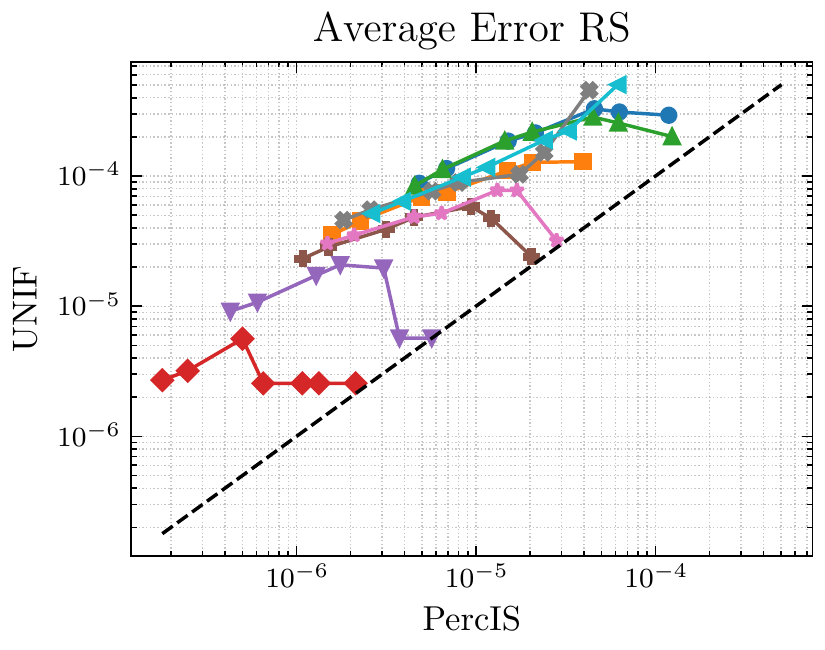}
		\caption{}\label{fig:mae_rnd_init}
	\end{subfigure}
	\begin{subfigure}{0.24\textwidth}
		\includegraphics[width=\linewidth]{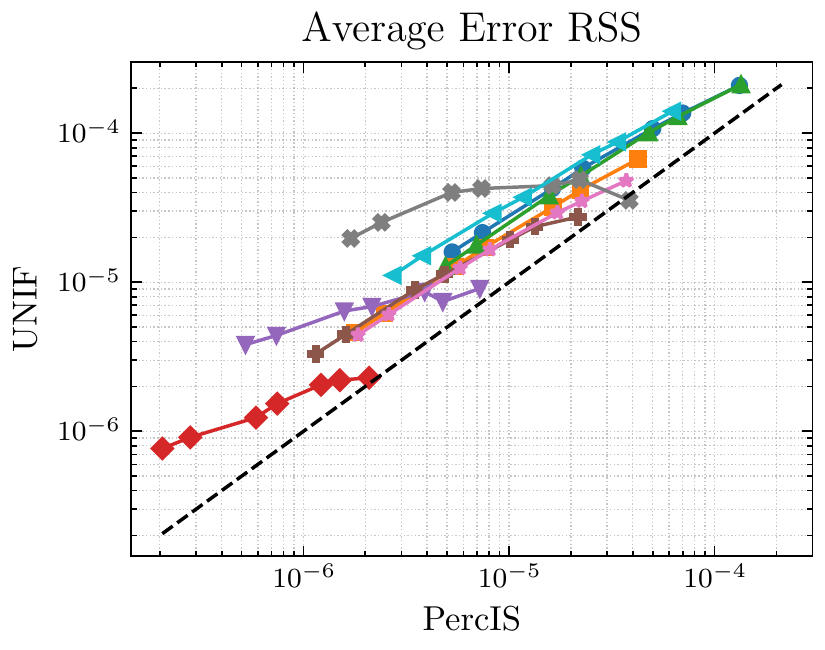}
		\caption{}\label{fig:mae_spread}
	\end{subfigure}
	\begin{subfigure}{0.24\textwidth}
		\includegraphics[width=\linewidth]{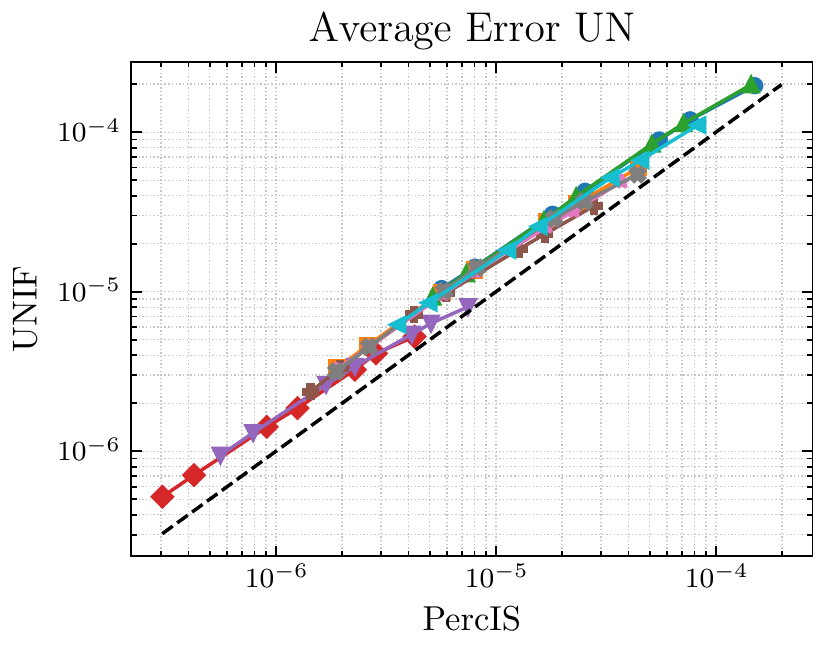}
		\caption{}\label{fig:mae_unif}
	\end{subfigure}
	\caption{Average Absolute Error of \unif\ and \percis\ for fixed sample sizes $\ell \in \{10^3, 5\cdot 10^3,10^4,5\cdot 10^4,10^5,5\cdot 10^5,10^6\}$.}\label{fig:mae_fixed_ss}
\end{figure*}

\paragraph{Missing Plots Uniform and Non-Uniform sampling running times}
Here we compare the time needed by \percis\ and \unif\ using fixed sample sizes. 
Figure~\ref{fig:time_settings} shows this comparison on the RS, RSS, IC, and UN settings. 
We observe that the two methods typically require the same time to analyze the same number of samples; this confirms that the Importance Sampling overhead is minimal. 
Interestingly, we observe a severe speedup for {\sc PercIS} for the IC setting. This improvement is motivated by the fact that our approach always samples a source node from the isolated component, and rapidly completes the BFS towards the target node $t$, as the size of the isolated component is constant.
This is strong contrast with \unif, that instead samples pairs $s,t$ 
that are more expensive to evaluate, and are not useful to obtain accurate approximations.

\begin{figure*}[htb!]
\centering
	\includegraphics[width=\linewidth]{./img/experiments/shared_legend}
	\begin{subfigure}{0.24\textwidth}
		\includegraphics[width=\linewidth]{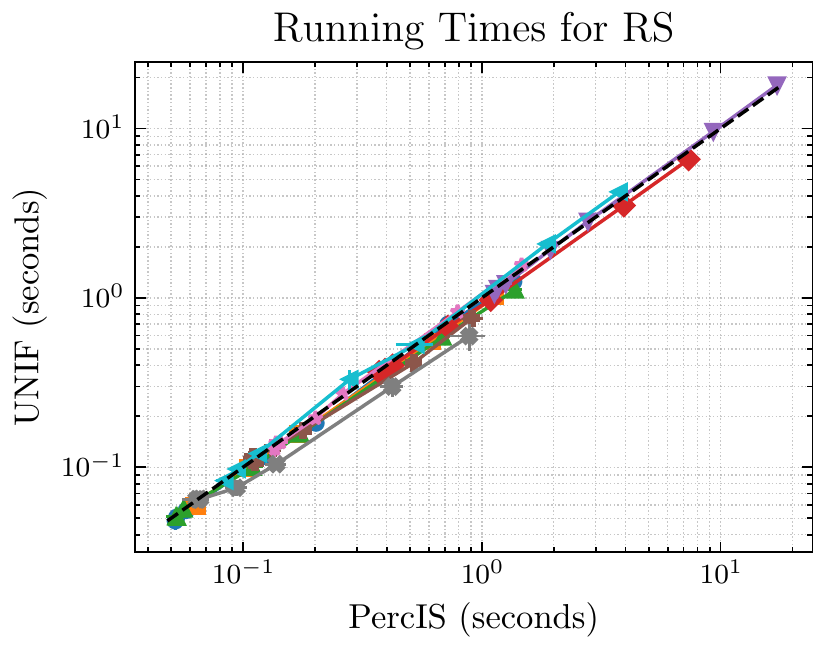}
		\caption{}\label{fig:time_rnd_init}
	\end{subfigure}
	\begin{subfigure}{0.24\textwidth}
		\includegraphics[width=\linewidth]{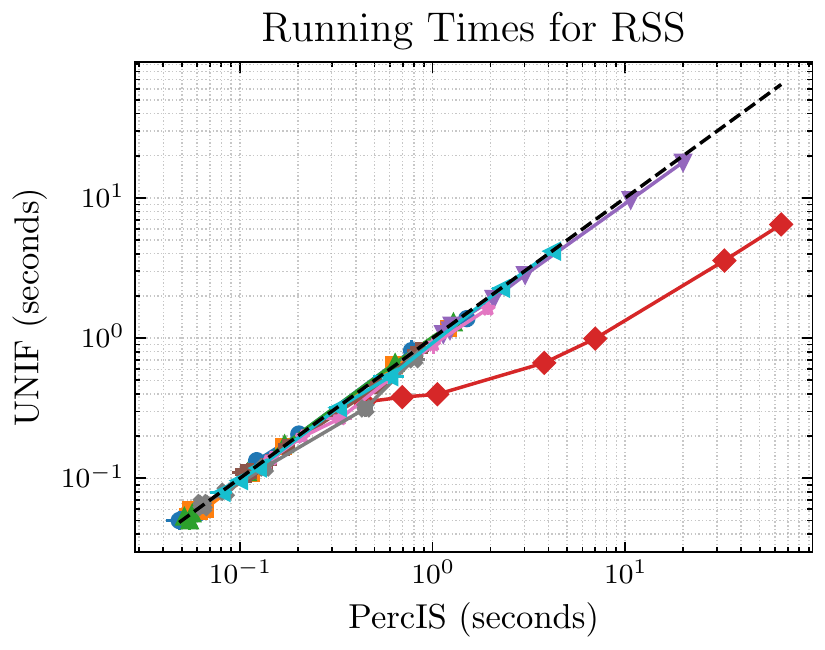}
		\caption{}\label{fig:time_spread}
	\end{subfigure}
	\begin{subfigure}{0.24\textwidth}
		\includegraphics[width=\linewidth]{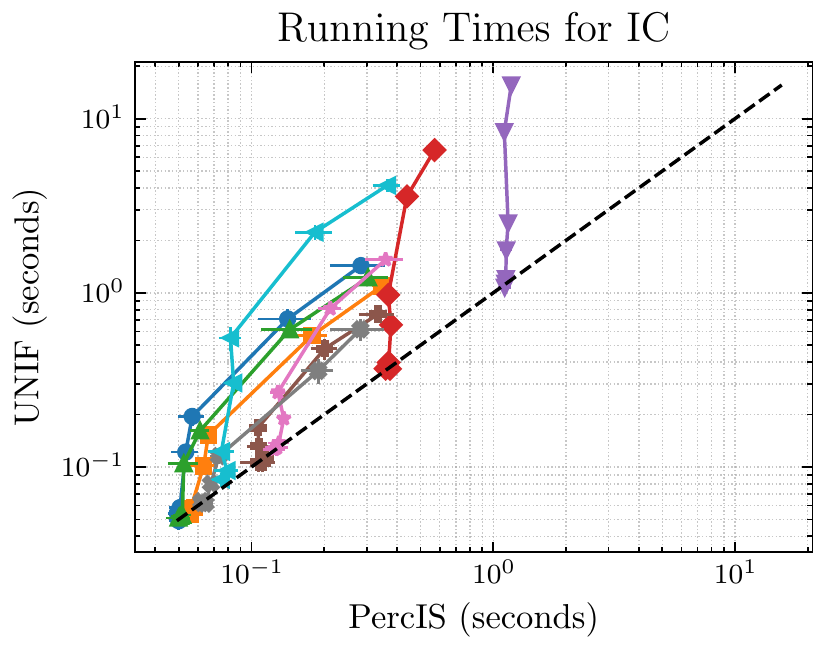}
		\caption{}\label{fig:time_comp}
	\end{subfigure}
	\begin{subfigure}{0.24\textwidth}
		\includegraphics[width=\linewidth]{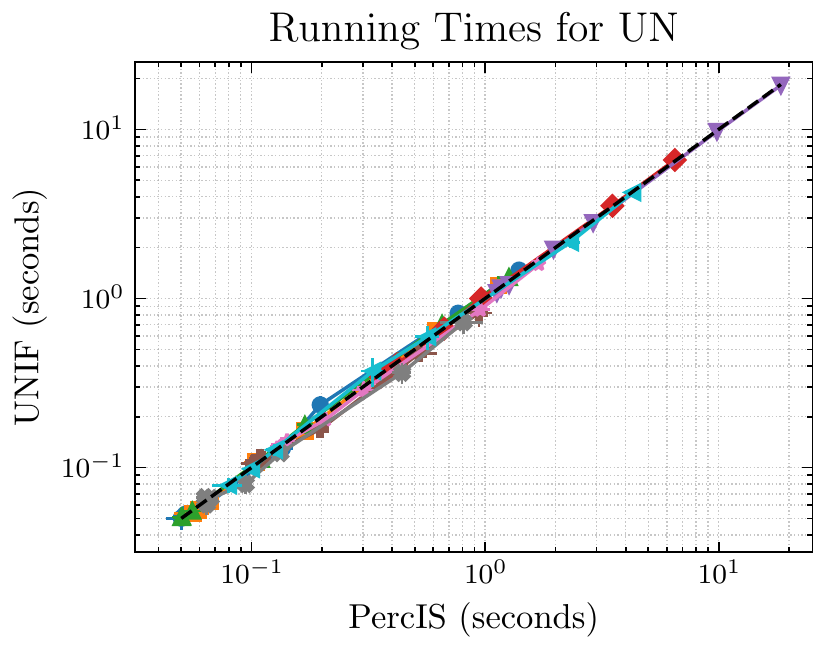}
		\caption{}\label{fig:time_unif}
	\end{subfigure}
	\caption{Comparison between the running times (in seconds) of \unif\ and \percis\ on fixed sample sizes. }\label{fig:time_settings}
\end{figure*}

\paragraph{Missing Plots for upper bound on sample size}
Figure~\ref{fig:ss_vc_vs_rho} shows the comparison between our novel upper bound on the sufficient sample size and a data-independent approach that skips the first phase of {\sc PercIS}. 
We can see that the result in Theorem~\ref{thm:sample_size} provides significant improvements on the number of samples needed to achieve $\varepsilon$-approximations of the percolation centrality.

\begin{figure*}[htb!]
\centering
	\includegraphics[width=\linewidth]{./img/experiments/shared_legend}
	\begin{subfigure}{0.24\textwidth}
		\includegraphics[width=\linewidth]{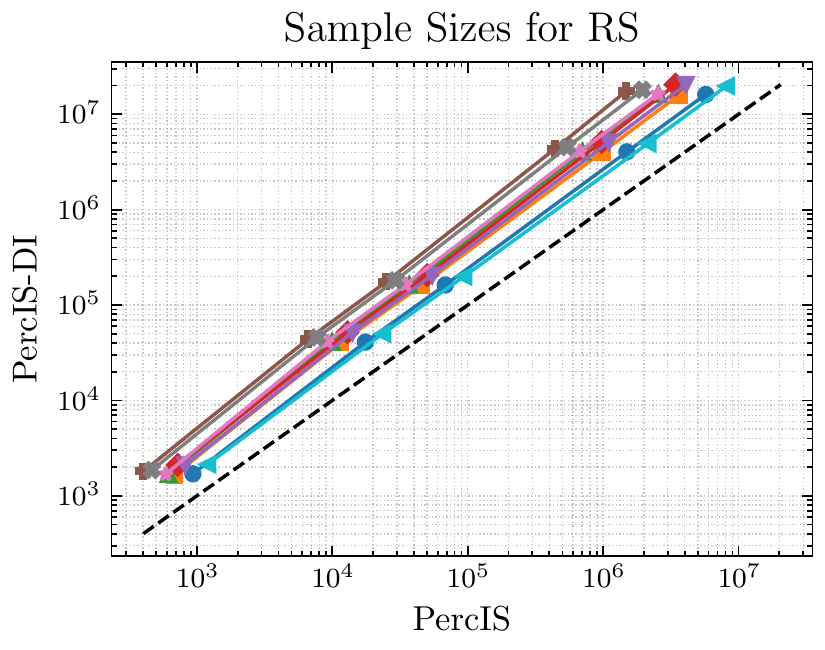}
		\caption{}\label{fig:ss_rnd_init}
	\end{subfigure}
	\begin{subfigure}{0.24\textwidth}
		\includegraphics[width=\linewidth]{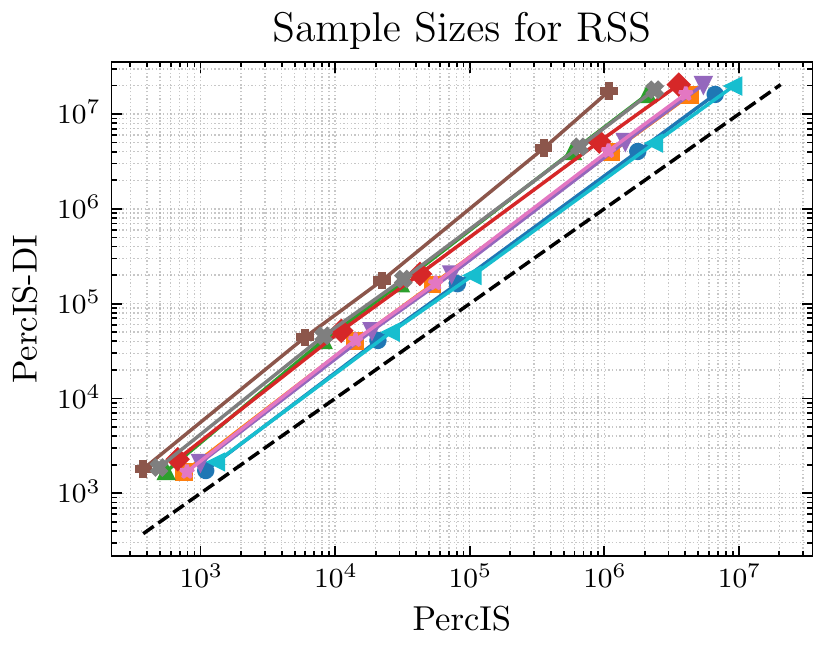}
		\caption{}\label{fig:ss_spread}
	\end{subfigure}
	\begin{subfigure}{0.24\textwidth}
		\includegraphics[width=\linewidth]{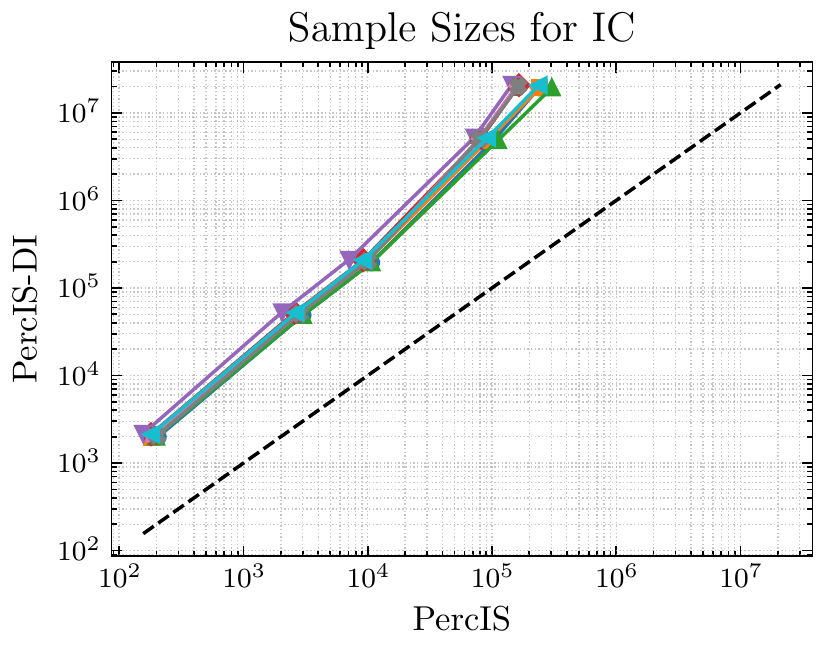}
		\caption{}\label{fig:ss_comp}
	\end{subfigure}
	\begin{subfigure}{0.24\textwidth}
		\includegraphics[width=\linewidth]{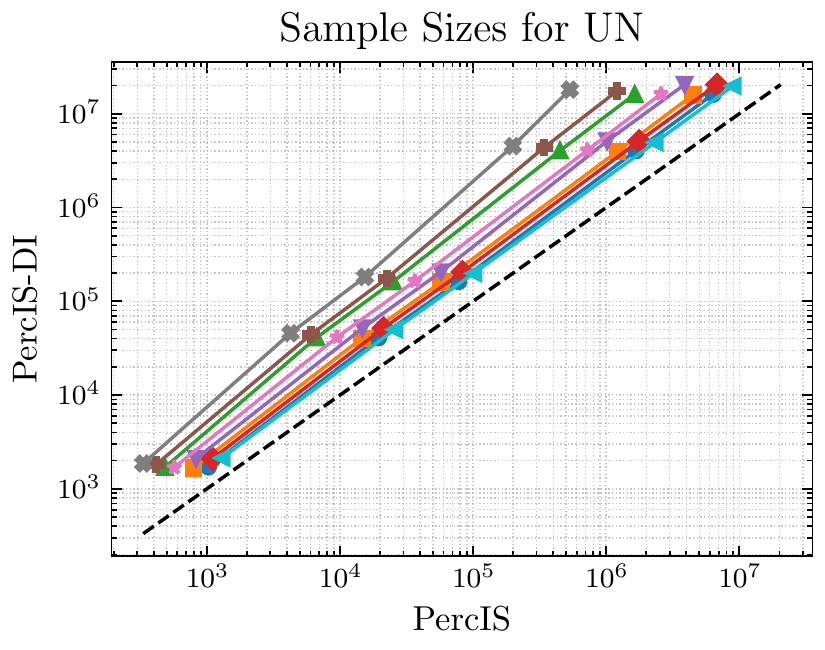}
		\caption{}\label{fig:ss_unif}
	\end{subfigure}
	\caption{Comparison between the number of samples needed by \percis-DI (that skips the first phase of the algorithm and uses a Data Independent bound) and \percis.}\label{fig:ss_vc_vs_rho}
\end{figure*}

\begin{figure*}[htb!]
\centering
	\begin{subfigure}{0.24\textwidth}
		\includegraphics[width=\linewidth]{./img/experiments/bar_plot_times_rnd_init}
		\caption{}\label{fig:bfs_sampling_time_rnd_init}
	\end{subfigure}
	\begin{subfigure}{0.24\textwidth}
		\includegraphics[width=\linewidth]{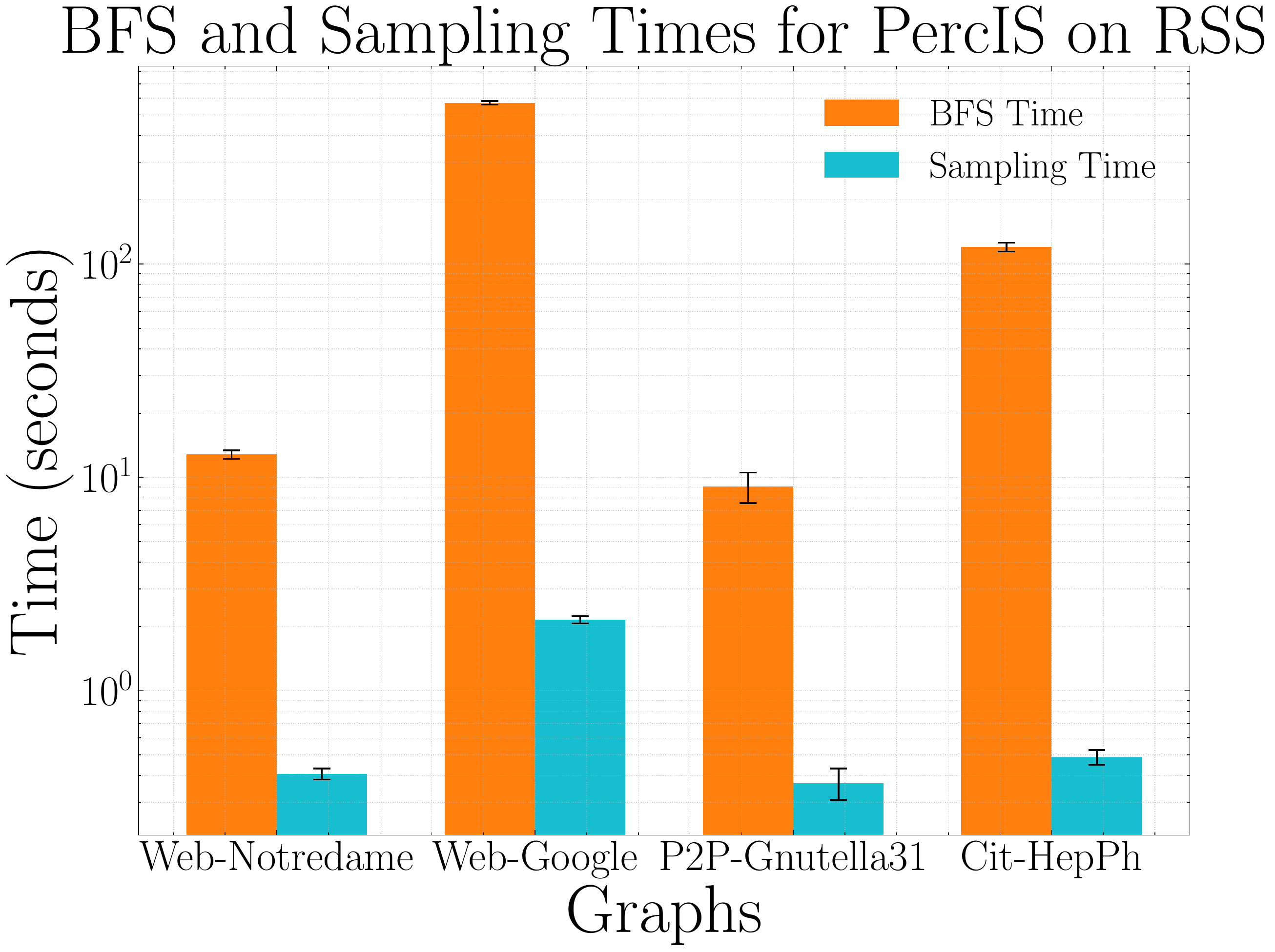}
		\caption{}\label{fig:bfs_sampling_time_time_spread}
	\end{subfigure}
	\begin{subfigure}{0.24\textwidth}
		\includegraphics[width=\linewidth]{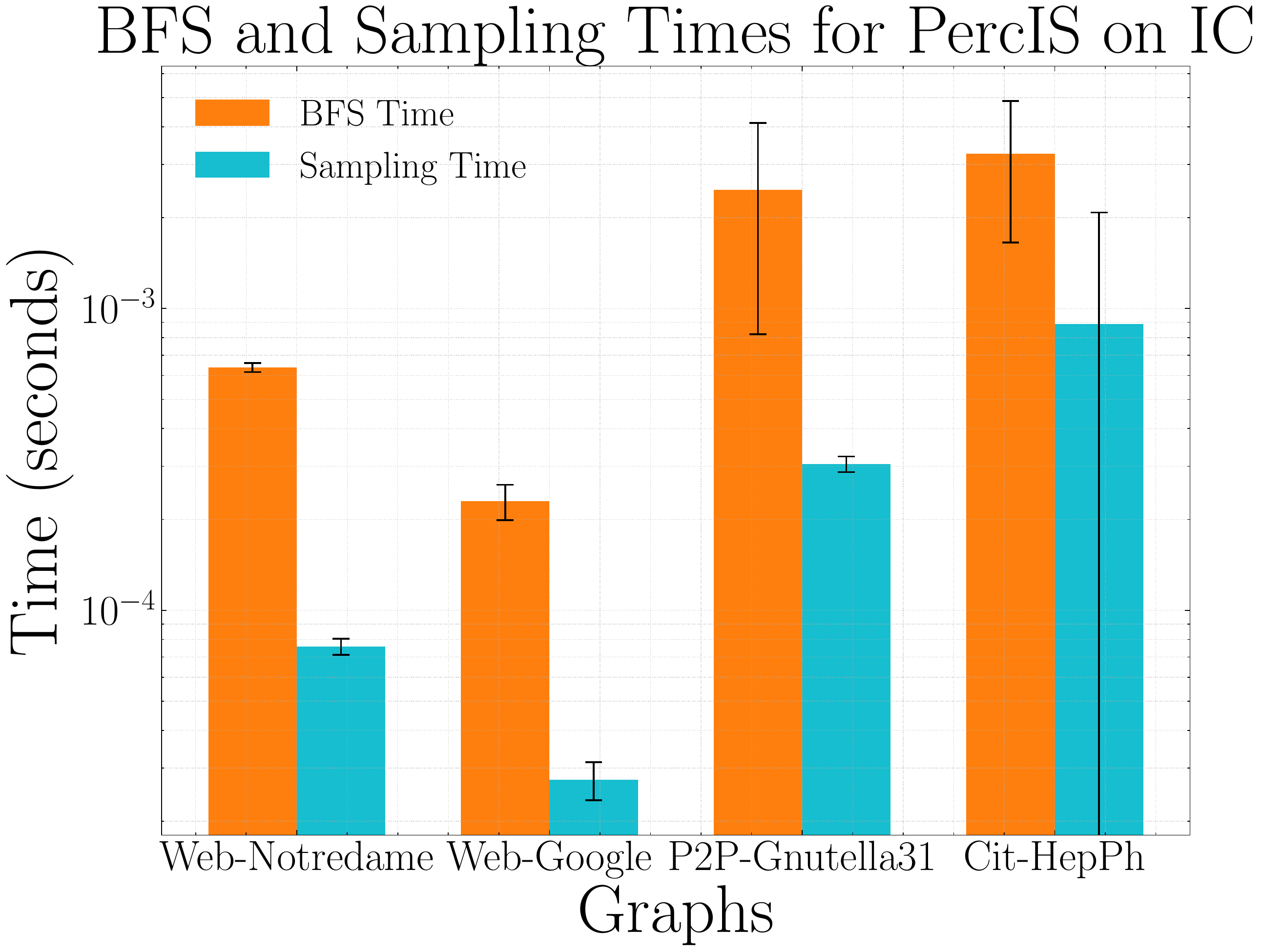}
		\caption{}\label{fig:bfs_sampling_time_time_comp}
	\end{subfigure}
	\begin{subfigure}{0.24\textwidth}
		\includegraphics[width=\linewidth]{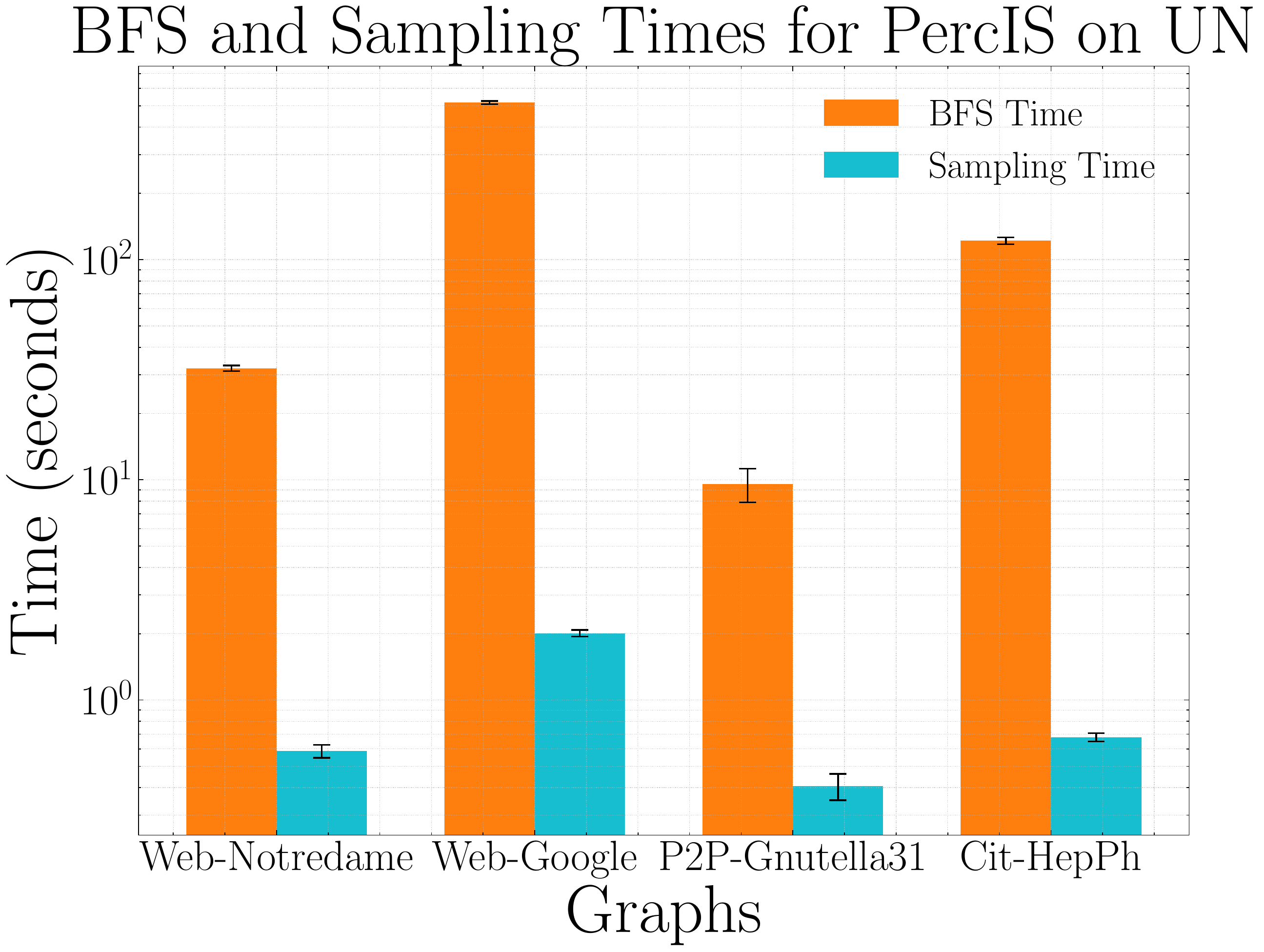}
		\caption{}\label{fig:bfs_sampling_time_time_unif}
	\end{subfigure}
	\caption{Comparison between the total time (in seconds) needed by the bidirectional BFSs and to sample $\ell$ pairs of nodes $(s,t)$ using Algorithm~\ref{algo:nonuniform_sampling_alg}, for $\ell = 10^6$ samples.}\label{fig:bfs_sampling_time_appendix}
\end{figure*}

\paragraph{Experiments for Labeled Networks}
Here we show the experiments for LN that have been omitted from the main paper. 
Table~\ref{tab:datasets_labeled} shows the statistics of the labeled graphs considered in our experiments, while Table~\ref{tab:jaccard} show the Jaccard similarity between the top $k\in\{10,50,100\}$ nodes for the betweenness and the percolation centrality. We observe that the top-$k$ nodes are significantly different when obtained with these centrality measures. 

This divergence suggests that percolation centrality captures different structural roles than betweenness. In practical terms, using percolation centrality to identify key bridges for information or infection spreading may yield better results, as it is sensible to the percolation states of the nodes, which are instead ignored by betweenness.
\begin{table}[htb!]
		\renewcommand{\arraystretch}{1.2} 
	\centering
	\caption{Jaccard similarity between the top $k$ nodes for the betweenness centrality and the percolation centrality.}\label{tab:jaccard}
	\begin{tabular}{lccc}
		\toprule
		& \multicolumn{3}{c}{\textbf{Jaccard Similarity Top-K}} \\ \cline{2-4} 
		\textbf{Graph} & \textbf{10}      & \textbf{50}     & \textbf{100}     \\ \hline
		{Guns}       & 0.053            & 0.087           & 0.117            \\
		{Combined}      & 0.0              & 0.031           & 0.015            \\
		{Youtube}        & 0.429            & 0.369           & 0.504     \\ 
		\bottomrule     
	\end{tabular}

\end{table}

\else
\makeatletter
\IfFileExists{appendix_paper.aux}{\@include{appendix_paper.aux}}{}
\makeatother
\fi


\end{document}